\newif\iffull
\newif\ifshort
\newcommand{\cmark}{\ding{51}}%
\newcommand{\xmark}{\ding{55}}%
\theoremstyle{plain}
\newtheorem{thm}{Theorem}[section]
\newtheorem{cor}[thm]{Corollary}
\newtheorem{fact}[thm]{Fact}
\newtheorem{lem}[thm]{Lemma}
\newtheorem{Def}[thm]{Definition}
\newtheorem{obs}[thm]{Observation}
\newcommand{\eps}{\varepsilon}%
\newcommand{\p}{\textsc{P}}%
\newcommand{\FF}{\textsc{FirstFit}\xspace}%
\newcommand{\BP}{\textsc{bin packing}\xspace}
\newcommand{\acr}{a.c.r\xspace}
\newcommand{\calA}{\mathcal{A}}
\newcommand{\calB}{\mathcal{B}}
\newcommand{\alphas}{\alpha^\star}
\newcommand{\cI}{\mathcal{I}}
\newcommand{\cB}{\mathcal{B}}
\newcommand{\cN}{\mathcal{N}}
\newcommand{\calI}{\mathcal{I}}
\newcommand{\taus}{\tau^\star}
\newcommand{\sizec}{{\tt size}}
\newcommand{\ts}{\textstyle}
\newcommand{\bound}{\approx 1.387}   
\newcommand{\binpackingub}{1.589}    
\newcommand{\binpackinglb}{1.540}    
\newcommand{\binpackingnewub}{1.578} 
\definecolor{darkgreen}{rgb}{0,0.5,0}
\newcommand{\agnote}[1]{\todo[color=blue!25!white]{AG: #1}\xspace}
\newcommand{\dwnote}[1]{\todo[color=red!25!white]{DW: #1}\xspace}
\newcommand{\squishlist}{
 \begin{list}{$\bullet$}
  { \setlength{\itemsep}{0pt}
     \setlength{\parsep}{3pt}
     \setlength{\topsep}{3pt}
     \setlength{\partopsep}{0pt}
     \setlength{\leftmargin}{1.5em}
     \setlength{\labelwidth}{1em}
     \setlength{\labelsep}{0.5em} } }
\newcommand{\squishend}{
  \end{list}  }
\newcommand\fullshort[2]{%
\iffull%
{#1}
\fi%
\ifshort%
{#2}
\fi}
\newcommand{\mysubsubsection}[1]{\medskip\noindent\textbf{\textsf{#1:}}}
\begin{document}

\pagenumbering{gobble} 

\title{Fully-Dynamic Bin Packing with Limited Repacking\footnote{This work was done in part while the authors were visiting the Simons Institute for the Theory of Computing.} \footnote{An extended abstract of this work, merged with \cite{feldkord2017tight}, will appear in ICALP 2018.}}
\date{}

\author[1]{Anupam Gupta}
\author[1]{Guru Guruganesh}
\author[2]{Amit Kumar}
\author[1]{David Wajc}
\affil[1]{Carnegie Mellon University} 
\affil[2]{IIT Delhi} 


\maketitle

\begin{quote}
	\emph{``To improve is to change; to be perfect is to change often.''}
	\vspace{-0.5cm}
	\begin{flushright}	--  Winston Churchill.
	\end{flushright}
\end{quote}

\begin{abstract}
We study the classic \BP problem in
a fully-dynamic setting, where new items can arrive and old items may depart.
We want algorithms with low asymptotic competitive
ratio \emph{while repacking items sparingly} between updates. Formally, each item
$i$ has a \emph{movement cost} $c_i\geq 0$, and we want to use $\alpha
\cdot OPT$ bins and incur a movement cost 
$\gamma\cdot c_i$, either in the worst case, or in an amortized sense, for
$\alpha, \gamma$ as small as possible. We call $\gamma$ the \emph{recourse}
of the algorithm.
This is motivated by  cloud storage applications, 
where fully-dynamic \BP models the problem of data
backup to minimize the number of disks used, as well as communication
incurred in moving file backups between disks.
Since the set of files changes over time, we could  
recompute a solution periodically
from scratch, but this would give a high number of disk rewrites, incurring
a high energy cost and possible wear and tear of the disks.  In this work, we
present optimal 
tradeoffs between number of bins used and number of items repacked, as well as
natural extensions of the latter measure.
\end{abstract}

\newpage
\pagenumbering{arabic} 

\renewcommand*{\thefootnote}{\arabic{footnote}}
\section{Introduction}

Consider the problem of data backup on the cloud, where multiple users' files are
stored on disks (for simplicity, of equal size).  This is
modeled by the \BP problem, where the items are 
files and bins are disks, and we want to pack the items in a
minimum number of bins. However, for the backup application, files are created and
deleted over time.  The storage provider
wants to use a small number of disks, and also keep the
communication costs incurred by file transfers to a minimum.
Specifically, we want bounded ``recourse'',
i.e., items should be moved sparingly while attempting to minimize the number
of bins used in the packing.  These objectives are at odds with each other, and
the natural question is to give optimal tradeoffs between them.

Formally, an instance $\mathcal{I}$ of the \BP problem consists of a list of $n$
items of sizes $s_1,s_2,\dots,s_n \in [0,1]$. A bin
packing algorithm seeks to pack the items of $\mathcal{I}$ into a small
number of unit-sized bins; let $OPT(\mathcal{I})$ be the minimum number
of unit-sized bins needed to pack all of $\mathcal{I}$'s items. This NP-hard
problem has been studied since the 1950s, with hundreds of
papers addressing its many variations; see e.g.,
\cite[Chapter~2]{hochbaum1996approximation} and 
\cite{coffman2013bin} for surveys. Much of this work 
(e.g.~\cite{johnson1974worst,yao1980new,lee1985simple,ramanan1989line,
richey1991improved,woeginger1993improved,van1995lower,seiden2002online,
heydrich2016beating,balogh2012new})
starting with~\cite{ullman1971performance} studies the \emph{online} setting, where items
arrive sequentially and are packed into bins immediately and irrevocably.
While the offline problem is approximable to within an \emph{additive}
term of $O(\log OPT)$ \cite{hoberg2017logarithmic}, in the online setting there
is a $\binpackinglb$-\emph{multiplicative gap} between the algorithm and $OPT$
in the worst case, even as $OPT \to \infty$~\cite{balogh2012new}.
Given the wide applicability of the online problem, researchers have
studied the problem where a small amount of repacking is allowed upon
item additions and deletions. Our work focuses on the bounded recourse
setting, and we give \emph{optimal tradeoffs} between the number of bins
used and amount of repacking required in the fully dynamic setting for
several settings.

A \emph{fully-dynamic \BP algorithm} $\mathcal{A}$, given a sequence of
item insertions and deletions, maintains at every time $t$ a feasible
solution to the \BP instance $\mathcal{I}_t$ given by items inserted and
not yet deleted until time $t$. Every item $i$ has a size $s_i$ and a
\emph{movement cost} $c_i$, which $\mathcal{A}$ pays every time
$\mathcal{A}$ moves item $i$ between bins. In our application, the
movement cost $c_i$ may be proportional to the size of the file (which
is the communication cost if the files are all large), or may be a
constant (if the files are small, and the overhead is entirely in setting up the
connections), or may depend on the file in more complicated ways.

\vspace{-0.1cm} 
\begin{Def} A fully-dynamic algorithm $\mathcal{A}$ has (i) an
  \emph{asymptotic competitive ratio} $\alpha$, (ii)
  \emph{additive term} $\beta$ and (iii) \emph{recourse} $\gamma$, if at each
  time $t$ it packs the instance $\mathcal{I}_t$ in at most
  $\alpha\cdot OPT(\mathcal{I}_t)+\beta$ bins, 
  while paying 
  at most $\gamma\cdot \sum_{i=1}^t c_i$ movement cost until time $t$. If at each time $t$ algorithm
  $\mathcal{A}$ incurs at most $\gamma\cdot c_t$ movement cost, for
  $c_t$ the cost of the item updated at time $t$, we say
  $\mathcal{A}$ has \emph{worst case} recourse $\gamma$,
  otherwise we say it has \emph{amortized} recourse $\gamma$. 
\end{Def}

Any algorithm must pay $\frac{1}{2}\sum_{i=1}^t c_i$ movement cost
just to insert the items, so an algorithm with
$\gamma$ recourse spends at most $2\gamma$ times more movement cost than the
absolute minimum.
The goal is to design algorithms which simultaneously have low asymptotic
competitive ratio (\acr), additive term, and recourse.
 There is a natural
 tension between the \acr and recourse, so we want to find the optimal
 \acr-to-recourse trade-offs.

\medskip\noindent \textbf{Related Work.} Gambosi et
al.~\cite{gambosi1990new,gambosi2000algorithms} were the first to study dynamic
\BP, and gave a $4/3$-\acr algorithm for the insertion-only setting
which moves each item $O(1)$ times. This gives amortized $O(1)$
recourse for general movement costs. For unit movement costs, Ivkovi\'c
and Lloyd~\cite{ivkovic1996fundamental} and Balogh et
al.~\cite{balogh2008lower,balogh2014line} gave lower bounds of $4/3$ and
$1.3871$ on the \acr of algorithms with $O(1)$ recourse; Balogh et
al.~\cite{balogh2014line} gave an algorithm with \acr
$3/2+\eps$ and $O(\eps^{-1})$ movements per update in the insertion-only setting.
Following \citet{sanders2009online}, who studied
makespan minimization with recourse, 
Epstein et al.~\cite{epstein09robust}
re-introduced the dynamic \BP problem with the 
movement costs $c_i$ equaling the size $s_i$ (the
\emph{size cost} setting, where worst-case recourse is also called  \emph{migration factor}).
Epstein et al.~gave a solution with \acr $(1+\eps)$ and bounded
(exponential in $\epsilon^{-1}$) recourse for the insertion-only setting. Jansen and Klein
\cite{jansen13robust} improved the recourse to $poly(\eps^{-1})$ for
insertion-only algorithms, and Berndt et al.~\cite{berndt15fully} gave
the same recourse for fully-dynamic algorithms. They also showed that
any $(1+\eps)$ \acr algorithm must have \emph{worst-case recourse}
$\Omega(\eps^{-1})$. While these give nearly-tight results for ``size costs'' $c_i = s_i$, the unit cost ($c_i = 1$) and
general cost cases were not so well understood prior to this
work. 
\vspace{-0.1cm}

\subsection{Our Results}\label{sec:results}

We give (almost) tight characterizations for the recourse-to-asymptotic
competitive ratio trade-off for fully-dynamic \BP under (a) unit movement
costs, (b) general movement costs and (c) size movement costs. Our results are
summarized in the following theorems. (See \S\ref{sec:tabula} for a tabular listing of 
our results contrasted with the best previous results.) 
In the context of sensitivity analysis (see \cite{sanders2009online}), our bounds provide tight 
characterization of the \emph{average} change between 
\emph{approximately}-optimal solutions of slightly modified instances.


\mysubsubsection{Unit Costs}
Consider the most natural movement cost: \emph{unit costs}, where
$c_i = 1$ for all items $i$. Here we give tight upper and lower
bounds. Let $\alpha=1- \frac{1}{W_{-1}(-2/e^3)+1} \approx 1.3871$ (here
$W_{-1}$ is the lower real branch of
the 
Lambert $W$-function \cite{corless1996lambertw}). 
\citet{balogh2008lower} showed $\alpha$ is a lower bound on the \acr
with constant recourse. We present
an alternative and simpler proof of this lower bound, also giving tighter bounds: doing better than $\alpha$
requires either \emph{polynomial} additive term or recourse. Moreover,
we give a matching algorithm proving $\alpha$ is tight for this
problem.\footnote{Independently of this work, \citet{feldkord2017tight} provided a different optimal algorithm for unit movement costs. This then inspired us to provide our worst-case recourse algorithm under movement costs using our framework.}

\begin{thm}[Unit Costs Tight Bounds]
  \label{thm:main-unit}
  For any $\eps>0$, there exists a fully-dynamic \BP algorithm with \acr $(\alpha+\eps)$, additive term
  $O(\eps^{-2})$ and amortized recourse $O(\eps^{-2})$, or additive term $poly(\eps^{-1})$ and worst case recourse $\tilde{O}(\eps^{-4})$ under unit
  movement costs.
  Conversely, any algorithm with \acr
  $(\alpha-\eps)$ has additive term and 
  amortized recourse whose product is $\Omega(\eps^{4}\cdot n)$ under unit movement costs.
\end{thm}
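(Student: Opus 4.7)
The plan splits the theorem into its upper and lower bound halves.

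For the upper bound, my approach is to combine size-class rounding with a configuration-based packing that supports cheap local updates. First, I would isolate ``tiny'' items of size $\leq \eps$ and pack them via \textsc{FirstFit}, wasting at most an $\eps$-fraction of space. Then, for the remaining items, I would round sizes into $O(\eps^{-1})$ buckets and maintain a packing driven by the natural configuration LP (where each configuration is a multiset of buckets summing to at most $1$). On each insertion, I would either place the item into an existing bin of a compatible configuration or open a new bin; on each deletion I would attempt to merge partial bins of the same configuration. Whenever the integer packing drifts by more than $\eps \cdot OPT$ from the LP optimum, I would trigger a local rebuild touching $O(\eps^{-2})$ bins, and a potential argument amortizes the rebuild cost to $O(\eps^{-2})$ moves per update on average. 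The \acr of $\alpha + \eps$ stems from the fact that the configuration LP has integrality gap approaching $\alpha$ as $\eps \to 0$, matching the lower-bound construction. To upgrade to worst-case recourse $\tilde{O}(\eps^{-4})$, I would spread each rebuild over a window of $\tilde\Theta(\eps^{-2})$ subsequent updates via a gradually updated ``shadow'' packing, paying for overlapping rebuilds by inflating the additive term to a polynomial in $\eps^{-1}$.

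For the lower bound, I would construct an adversarial instance using $K = \Theta(\log(1/\eps))$ item sizes $s_k \approx \tfrac{1}{k+1} + \delta$ and evolve it in batches of insertions and deletions. Each optimal intermediate packing places items of complementary sizes together into nearly-full bins; however, switching from one phase's optimal configuration to the next requires moving $\Omega(\eps^2)$ items per inserted/deleted item on average. Against an oblivious random adversary choosing each batch type uniformly, any algorithm with amortized recourse $\gamma$ and additive term $\beta$ achieving \acr $\alpha - \eps$ must satisfy $\beta \cdot \gamma = \Omega(\eps^4 \cdot n)$: the free slack from $\beta$ absorbs at most $O(\beta / \eps)$ moves per phase, while the adversary forces a total of $\Omega(\eps^2 \cdot n)$ moves across phases. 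The exact constant $\alpha = 1 - 1/(W_{-1}(-2/e^3)+1)$ emerges from a continuous optimization over the fractional packing of the $K$ size classes: after taking $K \to \infty$, the stationarity condition reduces to a transcendental equation of the form $x e^x = -2/e^3$, whose relevant root is $W_{-1}(-2/e^3)$.

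The main obstacle I anticipate is the lower bound, specifically obtaining the tight constant $\alpha$ together with the quantitative $\Omega(\eps^4 \cdot n)$ product bound. The subtlety is that the algorithm has many ways to hedge: it may preemptively leave bins partially empty (paying in $\beta$), preemptively shuffle items in anticipation of future deletions (paying in $\gamma$), or mix these strategies in arbitrary proportions. The lower bound must rule out all such hedges uniformly, presumably via an LP-duality or convex-combination argument that balances the two costs. The emergence of the Lambert $W$ function is a signal that this optimization is genuinely transcendental and demands a careful continuous analysis rather than a combinatorial counting argument; pinning down the $\eps^4$ exponent (as opposed to a weaker polynomial dependence) is where I expect the analysis to be most delicate.
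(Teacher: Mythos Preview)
Your upper-bound plan has a genuine gap: the constant $\alpha$ does \emph{not} come from the integrality gap of the configuration LP. That LP has integrality gap essentially $1$ (it underlies the offline AFPTAS), so ``the configuration LP has integrality gap approaching $\alpha$'' is simply false, and nothing in your scheme would produce $\alpha\approx 1.3871$ rather than $1+\eps$. The reason $1+\eps$ is unreachable here is specific to unit costs: tiny items (size $\ll \eps$) each cost $1$ to move, so a rebuild that touches $\Theta(OPT)$ volume of tiny items may cost $\Theta(n)$ moves, not $O(\eps^{-2})$. Your ``trigger a rebuild when the packing drifts by $\eps\cdot OPT$'' step therefore cannot be amortized the way you describe. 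The paper's actual mechanism is quite different: small items are packed once and essentially \emph{never} moved thereafter; instead they are laid out along a fixed ``curve'' of bin fill-levels $\{N_x\}$ chosen so that for \emph{every} possible size $\ell>1/2$ of incoming huge items, placing those items greedily on top of the curve yields competitive ratio at most $\alpha_\eps$. This curve is exactly the optimal solution of a small LP (\ref{lpeps}), whose optimum is $\alpha\pm O(\eps)$; the Lambert $W$ constant emerges from the continuous limit of \emph{that} LP, not from any configuration LP. Large items are then handled separately (they are $\Omega(\eps)$-sized, so there are only $O(OPT/\eps)$ of them and periodic repacking is cheap).

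Your lower-bound construction is also off-target. The paper's hard instance is not harmonic-type sizes $\approx 1/(k+1)$; it is a sea of \emph{tiny} items of size $n^{-(1-\delta)}$ together with huge items of size $\ell\in(1/2,1/\alpha]$. The point is that the tiny items fix a free-space profile $\{N_x\}$ that the algorithm cannot afford to change (each tiny item costs $1$ to move), and the adversary then picks the single huge size $\ell$ that is worst for that profile. The same LP \ref{lpeps} is now \emph{gap-revealing}: any profile $\{N_x\}$ must violate some constraint $(\mathrm{CR}_\ell)$ at level $\alphas_\eps$, and beating $\alphas_\eps-\eps$ on that $\ell$ forces the algorithm to free up $\Omega(\eps)$ volume in $\Omega(\eps B)$ bins, i.e.\ move $\Omega(\eps^2 B\cdot n^{1-\delta})$ tiny items. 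Your harmonic construction with $K=\Theta(\log(1/\eps))$ sizes and ``complementary sizes packed together'' has no obvious route to the specific constant $\alpha$, and your accounting (``free slack from $\beta$ absorbs $O(\beta/\eps)$ moves'') is too coarse to recover the $\eps^4$ product. The crux you correctly flagged as the hard part---balancing hedging in $\beta$ against hedging in $\gamma$---is handled in the paper not by a convex-combination argument over algorithm strategies, but by showing that \emph{every} packing of the small items makes some LP constraint tight, so some adversarial $\ell$ works regardless of the hedge.
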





\mysubsubsection{General Costs} 
Next, we consider the most general problem, with 
arbitrary movement costs.
\Cref{thm:main-unit} showed that 
in the unit cost model, we can
get a better \acr than for online \BP without repacking, whose optimal
\acr is at least $\binpackinglb$ (\cite{balogh2012new}).  
Alas, 
the fully-dynamic \BP problem with the general costs is
no easier than the arrival-only online problem (with no
repacking).

\begin{thm}[Fully Dynamic as Hard as Online]
	\label{thm:main-lb-gen}
	Any fully-dynamic \BP algorithm with bounded recourse under general movement costs
	has \acr at least as high as that of any
	online \BP algorithm. Given current bounds (\cite{balogh2012new}), this is at least
	$\binpackinglb$.
\end{thm}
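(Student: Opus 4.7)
The plan is to reduce online bin packing (without repacking) to fully-dynamic bin packing under general movement costs. Starting from any fully-dynamic algorithm $\mathcal{A}$ with amortized recourse $\gamma$ and \acr $\alpha$, I will build an online bin-packing algorithm $\mathcal{B}$ whose \acr is at most $\alpha + o(1)$ (as $OPT \to \infty$). Since every online algorithm has \acr at least $\binpackinglb$ by~\cite{balogh2012new}, this forces $\alpha \ge \binpackinglb$, and indeed $\alpha \ge \alpha_{on}$ for any online lower bound $\alpha_{on}$.

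\textbf{The reduction.} Given an online input $a_1, a_2, \ldots, a_n$, algorithm $\mathcal{B}$ internally simulates $\mathcal{A}$ on the same items but assigns them movement costs $c_k = M^k$ for a sufficiently large constant $M = M(\gamma)$. Plugging $c_k = M^k$ into the recourse condition $\sum_k m_k c_k \le \gamma \sum_k c_k$ (where $m_k$ counts the number of times $\mathcal{A}$ moves item $k$) yields
\[
\sum_k m_k \, M^{k-n} \;\le\; 2\gamma.
\]
For $M$ large, this forces the tail of the sequence to be essentially immobile in $\mathcal{A}$'s simulation: $a_n$ is never moved, $a_{n-1}$ can be moved at most $O(\gamma M)$ times, and so on. $\mathcal{B}$ now mirrors $\mathcal{A}$'s \emph{initial} placements: it maintains a bijection between the bins $\mathcal{A}$ has opened so far and its own bins, and when $\mathcal{A}$ first places $a_k$ into a bin $B$, $\mathcal{B}$ places $a_k$ into the corresponding $\mathcal{B}$-bin if that bin has room, otherwise $\mathcal{B}$ opens a fresh bin.

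\textbf{Main obstacle.} The hardest part of the argument is bounding the number of \emph{extra} bins $\mathcal{B}$ opens beyond what $\mathcal{A}$ uses---these come from \emph{conflicts}, namely moments when $\mathcal{A}$ places $a_k$ into a bin $B$ that is overfull in $\mathcal{B}$'s view because $\mathcal{A}$ has already relocated some earlier item $a_j$ out of $B$. Each conflict can be charged to such a move of $a_j$, and the weighted-move bound above should keep the total charge weight at $O_\gamma(1)$, independent of $n$. The subtle point is cascades: a single move may create further conflicts down the line, and these must be charged without double-counting. The rapid growth of $c_k$ makes each conflict's charge on a move of $a_j$ scale like $c_j/c_n$, so summing over all moves telescopes to a quantity depending only on $\gamma$ and $M$. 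Once this is in hand, $\mathcal{B}$'s bin count is at most $\mathcal{A}$'s plus $O_\gamma(1)$, giving $\mathcal{B}$ \acr $\alpha + o(1)$ and concluding the theorem.
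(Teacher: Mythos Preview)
Your reduction points in the right direction, but the key step—bounding the number of extra bins $\mathcal B$ opens—does not go through, and the cost orientation you chose works against you. With $c_k=M^k$ increasing, the amortized constraint $\sum_k m_k M^k\le \gamma\sum_k M^k$ gives essentially no control over moves of \emph{early} items: $m_1$ alone can be as large as $\Theta(\gamma M^{n-1})$. Yet it is precisely moves of earlier items that create the conflicts $\mathcal B$ must absorb. Concretely, take all sizes equal to $0.4$ and let $\mathcal A$ always place the newly arrived item into bin~1, first evicting the two current occupants to a fresh bin. This costs at step $2i{+}1$ only $c_{2i-1}+c_{2i}$, so the total movement cost is $\sum_{j\le t-2}M^j\le \frac1M\sum_{j\le t}M^j$; hence $\mathcal A$ has recourse $\le 1/M<\gamma$ and uses $\lceil n/2\rceil=OPT$ bins. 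But in $\mathcal B$'s mirror, bin~1 is permanently occupied by $a_1,a_2$, so every $a_k$ with $k\ge 3$ conflicts and $\mathcal B$ opens $\Omega(n)$ extra bins—its competitive ratio is at least~$2$ regardless of $\mathcal A$'s.

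The underlying flaw is that your charging argument only controls the \emph{weighted} quantity $\sum_j m_j\,c_j/c_n\le 2\gamma$; what you need is the \emph{unweighted} count of conflicts, and no geometric choice of $c_k$'s converts one into the other. Flipping to decreasing costs does not help a static mirror either: then \emph{late} items become cheap and can be shuffled freely just before the next arrival, and the same obstacle reappears from the other end.

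The paper avoids the mirror idea altogether. It starts from the adaptive online \emph{adversary} and feeds its items to $\mathcal A$ with sharply \emph{decreasing} costs $c_k=(2\beta)^{-(k+1)}$. Crucially, whenever $\mathcal A$ repacks some item $e_j$, the construction \emph{deletes} $e_{j+1},\dots,e_{k+1}$, rolling the system back to the state just after $e_j$'s new placement, and resumes the adversary from there. A potential that credits $\beta c_k$ on each arrival and debits actual movement cost stays nonnegative; any rollback to level $j$ debits at least $c_j$ while the intervening arrivals credited at most $\beta\sum_{i>j}c_i<c_j$, so no level is revisited infinitely often and the surviving prefix grows without bound. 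On that prefix $\mathcal A$ has never repacked anything, so it \emph{is} an online algorithm and the online lower bound applies. The two ingredients your plan is missing are (i) decreasing costs, so that touching an early item is prohibitively expensive, and (ii) using deletions to reset state rather than trying to track $\mathcal A$'s drift with a static mirror of initial placements.
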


Given this result, is it conceivable that the fully-dynamic model is
\emph{harder} than the arrival-only online model, even allowing for
recourse? We show this is likely not the case, as we can almost match
the \acr of the current-best algorithm for online \BP.

\begin{thm}[Fully Dynamic Nearly as Easy as Online]\label{thm:main-ub-gen}
	Any algorithm in the Super Harmonic family of algorithms can be
	implemented in the fully-dynamic setting with constant recourse under
	general movement costs.
	This implies an
	algorithm with \acr $\binpackingub$ using~\cite{seiden2002online}.
\end{thm}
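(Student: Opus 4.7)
The plan is to exhibit a fully-dynamic implementation of any Super Harmonic algorithm $\mathcal{A}$ that maintains, at every time step, a packing whose multiset of bin patterns matches the one $\mathcal{A}$ would produce on the currently live multiset of items. Since Super Harmonic's ACR bound depends only on this pattern multiset (not on item identities or arrival order), $\mathcal{A}$'s ACR transfers directly to the fully-dynamic setting; instantiating $\mathcal{A}$ as Seiden's algorithm then yields the claimed $\binpackingub$ bound.

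I would use two structural properties of Super Harmonic: items are classified into a constant number $K$ of size types (possibly further partitioned by colors), each bin is labeled by one of $O(1)$ \emph{patterns} (multisets of typed slots totaling at most a unit of capacity), and $\mathcal{A}$'s placement rule depends only on aggregate type/color counts rather than on arrival order, so ``re-run $\mathcal{A}$ on the current live multiset'' is well-defined. The algorithm maintains, for each pattern $p$, an ordered list $L_p$ of the bins of pattern $p$. On insertion of item $i$, apply $\mathcal{A}$'s rule to select a target pair $(p,\tau)$ for a slot type $\tau$, and place $i$ in the oldest bin of $L_p$ with an open $\tau$-slot (opening a fresh bin at the top of $L_p$ if none exists). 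On deletion of an item from the $\tau$-slot of a bin $B \in L_p$, move an item of matching type from the topmost bin $B^\star$ of $L_p$ into $B$'s vacated slot and close $B^\star$ if drained. Since inserts extend lists only at the top and deletes drain only from the top, $|L_p|$ matches what $\mathcal{A}$ would produce on the live multiset, so the ACR bound transfers.

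The main obstacle is bounding the recourse under general costs. Each deletion triggers only $O(|p|) = O(1)$ moves, but the moved item's cost can be arbitrary relative to the deleted item's cost, ruling out a per-event worst-case bound and demanding an amortized argument. I would prove amortized $O(1)$ recourse via a potential-function / charging argument, starting from the natural candidate $\Phi = \sum_{p}\sum_{x \in B^\star_p} c_x$ (total cost of items in topmost bins): insertions add at most $c_i$ to $\Phi$ (since $i$ lands in a top bin) while each move of an item $x'$ out of a top bin decreases $\Phi$ by exactly $c_{x'}$, paying for the move. The delicate point---where I expect the main technical work---is bounding the $\Phi$-jump when a topmost bin empties and a deeper bin is promoted to top; I would address this either by geometrically damping the potential as $\Phi' = \sum_p \sum_{r\geq 1} 2^{-r} \sum_{x \in B^{(p)}_r} c_x$ so that promotion merely doubles a bin's contribution (a constant factor absorbed into the recourse), or by a direct charging scheme that attributes each move of $x'$ to the chain of intervening deletions that drained all bins above $x'$'s current position---each such deletion contributing to $\sum_i c_i$---and shows that the total charging multiplicity on any event is $O(1)$ across the $O(K)$ pattern-slot classes, yielding amortized $O(1)$ recourse under general movement costs.
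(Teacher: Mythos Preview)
Your approach has a genuine gap: the ``move from the topmost bin'' rule, which is blind to movement costs, does \emph{not} achieve $O(1)$ amortized recourse under general costs, and neither of your proposed potentials repairs this.

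Here is a concrete counterexample. Fix a pattern with two slots. Insert $2(k-1)$ items of cost $\varepsilon$, filling bins $B_1,\dots,B_{k-1}$, then insert two items of cost $M$ into the new top bin $B_k$. Now delete the two $\varepsilon$-items from $B_{k-1}$: each deletion moves a cost-$M$ item from $B_k$ down to $B_{k-1}$, and $B_k$ closes, making $B_{k-1}$ (now holding both cost-$M$ items) the new top. Repeat with $B_{k-2}$, then $B_{k-3}$, and so on. Each of the $2(k-1)$ deletions of an $\varepsilon$-item triggers a move of cost $M$, so the total movement cost is $2(k-1)M$, while $\sum_i c_i = 2M + 4(k-1)\varepsilon$. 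Taking $\varepsilon \to 0$ gives recourse $\Theta(k)$, which is unbounded. Your simple potential $\Phi$ fails because each time the top bin closes it is replaced by a bin whose contents again cost $2M$, so $\Phi$ jumps by $2M$ at every level with nothing to charge it to. The geometrically damped $\Phi'$ fails for the same instance: moving a cost-$M$ item from rank~$1$ to rank~$2$ releases only $M/4$ of potential, not the $M$ needed to pay for the move, and the rank-shift when the top closes does not make up the difference.

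The paper's proof takes a fundamentally different route: rather than ordering bins by age, it orders items by \emph{cost}. For each type, items are kept sorted by movement cost (rounded to powers of $2$), so that insertion or deletion of a cost-$2^k$ item cascades through items of costs $2^{k-1},2^{k-2},\dots$, a geometric series summing to $O(2^k)$; this is Lemma~\ref{lem:c-items-per-bin}. Maintaining Property~(P3) (no unmatched compatible red/blue pair) is done not by tracking patterns but by maintaining a \emph{stable matching} in the compatibility graph where each group's priority is the cost of its most expensive item; a Gale--Shapley-style update then again cascades through geometrically decreasing costs. Finally, small items---which have varying sizes and so cannot be handled by a ``fill one vacated slot'' rule at all---require a separate treatment: they are kept sorted by Smith ratio $c_i/s_i$, grouped into buckets of $\Theta(\varepsilon^{-1})$ bins, with ``ghost'' placeholders for deleted items and a density-based potential (Lemma~\ref{lem:small-ub-gen}). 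Your proposal does not address small items, and the cost-oblivious top-bin rule would fail there too.
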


The current best online \BP algorithm~\cite{balogh2017new} is not from the Super Harmonic family
but is closely related to them. It has an \acr of
$\binpackingnewub$, so our results for fully-dynamic \BP are within a hair's width of the best bounds known for 
online \BP. It remains an open question as to whether our techniques can be extended to achieve the improved a.c.r bounds while maintaining constant recourse. 
While we are not able to give a black-box
reduction from fully-dynamic algorithms to online algorithms, we
conjecture that such a black-box reduction exists and that these problems' \acr are equal.


\mysubsubsection{Size Costs}
Finally, we give an extension of the already strong results known
for the size cost model (where $c_i=s_i$ for every item $i$). 
We show that the lower bound known in the
worst-case recourse model extends to the amortized model as well, for
which it is easily shown to be tight.

\begin{thm}[Size Costs Tight Bounds]
	\label{thm:main-size}
	For any $\eps>0$, there exists a $(1+\eps)$-\acr
	algorithm with $O(\eps^{-2})$ additive term and $O(\eps^{-1})$ 
	amortized recourse under size costs.   
	Conversely, for
	infinitely many $\eps>0$, any $(1+\eps)$-\acr algorithm with $o(n)$ additive term requires $\Omega(\eps^{-1})$ amortized recourse under size costs. 
\end{thm}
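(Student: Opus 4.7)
The plan splits into the algorithmic upper bound and the matching amortized lower bound; the upper bound is a fairly standard adaptation of the offline APTAS-plus-buffer technique to fully-dynamic size-cost recourse, while the lower bound requires extending the known $\Omega(\eps^{-1})$ worst-case recourse bound of~\cite{berndt15fully} to the amortized regime.

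\textbf{Upper bound.} I would use an epoch-based algorithm. At the start of each epoch, invoke an offline asymptotic PTAS to re-pack the current instance $\mathcal{I}_t$ into at most $(1+\eps/2)\cdot OPT(\mathcal{I}_t) + O(\eps^{-2})$ bins, and record $S^* := \sum_i s_i$ at this point. Between re-packings, insertions go into dedicated ``overflow'' bins (for instance via First Fit Decreasing on a buffer), and deletions simply remove items in place. The epoch ends as soon as the total size of updates received during it reaches $\eps\cdot S^*$. A standard volume argument then shows that the buffer contributes at most $O(\eps S^* + 1)$ extra bins during the epoch, while $OPT(\mathcal{I}_t)$ itself drifts by at most $\eps S^*$; since $S^* \le OPT(\mathcal{I}_t) + O(1)$ throughout, the maintained packing is $(1+O(\eps))$-competitive with additive term $O(\eps^{-2})$. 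The movement cost of the epoch-ending re-pack is at most the current total volume, which is $O(S^*)$, and it is triggered by updates of total size-cost $\Theta(\eps S^*)$, so the amortized recourse is $O(\eps^{-1})$. A rescaling of $\eps$ absorbs the $O(\cdot)$ constants.

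\textbf{Lower bound.} For the converse, I would construct, for infinitely many values of $\eps$, an update sequence that forces $\Omega(\eps^{-1})$ amortized recourse against any $(1+\eps)$-\acr algorithm with $o(n)$ additive term. The plan is to adapt the worst-case construction of~\cite{berndt15fully} by ``stretching'' the hard configuration over a long cycle of updates so that the movement charge accumulates evenly rather than concentrating on a single peak step. Concretely, I would take two item sizes near $1/2$, say $a = 1/2 + \delta$ and $b = 1/2 - \delta$ with $\delta = \Theta(\eps)$, so that any optimal packing necessarily pairs each $a$ with a $b$. Starting from a long chain of such pairs, the adversary alternately inserts and deletes items of size $\Theta(\eps)$ in a cycle designed so that preserving a $(1+\eps)$-approximation forces the algorithm to break and reform $\Omega(1)$ pairs per cycle. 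Each cycle then costs the adversary $\Theta(\eps)$ in size-cost but costs the algorithm $\Omega(1)$ in movement, yielding amortized recourse $\Omega(\eps^{-1})$. Restricting attention to values of $\eps$ compatible with this construction supplies the ``infinitely many'' clause, and the $o(n)$ additive-term assumption prevents the algorithm from absorbing the structural deficit into slack bins.

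\textbf{Main obstacle.} The crux is engineering the lower-bound sequence so that \emph{every} cycle individually charges the algorithm $\Omega(1)$ movement, rather than just some occasional cycle in worst case. If the adversary's charge were not spread evenly, the algorithm could pre-emptively repack during cheap updates and escape the amortized cost. This constraint is what separates the amortized lower bound from the (simpler) worst-case one, and I expect it occupies the bulk of the technical work---likely via a potential-function argument showing that any $(1+\eps)$-approximate packing lies ``far'' from the configuration required after each cycle, uniformly across cycles.
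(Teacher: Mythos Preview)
Your upper bound is correct and essentially identical to the paper's: an epoch-based AFPTAS re-pack triggered once the cumulative update volume reaches an $\eps$-fraction of the current volume, with the same amortization arithmetic.

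Your lower bound, however, has a real gap. The two-sizes-near-$1/2$ construction you sketch does not force the structural change you need. With items of size $a=1/2+\delta$ and $b=1/2-\delta$, the optimal packing pairs each $a$ with a $b$, and this remains optimal (or near-optimal) regardless of whether $\Theta(\eps)$-sized items are present or absent: the small items either sit in gaps or in their own bins, and there is no reason a $(1+\eps)$-competitive algorithm must ever break an $(a,b)$ pair. You assert the existence of ``a cycle designed so that preserving a $(1+\eps)$-approximation forces the algorithm to break and reform $\Omega(1)$ pairs per cycle,'' but this is precisely the hard part, and the construction you give does not support it.

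The paper's construction is quite different and hinges on number-theoretic properties of the Sylvester sequence $k_1=2$, $k_{i+1}=\prod_{j\le i}k_j+1$. Taking $c$ sizes $s_i\approx 1/k_i$ plus one size $s_{c+1}\approx\eps$, the coprimality and divisibility of the $k_i$ force two incompatible near-optimal structures: with all $c+1$ sizes present, any near-optimal bin must contain exactly one item of each size (any other configuration leaves an $\Omega(\eps)$ gap); with $s_{c+1}$ absent, any near-optimal bin must contain items of a \emph{single} size (mixing sizes again leaves an $\Omega(\eps)$ gap). Switching between these two instances adds or removes only $\Theta(\eps N)$ total volume but forces $\Omega(N)$ volume to move, giving $\Omega(\eps^{-1})$ amortized recourse. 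The paper notes explicitly that the divisibility properties---not merely the sizes themselves---are what drive this rigidity, and this is the idea your proposal is missing.
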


The hardness result above was previously only known for
\emph{worst-case} recourse~\cite{berndt15fully}; this previous
lower bound consists of a hard instance which effectively disallowed any
recourse, while lower bounds against amortized recourse can of course
not do so. 

\subsection{Techniques and Approaches}
\label{sec:techniques}

\mysubsubsection{Unit Costs}
For unit costs, our lower bound is based on a natural  
instance consisting of small items, to which large items of various
sizes (greater than $\frac12$) are repeatedly
added/removed~\cite{balogh2008lower}. The
near-optimal solutions oscillate between 
either having most bins containing 
both large and small items, 
or most bins containing only
small items. Since small items can be moved rarely, this
effectively makes them static, giving us hard instances.
We can optimize for the lower bound arising thus
via a \emph{gap-revealing} LP, similar to~\cite{balogh2008lower}.
However, rather than bound this LP precisely, we exhibit a near-optimal
dual solution showing a lower
bound of $\alpha - \eps$.  

For the upper bound, the same LP
now changes from a gap-revealing LP to a
\emph{factor-revealing} LP. The LP solution shows how the
small items should be packed in order to ``prepare'' for arrival of
large items, to ensure $(\alpha+\epsilon)$-competitiveness.
An important building block of our algorithms is the ability to deal with
(sub)instances made up solely of small items. In particular, we give
fully-dynamic \BP algorithms with \acr $(1+\eps)$ and only $O(\eps^{-2})$
\emph{worst case} recourse if all items have size $O(\eps)$. 
The ideas we develop are versatile, and are used, e.g., to handle the
small items for general costs -- we discuss them in more detail below.  We then extend these ideas
to allow for (approximately) packing items according to a ``target
curve'', which in our case is the solution to the above LP.
At this point the LP makes a re-appearance, allowing us to analyze a 
simple packing of large items ``on top'' of the curve (which we can either 
maintain dynamically using the algorithm of \citet{berndt15fully} or recompute 
periodically in linear time); in particular, using this LP we are able to 
prove this packing of large items on top of small items has an optimal \acr of $\alpha+O(\eps)$.


\mysubsubsection{General Costs}  To transform any online lower
bound to a lower bound for general costs even with recourse, we give the
same sequence of items but with movement costs that drop sharply. This
prohibits worst-case recourse, but since we allow amortized recourse,
the large costs $c_i$ for early items can still be used to pay for
movement of the later items (with smaller $c_i$ values). Hence, if the
online algorithm moves an element $e$ to some bin $j$, we delete all the
elements after $e$ and proceed with the lower bound based on assigning
$e \mapsto j$. Now a careful analysis shows that for some sub-sequence
of items the dynamic algorithm effectively performs no changes, making
it a recourse-less online algorithm -- yielding the claimed bound.


For the upper bound, we show how to emulate any algorithm in the
\emph{Super Harmonic} class of algorithms \cite{seiden2002online} even
in the fully-dynamic case.  
We observe that the analysis for SH essentially relies on maintaining a
stable matching in an appropriate compatibility graph.
Now our
algorithm for large items uses a subroutine similar to the Gale-Shapley
algorithm.
Our simulation also requires a solution for packing similarly-sized
items.  The idea here is to sort items by their cost (after rounding
costs to powers of two), such that any insertion or deletion of items of
this size can be ``fixed'' by moving at most one item of each smaller
cost, yielding $O(1)$ worst-case recourse.

The final ingredient of our algorithm is packing of small items into
$(1-\epsilon)$-full bins.  Unlike the online algorithm which can just
use \FF, we have to extend the ideas of \citet{berndt15fully} for the
size-cost case.  Namely we group bins into buckets of
$\Theta(1/\epsilon)$ bins, such that all but one bin in a bucket are
$1-O(\epsilon)$ full, guaranteeing these bins are $1-O(\epsilon)$ full
on average. While for the size-cost case, bucketing bins and sorting the
small items by size readily yields an $O(\epsilon^{-1})$ recourse bound,
this is more intricate for general case where  size and cost are not
commensurate.
Indeed, we also maintain the small items in
sorted order according to their size-to-movement-cost ratio (i.e., their
\emph{Smith ratio}), and only move items to/from a bin once it has
$\Omega(\epsilon)$'s worth of volume removed/inserted (keeping track of
erased, or ``ghost'' items). This gives an amortized $O(\epsilon^{-2})$
recourse cost for the small items.

\mysubsubsection{Size Costs}  
The technical challenge for size costs is in
proving the lower bound with \emph{amortized} recourse that matches the
easy upper bound.
Our proof uses item sizes which are roughly the reciprocals of the
Sylvester sequence \cite{sylvester1880point}. While this sequence has
been used often in the context of \BP algorithms, our sharper lower
bound explicitly relies on its divisibility properties (which have not
been used in similar contexts, to the best of our knowledge).
We show that for these sizes, any algorithm
$\mathcal{A}$ with \acr $(1+\epsilon)$ must, on an instance containing $N$
items of each size, have $\Omega(N)$ many bins
with one item of each size. In contrast, on an instance containing $N$
items of each size \emph{except the smallest size} $\eps$, 
algorithm $\mathcal{A}$ must have some (smaller) $O(N)$ many bins with
more than one distinct size in them. Repeatedly adding
and removing the $N$ items of size $\eps$, the algorithm must suffer a 
high amortized recourse.

\medskip
We give a more in-depth description of our algorithms and lower bounds for unit, general and size costs, highlighting some salient points of their proofs in \S\ref{sec:unit},\S\ref{sec:general} and \S\ref{sec:migration}, respectively. Full proofs are deferred to \S\ref{sec:unit-appendix},\S\ref{sec:general-appendix} and \S\ref{sec:migration-appendix}, respectively.

\section{Unit Movement Costs}\label{sec:unit}

We consider the natural unit movement costs model. First, we show
that an \acr better than $\alpha = 1-
\frac{1}{W_{-1}(-2/e^3)+1} \bound$ implies either
polynomial additive term or recourse. 
Next, we give tight $(\alpha+\epsilon)$-a.c.r algorithms, with both 
additive term and recourse polynomial in $\epsilon^{-1}$. Our key idea 
is to use an LP that acts both as a \emph{gap-revealing LP} for a lower bound, and also as a \emph{factor-revealing LP} (as well as an inspiration) for our algorithm. 

\subsection{Impossibility results}

\iffull
We begin by observing that any algorithm with absolute
competitive ratio better than $3/2$ cannot have recourse cost $o(n)$.
Consider two instances, both having $2n-4$ items of size $1/n$ and one instance also having $4$
items of size $1/2+1/2n$. Alternating between these two instance by inserting/deleting the latter 4 items,
we get the following.
\begin{obs}\label{thm:3-halfs-lb}
	For all $\eps>0$, any $\left(\frac{3}{2}-\eps \right)$-competitive online
	bin packing algorithm must have $\Omega(n)$ recourse, where $n$ is the
	maximum number of items encountered by the algorithm.
\end{obs}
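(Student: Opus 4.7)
The plan is to use the two instances $\cI_A$ (only small items) and $\cI_B$ ($\cI_A$ plus four large items) and to show that each alternation between them forces $\Omega(n)$ small-item movements. First I would pin down the optima. Instance $\cI_A$ has $2n-4$ items of size $1/n$ with total volume $\approx 2$, so $OPT(\cI_A)=2$ (since $n$ such items fit per bin). For $\cI_B$, each large item has size strictly greater than $1/2$, so no two share a bin, forcing at least $4$ bins; meanwhile the residual capacity $\tfrac12-\tfrac{1}{2n}$ in each such bin admits $\lfloor(n-1)/2\rfloor$ small items, so $4\lfloor(n-1)/2\rfloor\ge 2n-4$ and $4$ bins already suffice to pack everything, giving $OPT(\cI_B)=4$. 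Consequently any $(\tfrac32-\eps)$-competitive algorithm uses at most $2$ bins on $\cI_A$ and at most $5$ bins on $\cI_B$, since $3-2\eps<3$ and $6-4\eps<6$ for all $\eps>0$.

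Next I would extract a structural bound on any valid packing of $\cI_B$. Of the at most $5$ bins in such a packing, exactly $4$ contain a large item and hence at most $\lfloor(n-1)/2\rfloor$ additional small items each, while the one remaining bin (if used) contains only small items and holds at most $n$ of them. Hence the total number of small items lying in any two bins of such a packing is at most $n+\lfloor(n-1)/2\rfloor\le 3n/2$.

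For the recourse lower bound, I would examine the deletion step that transforms the algorithm's packing of $\cI_B$ into its packing of $\cI_A$. A small item contributes no movement cost exactly when its bin is identical before and after the deletion. Since the post-deletion packing uses at most $2$ bins, the number of small items that remain in place is at most the sum of small-item counts in at most $2$ bins of the pre-deletion packing, which by the previous paragraph is at most $3n/2$. Therefore at least $(2n-4)-3n/2=\Omega(n)$ small items must be relocated during the deletion, which involves only $4$ updates. Alternating between $\cI_A$ and $\cI_B$ thus yields amortized recourse $\Omega(n)$ per update, proving the claim.

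The main technical point is a clean bookkeeping argument: precisely counting items whose bin assignment changes across the $\cI_B\to\cI_A$ transition, where the bound of $2$ on the post-deletion bin count caps the number of ``surviving'' bins and hence the number of unchanged items by the top-two quantity estimated above. Everything else is elementary arithmetic on item sizes. Symmetrically, the $\cI_A\to\cI_B$ transition admits the same lower bound: in state $\cI_A$ the $2n-4$ small items almost fill $2$ bins, so inserting $4$ large items within a total of $5$ bins forces one of those bins to be nearly entirely evacuated to make room for a large item.
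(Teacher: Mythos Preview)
Your proof is correct and follows essentially the same approach as the paper: alternate between the two instances and show that each alternation forces $\Omega(n)$ small-item moves over $O(1)$ updates. The only cosmetic difference is that the paper's (commented-out) argument analyzes the $\cI_A\to\cI_B$ transition via a volume argument (one of the two nearly-full small-item bins must absorb a large item, displacing $\Omega(1)$ volume and hence $\Omega(n)$ items), whereas you primarily analyze the reverse transition $\cI_B\to\cI_A$ via a direct count of how many items can survive in place given that only two bins remain. Both arguments are valid and equally elementary.
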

\fi

\ifshort
Alternating between two instances, where both have $2n-4$ items of size
$1/n$ and one instance also has $4$ items of size $1/2+1/2n$, we get
that any $\left(\frac{3}{2}-\eps \right)$-competitive online bin packing
algorithm must have $\Omega(n)$ recourse. 
\fi
However, this only rules out
algorithms with a \emph{zero} additive term.
\citet{balogh2008lower} showed that any $O(1)$-recourse algorithm
(under unit movement costs) with $o(n)$ additive term must have \acr at
least $\alpha \bound$.  We strengthen both this impossibility
result by showing the need for a large additive term or recourse to achieve any
\acr below $\alpha$.  Specifically,
arbitrarily small polynomial additive terms imply near-linear recourse.
Our proof is shorter and
simpler than in~\cite{balogh2008lower}.
As an added benefit, the LP we use to bound the competitive ratio
will inspire our algorithm in 
the next section.

\ifshort
\begin{restatable}[Unit Costs: Lower Bound]{thm}{unitFinalUb}
  \label{thm:final-lb}
  For any $\eps>0$ and $\frac{1}{2}> \delta>0$, for any algorithm
  $\mathcal{A}$ with \acr $\left(\alpha-\eps\right)$ with additive term
  $o(\eps\cdot n^{\delta})$, there exists an dynamic bin
  packing input with $n$ items on which $\mathcal{A}$ uses recourse at least
  $\Omega(\eps^2 \cdot n^{1-\delta})$ under unit movement cost.
\end{restatable}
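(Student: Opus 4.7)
The plan is to adapt and sharpen the gap-revealing LP of Balogh, B\'ek\'esi, and Galambos~\cite{balogh2008lower}: their LP shows that $\alpha$ is a lower bound on the \acr of any $O(1)$-recourse algorithm, and we turn its $\eps$-slack into a lower bound on the volume of small items that any $(\alpha-\eps)$-competitive algorithm must move per round.

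Instance. I would insert $N$ persistent small items each of size $\eps_0$, chosen so that the total small-item volume is $V = N\eps_0 = \Theta(n^{1-\delta})$; the assumption $\delta<1/2$ makes $V\gg n^\delta$, so the additive budget $\beta = o(\eps n^\delta) = o(\eps V)$ is asymptotically negligible compared with the LP slack derived below. Then the adversary repeatedly cycles through $K$ sub-phases: in sub-phase $j$ it inserts a batch of $n_j$ large items of size $s_j \in (1/2,1)$, waits for $\mathcal{A}$ to react, and deletes them. The multiset $\{(s_j, n_j)\}$ is chosen as the adversarial optimizer of the LP below.

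LP and dual at $\alpha$. Just before batch $j$ arrives, all that matters about $\mathcal{A}$'s packing of the small items is the profile $\{N_v\}_{v\in[0,1]}$ of small-only bins by spare capacity, with $S = \int N_v\,dv$, $B_j = \int_{s_j}^1 N_v\,dv$, and $V = \int (1-v)N_v\,dv$. The competitiveness bounds $S \le (\alpha-\eps)V + \beta$ and $S + (n_j - B_j) \le (\alpha-\eps)(V + n_j s_j) + \beta$ combine to give $B_j \ge n_j(1-(\alpha-\eps) s_j) - O(\beta)$. For a single fixed profile, the minimum $\rho$ is exactly $\alpha = 1 - 1/(W_{-1}(-2/e^3)+1)$, certified by a dual density on $(1/2,1)$ whose KKT condition reduces to the Lambert $W_{-1}$ identity; I would adapt the dual construction of~\cite{balogh2008lower} here.

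From LP slack to amortized recourse. At $\rho = \alpha - \eps$, the LP is infeasible against any single configuration; aggregating the per-sub-phase constraints via the near-optimal dual yields an infeasibility gap of $\Omega(\eps V) - O(\beta) = \Omega(\eps V)$. Since $\mathcal{A}$ may use different configurations $\{N_v^{(j)}\}$ in different sub-phases, this gap must be resolved by small-item moves: a transport argument matching each per-constraint violation against the cost of converting one bin's spare capacity into another shows that the per-round moved small-item volume is $\Omega(\eps V)$, and hence that $\Omega(\eps V/\eps_0)$ small items must be moved per round. After appropriate parameter balancing (in particular, taking $\sum_j n_j$ of order $n^\delta/\eps$ per round, which the LP permits while preserving infeasibility) and amortizing the moves against the total number of item updates over many rounds, one recovers the claimed amortized recourse of $\Omega(\eps^2 n^{1-\delta})$. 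The main obstacle is obtaining the dual certificate at the right quantitative sharpness: it must both identify the constant $\alpha$ via the Lambert $W$ identity and preserve an $\eps$-linearized infeasibility that survives the transport and amortization steps to yield exactly the $\eps^2 n^{1-\delta}$ factor.
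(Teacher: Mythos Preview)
Your high-level plan matches the paper's, but the quantitative choices break the argument at three linked points. First, for a constraint \eqref{CR-t} to bind at ratio $\alpha$, the adversary must insert $n_j \approx V/(1-s_j) = \Theta(V)$ large items of size $s_j$; with substantially fewer, the slack $(\alpha-\eps)V - S$ absorbs the constraint and the LP at $\alpha-\eps$ is feasible. Your choice $\sum_j n_j = \Theta(n^\delta/\eps)$ with $V = \Theta(n^{1-\delta})$ is far too small when $\delta<\tfrac12$. Second, the LP infeasibility gap you derive is $\Theta(\eps V)$ in \emph{bin count}, not moved volume: freeing one bin for a large item of size $s$ requires removing only $s - (\text{current free space})$ volume, which can be arbitrarily small unless sizes are $\eps$-discretized. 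The paper discretizes to the grid $\mathcal{S}_\eps$ and shows that at least $\Omega(\eps V)$ bins must each shed $\ge\eps$ volume, yielding only $\Omega(\eps^2 V)$ moved volume; your transport sketch does not establish this per-bin $\eps$ floor. Third---and consequently---the parametrization is inverted. Since the per-round update count is forced to be $\Theta(V)$ and the moved volume is $\Theta(\eps^2 V)$, the amortized unit recourse is $\Theta(\eps^2 V/\eps_0)$ divided by $\Theta(V)$, i.e., $\Theta(\eps^2/\eps_0)$. With $N\approx n$ small items and $V=N\eps_0$, this equals $\Theta(\eps^2 n/V)$, so one wants $V$ \emph{small}: the paper takes $V=\Theta(n^\delta)$ and $\eps_0=\Theta(n^{-(1-\delta)})$. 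Your choice $V=\Theta(n^{1-\delta})$ would give recourse $\Theta(\eps^2 n^\delta)$, the wrong exponent. (The additive-term worry that motivated your choice is moot: $o(\eps\, n^\delta)=o(\eps V)$ already when $V=\Theta(n^\delta)$.)

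The paper also avoids the aggregation and transport machinery entirely. Rather than cycling through sizes and appealing to a dual-weighted infeasibility, it observes that for \emph{any} fixed packing of the small items, LP optimality forces some constraint \eqref{CR-t} to hold with the reverse inequality; the adversary then \emph{adaptively} picks that single size $\ell$ and inserts $\Theta(V)$ items of size $\ell+\eps$. This directly yields the per-round movement bound without any transport argument, and sidesteps the issue your cycling scheme faces: the configuration $\{N^{(j)}_v\}$ the algorithm uses in sub-phase $j$ need not violate the size-$s_j$ constraint (only some constraint), so summing dual-weighted slacks over sub-phases does not obviously lower-bound movement between them.
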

\fi
\iffull
\unitFinalUb*
\fi

\noindent\textbf{The Instances:} The set of instances is the a natural one,
and was also considered by \cite{balogh2008lower}. Let $\nicefrac{1}{2}>
\delta>0$, $c=\nicefrac1\delta-1$, and let $B\geq \nicefrac{1}{\eps}$ be
a large integer. Our hard instances consist of \emph{small} items of
size $1/B^c$, and \emph{large} items of size $\ell$ for all sizes $\ell\in
\mathcal{S}_{\eps} \triangleq \{\ell=1/2+i\cdot \eps \mid i\in
\mathbb{N}_{>0},\, \ell \leq 1/\alpha\}$. Specifically, input $\mathcal{I}_s$ consists of $\lfloor B^{c+1}
\rfloor$ small items, and for each $\ell\in \mathcal{S}_{\eps}$, the input
$\mathcal{I}_{\ell}$ consists of $\lfloor B^{c+1} \rfloor$ small items
followed by $\lfloor \frac{B}{1-\ell} \rfloor$ size-$\ell$ items.
The optimal bin packings for $\mathcal{I}_s$ and $\mathcal{I}_{\ell}$ require
precisely $OPT(\mathcal{I}_s)=B$ and $OPT(\mathcal{I}_{\ell})=\lceil
\frac{B}{1-\ell}\rceil$ bins respectively. Consider any fully-dynamic bin
packing algorithm $\mathcal{A}$ with limited recourse and bounded
additive term. When faced with input $\mathcal{I}_s$, algorithm $\mathcal{A}$ needs to
distribute the small items in the available bins almost uniformly. And
if this input is extended to $\mathcal{I}_{\ell}$ for some $\ell \in
\mathcal{S}_{\eps}$, algorithm $\mathcal{A}$ needs to move many small items to
accommodate these large items (or else use many new bins). Since $\calA$
does not know the value of $t$ beforehand, it cannot ``prepare''
simultaneously for all large sizes $\ell\in \mathcal{S}_{\eps}$.

\begin{wrapfigure}{r}{0.425\textwidth}
		\vspace{-.3cm}	

	\centering
	\capstart
	
	\includegraphics[width=2.in]{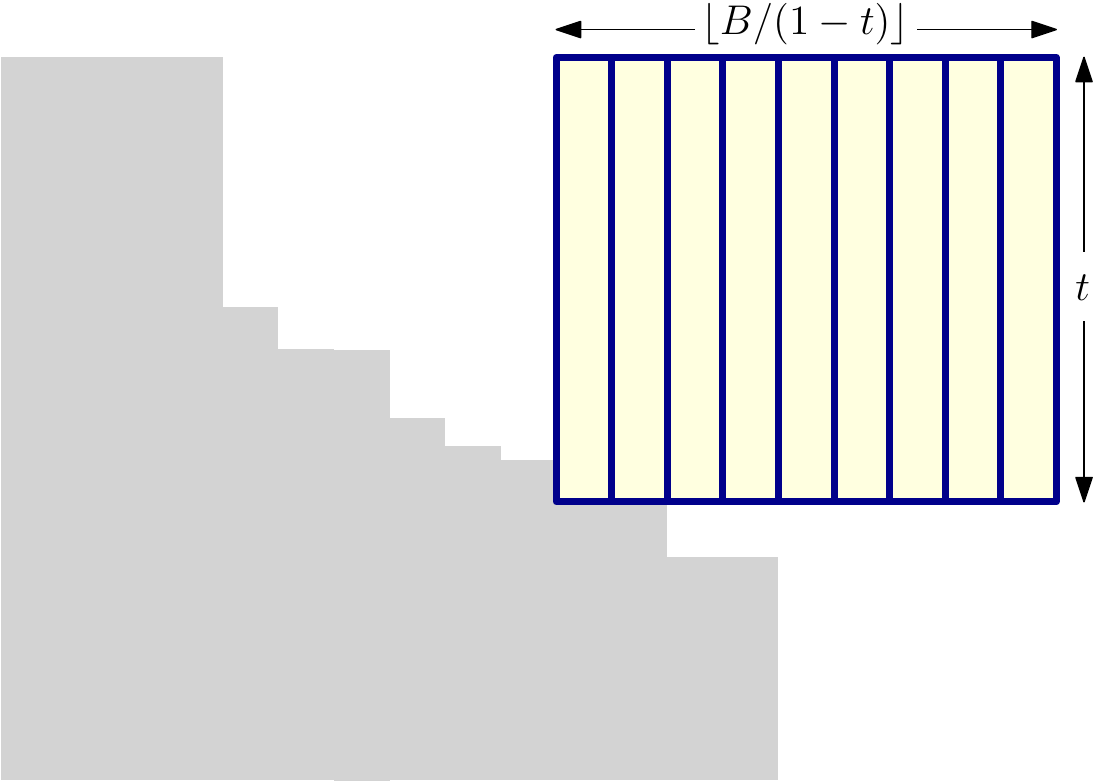}
	\caption{\small A packing of instance $\mathcal{I}_{\ell}$. The $\ell$-sized items (in yellow) are packed ``on top'' of the small items (in grey).}
	\label{fig:lowerbound}
	\vspace{-.35cm}
\end{wrapfigure}

As a warm-up we show that the linear program~\eqref{lpeps} below gives a lower bound on the \emph{absolute}
competitive ratio $\alpha_\eps$ of any algorithm $\mathcal{A}$ with no
recourse. Indeed, instantiate the variables as follows. On input
$\mathcal{I}_s$, let $N_x$ be the number of bins in which 
$\mathcal{A}$ keeps free space in the range $ [x,x+\eps)$ for each $x
\in \{0\}\cup \mathcal{S}_{\eps}$. Hence the total volume packed is at
most $N_0+\sum_{x\in \mathcal{S}_{\eps}} (1-x)\cdot N_x$.  This must
be at least $Vol(\mathcal{I}_s) 
\geq B-1/B^c$, implying constraint~\eqref{Vol-eps}.  Moreover, as
$OPT(\mathcal{I}_s)=B$, the $\alpha_\eps$-competitiveness of
$\mathcal{A}$ implies constraint~\eqref{small-eps}. Now if instance
$\mathcal{I}_s$ is extended to $\mathcal{I}_{\ell}$, since 
$\mathcal{A}$ moves no items, these $\ell$-sized items are placed either in
bins counted by $N_x$ for $x\geq \ell$ or in new bins. Since
$\mathcal{A}$ is $\alpha_\eps$-competitive and $OPT(\mathcal{I}_{\ell})\leq
\lceil \frac{B}{1-\ell}\rceil$ we get constraint~\eqref{CR-t}. Hence the
optimal value of~\eqref{lpeps} is a valid lower bound on the competitive
ratio~$\alpha_\eps$.

\vspace{-0.6cm}

  \begin{alignat}{5}
    \text{minimize }  & \alpha_{\eps} & \tag{LP$_{\eps}$}  \label{lpeps}\\
    \text{s.t. } & \ts N_0 + \sum_{x \in \mathcal{S}_{\eps}} (1-x)\cdot N_x &&\geq B-1/B^c & \tag{Vol$_{\eps}$} \label{Vol-eps} \\
    & \ts N_0 + \sum_{ x \in \mathcal{S}_{\eps}} N_x && \leq \alpha_{\eps} \cdot B  \tag{small$_{\eps}$} \label{small-eps}\\
    & \ts N_0 + \sum_{x \in \mathcal{S}_{\eps}, x \leq \ell-\eps} N_x
    + \Big \lfloor \frac{B}{1-\ell} \Big \rfloor && \ts \leq \alpha_{\eps} \cdot \Big \lceil \frac{B}{1-\ell} \Big \rceil  & \quad \forall \ell\in \mathcal{S}_{\eps} \tag{CR$_\eps$} \label{CR-t}  \\
    & N_x\geq 0 & \notag
  \end{alignat}

The claimed lower bound on the \acr of recourse-less algorithms follows from Lemma
\ref{lem:lp-opt-lb-ub-combo}.

\begin{restatable}{lem}{LPopt}\label{lem:lp-opt-lb-ub-combo}
	The optimal value $\alphas_\eps$ of \eqref{lpeps} satisfies
	$\alphas_\eps \in [\alpha - O(\eps), \alpha + O(\eps)]$.
\end{restatable}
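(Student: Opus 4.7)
The plan is to sandwich $\alphas_\eps$ between $\alpha - O(\eps)$ and $\alpha + O(\eps)$ using matching primal and dual-style constructions, both read off from the continuous ($\eps \to 0$) limit of~\eqref{lpeps}. In that limit, the minimizer is characterized by making every (CR$_\eps$) constraint tight as a function of $\ell$, which by differentiating in $\ell$ forces the density $f(\ell) = (\alpha-1)/(1-\ell)^2$ on $[1/2, 1/\alpha]$ and the constant term $k_0 = 2(\alpha-1)$; tightness of (Vol$_\eps$) then yields $(\alpha-1)\left[2 + \ln\frac{\alpha}{2(\alpha-1)}\right] = 1$, which after the substitution $y = -1/(\alpha-1) - 1$ becomes $y\,e^y = -2/e^3$, i.e., the Lambert-$W$ identity defining $\alpha$ in the statement.

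For the upper bound $\alphas_\eps \leq \alpha + O(\eps)$, I would discretize this extremal profile: take $N_0 = 2(\alpha + C_1\eps - 1)\,B$ and $N_{\ell_i} = (\alpha + C_1\eps - 1)\,B\eps/(1-\ell_i)^2$ for a positive constant $C_1$, aiming for $\alpha_\eps = \alpha + C_2\eps$ with $C_2 > C_1$ large enough. Verifying the three constraint families reduces to the Riemann-sum identities $\eps \sum_{i<j}(1-\ell_i)^{-2} = (1-\ell_{j-1})^{-1} - 2 + O(\eps)$, $\eps \sum_i (1-\ell_i)^{-1} = \ln\frac{\alpha}{2(\alpha-1)} + O(\eps)$, and $\eps \sum_i (1-\ell_i)^{-2} = \alpha - 2 + O(\eps)$. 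The positive slack $C_1\eps$ in the $N$'s makes (Vol$_\eps$) hold with room to spare, the extra slack $(C_2 - C_1)\eps$ in $\alpha_\eps$ absorbs the errors in (CR$_\eps$), and (small$_\eps$) is automatic from $\alpha(\alpha-1) < \alpha$.

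For the lower bound $\alphas_\eps \geq \alpha - O(\eps)$, I would take the weighted combination of constraints with weights $w_1 = 1/2 + \eps$ and $w_i = \eps$ for $2 \leq i \leq k$ on (CR$_\eps$) at $\ell_i$, together with $w_{k+1} = 1 - \ell_k$ on (small$_\eps$). These weights are engineered so that in the weighted LHS the coefficient of $N_0$ is $\sum_{i=1}^{k+1} w_i = 1$ and the coefficient of $N_{\ell_j}$ is $w_{k+1} + \sum_{i>j,\,i \leq k} w_i = 1 - \ell_j$, exactly matching the LHS of (Vol$_\eps$). Combining with (Vol$_\eps$) and simplifying the weighted RHS via the same Riemann-sum estimates yields
\begin{equation*}
(\alpha_\eps - 1)(1 + \mu) + \alpha_\eps \cdot \tfrac{\alpha-1}{\alpha} \;\geq\; 1 - O(\eps),
\end{equation*}
where $\mu := \ln\frac{\alpha}{2(\alpha-1)}$. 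At $\alpha_\eps = \alpha$ the LHS equals $(\alpha-1)(2+\mu) = 1$, and its slope in $\alpha_\eps$ is $(1+\mu) + (\alpha-1)/\alpha > 0$, so the inequality forces $\alpha_\eps \geq \alpha - O(\eps)$.

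The main obstacle is discovering the right primal profile and dual weights; both are dictated by the KKT conditions of the continuous LP and rely on every (CR$_\eps$) constraint being tight at the continuous optimum. Once these are in hand, the rest is routine $O(\eps)$ bookkeeping: Riemann-sum errors of size $O(\eps)$ per estimate, floor/ceiling rounding of size $O(1)$ per (CR$_\eps$) constraint (which is $O(\eps)$ relative to the main term of size $B$, since $B \geq 1/\eps$), and the $O(\eps)$ boundary mismatch $\ell_k \in [1/\alpha - \eps, 1/\alpha]$, all absorbed into the final additive $O(\eps)$ slack.
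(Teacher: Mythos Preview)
Your proposal is correct and takes essentially the same approach as the paper. Both the primal profile (density $(\alpha-1)/(1-\ell)^2$ on $[1/2,1/\alpha]$ with constant term $2(\alpha-1)$) and the dual weights ($\approx 1/2$ on the first (CR$_\eps$) constraint, $\eps$ on the rest, and $\approx 1-1/\alpha$ on (small$_\eps$)) coincide with the paper's constructions up to the scale factor $\alpha(\alpha-1)$; the only presentational difference is that the paper writes out the dual LP explicitly and verifies feasibility, whereas you phrase the same dual certificate as a direct weighted-combination argument.
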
	

\iffull
\subsubsection{\ref{lpeps} as a Gap-Revealing LP}
\label{sec:lpeps-factor-rev}
\fi
\ifshort
To extend the above argument to the fully-dynamic case, we observe that any solution to \eqref{lpeps} defined by packing of $\mathcal{I}_s$ as above must satisfy some constraint (\ref{CR-t}) for some $\ell\in \mathcal{S}_{\eps}$ with equality, implying a competitive ratio of at least $\alpha_\epsilon$. Now, to beat this bound, a fully-dynamic algorithm must move at least $\epsilon$ volume of small items from bins which originally had less than $x-\epsilon$ free space. As small items have size $1/B^c=1/n^\delta$, this implies that $\Omega(\epsilon n^\delta)$ small items must be moved for every bin affected by such  movement. 
This argument yields Lemma \ref{lem:lp-valid-lb}, which together with Lemma \ref{lem:lp-opt-lb-ub-combo} implies Theorem \ref{thm:final-lb}. 

\begin{restatable}{lem}{unitLpValidLb}
  \label{lem:lp-valid-lb}
  For all $\eps>0$ and $\frac{1}{2}> \delta>0$, if $\alphas_\eps$ is the
  optimal value of \eqref{lpeps}, then any fully-dynamic bin packing
  algorithm $\mathcal{A}$ with \acr 
  $\alphas_\eps-\eps$ and additive term $o(\eps^2\cdot n^\delta)$
  has recourse $\Omega(\eps^2\cdot n^{1-\delta})$ under unit movement
  costs.
\end{restatable}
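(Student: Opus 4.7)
My plan is to exploit \eqref{lpeps} as a \emph{gap-revealing} LP: at any moment when only small items are present, bucketing bins by free space yields a feasible LP point, so the algorithm beating the LP optimum in the (small) direction forces some (CR$_{\ell}$) to be violated, and this violation must be paid for when the corresponding $\ell$-items later arrive. The adversarial input inserts $\mathcal{I}_s$ once and then cycles repeatedly through all $\ell \in \mathcal{S}_\eps$, each cycle inserting and immediately deleting the batch of $\lfloor B/(1-\ell) \rfloor$ size-$\ell$ items.

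First I fix an instant just before an $\ell$-batch arrives and let $N_x$ count the bins whose current free space lies in $[x,x+\eps)$ for $x \in \{0\} \cup \mathcal{S}_\eps$. The total small-item volume forces (Vol$_\eps$) up to $O(1/B^c)$ slack, and the $(\alphas_\eps - \eps)$-competitive guarantee on $\mathcal{I}_s$ with additive term $o(\eps^2 n^\delta) = o(\eps^2 B)$ forces (small$_\eps$) at $\alpha_\eps = \alphas_\eps - \eps + o(\eps^2)$. Invoking the LP-optimality of $\alphas_\eps$ from Lemma~\ref{lem:lp-opt-lb-ub-combo}, LP duality then guarantees that for some $\ell^* \in \mathcal{S}_\eps$ constraint (CR$_{\ell^*}$) must be violated by at least $\Omega(\eps B)$.

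Next I translate this LP violation into forced small-item movement. When the $\ell^*$-batch arrives, every size-$\ell^*$ item needs a bin with free space $\ge \ell^*$; the $(\alphas_\eps - \eps)$-ratio and additive-term budget cap the number of new bins, so at least $\Omega(\eps B)$ bins whose prior free space was $< \ell^*$ must be ``converted'' by evicting at least $\eps$ worth of small-item volume each. Since a small item has size $1/B^c = 1/n^{1-\delta}$, each conversion costs $\Omega(\eps B^c)$ small-item moves, so a bad phase incurs $\Omega(\eps B) \cdot \Omega(\eps B^c) = \Omega(\eps^2 n)$ moves. Further exploiting the dual optimality structure of \eqref{lpeps} — at the LP optimum, $\Theta(1/\eps)$ of the (CR$_\ell$) constraints are simultaneously tight — I would then argue that over one full cycle the algorithm must evict $\Omega(\eps B)$ cumulative small-item volume, i.e., $\Omega(\eps n)$ small-item moves per cycle, against $O(|\mathcal{S}_\eps| \cdot B) = O(n^\delta / \eps)$ updates per cycle. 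Running enough cycles that the one-time $n$ insertions of $\mathcal{I}_s$ are dominated yields amortized recourse $\Omega(\eps n) / (n^\delta / \eps) = \Omega(\eps^2 n^{1-\delta})$.

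The hardest step will be the per-cycle $\Omega(\eps n)$ movement bound. Single-state LP duality alone only produces one bad $\ell^*$, which gives only $\Omega(\eps^2 n)$ per cycle and a weaker $\Omega(\eps^3 n^{1-\delta})$ amortized bound. Reaching the advertised $\Omega(\eps^2 n^{1-\delta})$ requires combining the $\Theta(1/\eps)$-way simultaneous tightness at the LP optimum with a potential-style argument showing that the algorithm's adaptive repacking between phases cannot absorb all these $\Omega(\eps B)$ violations at once.
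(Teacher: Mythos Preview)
Your overall strategy is the paper's: bucket bins by free space to get a point satisfying \eqref{Vol-eps} and \eqref{small-eps}, use LP optimality to find a violated \eqref{CR-t}, and show that handling the corresponding large-item batch forces $\Omega(\eps B)$ bins each to shed $\Omega(\eps)$ small-item volume, i.e.\ $\Omega(\eps^2 B\cdot B^c)=\Omega(\eps^2 n)$ moves. You also correctly recognize (more explicitly than the paper's write-up) that repetition is needed to amortize away the one-time cost of inserting $\mathcal{I}_s$.

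The genuine gap is your choice of an \emph{oblivious} adversary that cycles through every $\ell\in\mathcal{S}_\eps$. As you concede, this only locates one bad $\ell^\star$ per snapshot, giving $\Omega(\eps^2 n)$ moves against $O(|\mathcal{S}_\eps|\cdot B)=O(B/\eps)$ updates per cycle and hence only $\Omega(\eps^3 n^{1-\delta})$. Your proposed rescue---that $\Theta(1/\eps)$ of the \eqref{CR-t} constraints are simultaneously tight at the LP optimum, so a potential argument should extract $\Omega(\eps n)$ moves per cycle---is not carried out, and it is not clear it goes through: the algorithm's packing need only satisfy \eqref{Vol-eps} and \eqref{small-eps}, and may be far from the LP optimizer in the remaining coordinates, so nothing forces many \eqref{CR-t} to be near-tight simultaneously.

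The fix is to make the adversary \emph{adaptive}, which is what the paper does. At the start of each round, inspect the algorithm's current packing of $\mathcal{I}_s$; the LP-optimality argument (if every \eqref{CR-t} were strict one could perturb the $N_x$ and reduce $\alphas_\eps$) guarantees some $\ell$ with the reversed inequality, and the adversary inserts and then deletes only \emph{that} batch of size-$(\ell+\eps)$ items. Now every round is a bad round: $\Omega(\eps^2 n)$ forced moves against $O(B)=O(n^\delta)$ updates. After enough rounds to dominate the initial $n$ insertions, the amortized recourse is $\Omega(\eps^2 n/B)=\Omega(\eps^2 n^{1-\delta})$, with no potential argument needed.
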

\fi

\iffull
\unitLpValidLb*
\begin{proof}	
  For any $x\in \mathcal{S}_{\eps} \cup \{0\}$, again define $N_x$ as the
  number of bins with free space in the range $[x,x+\eps)$ when
  $\mathcal{A}$ faces input $\mathcal{I}_s$.  Inequality~\eqref{Vol-eps}
  is satisfied for the same reason as above.  Recall that $B$ is
  $\Theta(n^\delta)$.  As $\mathcal{A}$ is
  $(\alphas_\eps-\Omega(\eps))$-asymptotically competitive with additive
  term $o(\eps\cdot n^\delta)$, i.e., $o(\eps\cdot B)$, and
  $OPT(\mathcal{I}_s)=B$, we have $N_0 + \sum_{ x \in
    \mathcal{S}_{\eps}} N_x \leq (\alphas_{\eps}-\Omega(\eps)+o(\eps))
  \cdot B \leq \alphas_\eps\cdot B$. That is, the $N_x$'s satisfy
  constraint \eqref{small-eps} with $\alpha_\eps = \alphas_\eps$.

  We now claim that there exists a $t \in \mathcal{S}_{\eps}$ such that
  \begin{equation}\label{ineq:CR-opposite}
    N_0 + \sum_{x \in \mathcal{S}_{\eps},  x \leq  t-\eps} N_x +
    \Big\lfloor \frac{B}{1-t} \Big\rfloor \geq \alphas_{\eps} \cdot
    \Big\lceil \frac{B}{1-t} \Big\rceil 
  \end{equation}
  holds (notice the opposite inequality sign compared to constraint
  \eqref{CR-t}). Suppose not. Then the quantities $N_0, N_x $ for $x \in
  \mathcal{S}_{\eps}$, and $\alphas_\eps$ strictly satisfy the
  constraints~\eqref{CR-t}.  If they also strictly satisfy the
  constraint~\eqref{small-eps}, then we can maintain feasibility and
  slightly reduce $\alphas_\eps$, which contradicts the definition of
  $\alphas_\eps$.  Therefore assume that constraint~\eqref{small-eps} is
  satisfied with equality. Now two cases arise: (i) All but one variable
  among $\{N_0\} \cup \{N_x \mid x \in \mathcal{S}_{\eps}\}$ are zero.  If
  this variable is $N_0$, then tightness of~\eqref{small-eps} implies
  that $N_0 = \alphas_\eps B$. But then we satisfy~\eqref{Vol-eps} with
  slack, and so, we can reduce $N_0$ slightly while maintaining
  feasibility. Now we satisfy all the constraints strictly, and so, we
  can reduce $\alphas_\eps$, a contradiction. Suppose this variable
  happens to be $N_x$, where $x \in \mathcal{S}_{\eps}$. So, $N_x =
  \alphas_\eps B$. We will show later in Theorem~\ref{lem:lp-opt-ub}
  that $\alphas_\eps \leq 1.4$. Since $(1-x) \leq 1/2$, it follows that
  $(1-x) N_x \leq 0.7 B$, and so we satisfy~\eqref{Vol-eps} with
  slack. We again get a contradiction as argued for the case when $N_0$
  was non-zero, (ii) There are at least two non-zero variables among
  $\{N_0\} \cup \{N_x \mid x \in \mathcal{S}_{\eps}\}$ -- let these be
  $N_{x_1}$ and $N_{x_2}$ with $x_1 < x_2$ (we are allowing $x_1$ to be
  0). Now consider a new solution which keeps all variables $N_x$
  unchanged except for changing $N_{x_1}$ to $N_{x_1} +
  \frac{\eta}{1-x}$, and $N_{x_2}$ to $N_{x_2} - \frac{\eta}{1-x}$,
  where $\eta$ is a small enough positive constant (so that we continue
  to satisfy the constraints~\eqref{CR-t} strictly).  The LHS
  of~\eqref{Vol-eps} does not change, and so we continue to satisfy
  this. However LHS of~\eqref{small-eps} decreases strictly. Again, this
  allows us to reduce $\alphas_\eps$ slightly, which is a
  contradiction. Thus, there must exist a $t$ which
  satisfies~\eqref{ineq:CR-opposite}. We fix such a $t$ for the
  rest of the proof.

  Let $\mathcal{B}$ denote the bins which have less than $t$ free space.
  So, $|\mathcal{B}| = N_0 + \sum_{x \in {\mathcal N}_\eps: x \leq t -
    \eps} N_x. $ Now, we insert $\lfloor \frac{B}{1-t-\eps} \rfloor$
  items of size $t+\eps$. (It is possible that $t = 1/\alpha$, and so $t
  + \eps \notin \mathcal{S}_{\eps}$, but this is still a valid instance).  We
  claim that the algorithm must move at least $\eps$ volume of small
  items from at least $\eps B$ bins in $\mathcal{B}$. Suppose not. Then
  the large items of size $t+\eps$ can be placed in at most $\eps B$
  bins in $\mathcal{B}$. Therefore, the total number of bins needed for
  $\mathcal{I}_{\ell}$ is at least $N_0 + \sum_{x \in {\mathcal N}_\eps: x
    \leq t - \eps} N_x - \eps B + \Big\lfloor \frac{B}{1-t}
  \Big\rfloor$, which by inequality~\eqref{ineq:CR-opposite}, is at
  least $(\alphas_\eps - O(\eps)) \cdot OPT(\mathcal{I}_{t+\eps})$,
  because $OPT(\mathcal{I}_{t+\eps}) = \Big\lceil \frac{B}{1-t-\eps}
  \Big\rceil = \Big\lceil \frac{B}{1-t} \Big\rceil + O(\eps B)$. But we
  know that $\mathcal{A}$ is
  $(\alphas_\eps-\Omega(\eps))$-asymptotically competitive with additive
  term $o(\eps\cdot n^\delta)$ (which is $o(\eps\cdot
  OPT(\mathcal{I}_{t+\eps}))$. So it should use at most
  $(\alphas_\eps-\Omega(\eps)+o(\eps)) OPT(\mathcal{I}_{t+\eps})$ bins,
  which is a contradiction. Since each small item has size $1/B^c$, the
  total number of items moved by the algorithm is at least $\eps^2
  B/B^c$. This is $\Omega(\eps \cdot n^{1-\delta})$, because $\eps \geq
  1/B$, and $B^c$ is $\Theta(n^{1-\delta})$.
\end{proof}	
\fi

%

\iffull
\begin{lem}
  \label{lem:lp-opt-lb}
  The optimal value of \eqref{lpeps}, $\alphas_\eps$, satisfies
  $\alphas_\eps \geq \alpha - O(\eps)$.
\end{lem}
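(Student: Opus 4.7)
The plan is to invoke weak LP duality: I would exhibit a feasible solution to the dual of \eqref{lpeps} whose objective equals $\alpha - O(\eps)$, which then lower-bounds the primal minimum $\alphas_\eps$. Since $\alpha_\eps$ is a free primal variable with objective coefficient $1$, the dual has variables $\mu\ge 0$ for \eqref{Vol-eps}, $\lambda\ge 0$ for \eqref{small-eps}, and $\gamma_\ell\ge 0$ for each \eqref{CR-t}; the coefficient of $\alpha_\eps$ gives the normalization $\lambda B + \sum_\ell \gamma_\ell \lceil B/(1-\ell)\rceil = 1$, and the nonnegativity of $N_0$ and of each $N_x$ ($x\in \mathcal{S}_\eps$) gives the inequalities $\lambda + \sum_{\ell\in\mathcal{S}_\eps} \gamma_\ell \ge \mu$ and $\lambda + \sum_{\ell\ge x+\eps} \gamma_\ell \ge \mu(1-x)$. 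The dual objective to maximize is $\mu(B-1/B^c)+\sum_\ell \gamma_\ell\lfloor B/(1-\ell)\rfloor$.

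I would guess the right dual values from the continuous ($\eps\to 0$) primal. The continuous relaxation admits the packing $N_0=2(\alpha-1)B$ with density $N(x)=(\alpha-1)B/(1-x)^2$ on $[1/2,1/\alpha]$ of value exactly $\alpha$: a short integration makes \eqref{small-eps} and every \eqref{CR-t} tight, and \eqref{Vol-eps} becomes tight precisely because of the defining identity $(\alpha-1)[2 + \ln(\alpha/(2(\alpha-1)))] = 1$ (equivalently $-\tfrac{\alpha}{\alpha-1}e^{-\alpha/(\alpha-1)}=-2/e^3$, the Lambert-$W$ equation characterizing $\alpha$). Complementary slackness then forces the dual to be tight at $N_0$ and at every $N_x$; taking successive differences of these tight constraints pins $\gamma_\ell=\mu\eps$ for each $\ell\in\mathcal{S}_\eps$ with $\ell>1/2+\eps$, a ``point mass'' $\gamma_{1/2+\eps}=\mu\eps+\mu/2$ to absorb the jump at the $N_0$ constraint, and $\lambda=\mu(\alpha-1)/\alpha$. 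Applying the Lambert-$W$ identity once more to evaluate the resulting $\int d\ell/(1-\ell)$ in the normalization yields $\mu B = \alpha(\alpha-1)\cdot(1\pm O(\eps))$.

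Substituting back, a second use of the same identity gives $\sum_\ell \gamma_\ell B/(1-\ell) = \mu B\,(2-\alpha)/(\alpha-1) \pm O(\eps) = \alpha(2-\alpha)\pm O(\eps)$, so the dual value is $\mu B + \alpha(2-\alpha)\pm O(\eps) = \alpha(\alpha-1)+\alpha(2-\alpha)\pm O(\eps) = \alpha\pm O(\eps)$, as required. The main obstacle will be the $O(\eps)$ discretization slack: the dual inequalities hold only up to errors from replacing the continuous density $\mu$ by point masses on the $\eps$-grid $\mathcal{S}_\eps$, and from the floor/ceiling terms in the normalization and objective. I would absorb that slack by shrinking $\mu$ by a factor $1-O(\eps)$; this restores strict dual feasibility while losing only an additive $O(\eps)$ in the final bound on $\alphas_\eps$.
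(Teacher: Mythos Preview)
Your proposal is correct and is essentially the paper's proof. The paper first normalizes \eqref{lpeps} (dividing through by $B$ and shifting $\alpha_\eps$ by $1$) to obtain a cleaner LP and then exhibits the dual solution $Z=\alpha(\alpha-1)$, $q_0=(\alpha-1)^2$, $q_{1/2+\eps}=\alpha(\alpha-1)/2$, $q_t=\alpha(\alpha-1)\eps$ for $t\ge 1/2+2\eps$; translating back via $Z=\mu B$, $q_0=\lambda B$, $q_\ell=\gamma_\ell B$, these are exactly the values you derive (up to $O(\eps)$). The only cosmetic differences are that you work with the unnormalized LP directly, you \emph{derive} the dual values from complementary slackness with the continuous primal $N_0=2(\alpha-1)B$, $N(x)=(\alpha-1)B/(1-x)^2$ rather than stating them, and you absorb the $O(\eps)$ slack by shrinking $\mu$ whereas the paper increases $q_0$; both fixes work, though note that if you shrink all of $\mu,\lambda,\gamma_\ell$ proportionally you must handle the normalization constraint (it becomes $\le 1$ rather than $=1$, which is fine since $\alpha_\eps\ge 0$).
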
	

\begin{proof}
  We slightly modify~\eqref{lpeps} to make it easier to work with---this
  will affect its optimal value only by $O(\eps)$. 
  \begin{enumerate}
  \item[(i)] Change the $B - 1/B^c$ term in the RHS of
    inequality~\eqref{Vol-eps} to $B$, and remove the floor and ceiling
    in the inequalities~\eqref{CR-t}. Since $B \geq 1/\eps$, this
    affects the optimal value by $O(\eps)$.
  \item[(ii)] Divide the inequalities through by $B$, and introduce
    new variables $n_x$ for $N_x/B$, and
  \item[(iii)] Replace $\alpha_\eps - 1$ by a new variable
    $\alpha'_\eps$, and change the objective value to $\alpha'_\eps+1$.
  \end{enumerate}
  This gives the LP~\eqref{lpepsnew}, whose
  optimal value is $\alphas_\eps + O(\eps)$. We now give a feasible
  solution for the dual linear program~\eqref{dualeps} whose objective
  value is $\alpha - O(\eps)$. This proves the desired result.
  
  \begin{figure}[t]
    \fbox{
      \begin{minipage}{.5\textwidth}
        \vspace{-0.235cm}
        \begin{alignat}{5}
          \text{min. }  & \alpha_{\eps}' +1& \tag{LPnew$_{\eps}$}  \label{lpepsnew}\\
          & n_0 + \sum_{x \in \mathcal{S}_{\eps}} (1-x)\cdot n_x &&\geq 1 &  \notag \\
          & n_0 + \sum_{ x \in \mathcal{S}_{\eps}} n_x - \alpha'_{\eps} && \leq 1 \notag  \\		
          &  n_0 + \sum_{x \in \mathcal{S}_{\eps},  x \leq t-\eps} n_x  && \leq \frac{\alpha_\eps'}{1-t}   & \quad \forall t\in \mathcal{S}_{\eps} \notag  \\
          & n_x\geq 0 & \notag
        \end{alignat}
      \end{minipage}}
    \fbox{
      \begin{minipage}{.5\textwidth}
        \begin{alignat}{5}
          \text{max. }  & Z - q_0+1 && \tag{Dual$_{\eps}$}  \label{dualeps}\\
          & q_0 + \sum_{t \in \mathcal{S}_{\eps} } \frac{q_t}{1-t} &&\leq 1 \tag{d1}\label{d1} \\
          &   q_0  + \sum_{t \in \mathcal{S}_{\eps} } q_t  && \geq Z    \tag{d2}\label{d2} \\
          &  q_0 + \sum_{t \geq x+\eps, t \in \mathcal{S}_{\eps}} q_t &&\geq (1-x)\cdot Z     \qquad  \forall x \in \mathcal{S}_{\eps} \tag{d3}\label{d3}
        \end{alignat}
      \end{minipage}}
    \caption{The modified LP and its dual program}
    \label{fig:newlp}	
  \end{figure}
	
  We start with a nearly-feasible dual solution to \eqref{dualeps} and
  later modify it to obtain a feasible solution. Set $Z =
  \alpha(\alpha-1)$, $q_0 = (\alpha-1)^2$, $q_{\frac12 + \eps} = \alpha
  (\alpha-1)/2$, and $q_t = \alpha(\alpha-1)\eps$ for all $t \in
  \mathcal{S}_{\eps}$ with $t \geq \frac{1}{2}+2\eps$.  The objective
  value of~\eqref{dualeps} with respect to this solution is exactly
  $\alpha(\alpha-1) - (\alpha-1)^2 +1= \alpha$.  We will now show that
  it (almost) satisfies the constraints. For sake of brevity, we do not
  explicitly write that variable $t$ takes values in $\mathcal{S}_{\eps}$
  in the limits for the sums below. First, consider constraint~\eqref{d1}:
  \begin{align*}
    q_0 + \sum_{t=\frac12 + \eps}^{\frac{1}{\alpha}} \frac{q_t}{1-t}
    &= q_0 + \frac{q_{\frac12+\eps} }{\frac12 -\eps} + \sum_{\ell > \frac12 + \eps}^{\frac{1}{\alpha}} \frac{q_t}{1-t} \\
    &\leq q_0 + 2q_{\frac12+\eps} + \sum_{\ell > \frac12 + \eps}^{\frac{1}{\alpha}} \frac{q_t}{1-t} \\
    &\leq (\alpha-1)^2 + \alpha(\alpha-1) + \int_{\frac12}^{\frac{1}{\alpha}} \frac{\alpha (\alpha-1) dx}{1-x} \\
    &= (\alpha-1)^2 + \alpha(\alpha-1) + \alpha (\alpha-1) \big(\ln(\frac12) - \ln(1-\frac{1}{\alpha}) \big)  \\
    &= (\alpha-1)^2 + \alpha(\alpha-1) + \alpha (\alpha-1)
    \Big(\frac{3-2\alpha}{\alpha-1} \Big) \qquad = 1
  \end{align*}
  where used~\eqref{eqn:alpha}, the definition of $\alpha$, in the
  penultimate equation. Next, consider constraint~\eqref{d2}.
  \begin{align*}
    q_0 + \sum_{t=\frac12 + \eps}^{\frac{1}{\alpha}} q_t
    &= q_0 +  q_{\frac12 + \eps} + \sum_{t>\frac12 + \eps}^{\frac{1}{\alpha}} q_t   \\
    &\geq (\alpha-1)^2  + \frac12 \alpha (\alpha-1) + \int_{\frac12}^{\frac{1}{\alpha}} \alpha(\alpha-1) dx - O(\eps) \\
    &= (\alpha-1)^2  + \frac12 \alpha (\alpha-1) +  \alpha(\alpha-1)
    \Big(\frac{1}{\alpha}-\frac12 \Big)  - O(\eps) \qquad = Z - O(\eps)
  \end{align*}
  Finally, consider constraint~\eqref{d3} for any $x \in \mathcal{S}_{\eps}$:
  \begin{align*}
    q_0 + \sum_{t \geq x+\eps, t \in \mathcal{S}_{\eps}} q_t &\geq (\alpha-1)^2 + \int_{x+\eps}^{\frac{1}{\alpha}} \alpha (\alpha-1) dx  - O(\eps)\\
    &= (\alpha-1)^2 + \Big(\frac{1}{\alpha} - x-\eps\Big) \alpha(\alpha-1) - O(\eps)  \\
    &= (\alpha-1) \Big(\alpha -1 + \Big(\frac{1}{\alpha} - x -\eps\Big)\cdot  \alpha \Big) - O(\eps) \\
    &= Z\cdot (1-x-\eps) - O(\eps).
  \end{align*}
  Finally, increase $q_0$ to $q_0 + O(\eps)$ to ensure that
  constraints~\eqref{d2} and~\eqref{d3} are satisfied. This is now a
  feasible solution to \eqref{dualeps} with objective value
  $Z-q_o+1-O(\eps) = \alpha-O(\eps)$. Hence the optimal value
  $\alphas_\eps$ for the LP~\eqref{lpepsnew} is at least $\alpha-O(\eps)$.
\end{proof}

Theorem~\ref{thm:final-lb} now follows from
Lemmas~\ref{lem:lp-valid-lb} and~\ref{lem:lp-opt-lb}.
Indeed, this bound on the LP is almost tight.
\begin{thm}
  \label{lem:lp-opt-ub}
  The optimal value $\alphas_\eps$ for \eqref{lpeps} is at most $\alpha +
  O(\eps)$.
\end{thm}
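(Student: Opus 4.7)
The plan is to exhibit a feasible primal solution to~\eqref{lpeps} with objective value $\alpha + O(\eps)$. As in Lemma~\ref{lem:lp-opt-lb}, I first absorb an $O(\eps)$ error by removing the floors and ceilings in~\eqref{CR-t} and the $-1/B^c$ term in~\eqref{Vol-eps}, and work with the normalized LP~\eqref{lpepsnew}.

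The construction is suggested by complementary slackness with the dual solution of Lemma~\ref{lem:lp-opt-lb}, whose variables $q_t$ are all strictly positive: in the primal this means I want every instance of~\eqref{CR-t} to be essentially tight. In the continuous relaxation (a density $n(y)$ on $[1/2, 1/\alpha]$), tightness at every $\ell$ reads $n_0 + \int_{1/2}^{\ell} n(y)\,dy = (\alpha-1)/(1-\ell)$. Differentiating yields $n(y) = (\alpha-1)/(1-y)^2$, and evaluating at $\ell = 1/2$ gives $n_0 = 2(\alpha-1)$. Using the defining relation $\ln(1/2) - \ln(1-1/\alpha) = (3-2\alpha)/(\alpha-1)$ of $\alpha$ (as in the proof of Lemma~\ref{lem:lp-opt-lb}), a direct integration verifies both $n_0 + \int_{1/2}^{1/\alpha} (1-y)\,n(y)\,dy = 1$ and $n_0 + \int_{1/2}^{1/\alpha} n(y)\,dy = \alpha$, i.e., the continuous solution is feasible with objective exactly $\alpha$.

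To discretize, I set $n_0 := 2(\alpha-1)$ and $n_x := \eps\cdot(\alpha-1)/(1-x)^2$ for each $x \in \mathcal{S}_\eps$. Each of the three constraint families then becomes a Riemann-sum approximation to its continuous counterpart; since $1/(1-y)^2$ is monotone and bounded on $[1/2, 1/\alpha]$ (by $O(1)$, as $1-1/\alpha$ is a fixed positive constant), every such sum differs from its integral by $O(\eps)$. A small shortfall in~\eqref{Vol-eps} is absorbed by inflating $n_0$ by an additive $O(\eps)$ term; for~\eqref{CR-t}, the bound $(\alpha-1)(1-\ell)/(1-\ell+\eps) \le \alpha - 1$ coming from the upper-limit setback $\ell - \eps$ supplies the slack needed to swallow the Riemann-sum overshoot as long as $\alpha_\eps \ge \alpha + O(\eps)$. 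Reverting the opening simplifications costs one more $O(\eps)$, giving $\alphas_\eps \le \alpha + O(\eps)$.

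The main thing to verify is that the accumulated $O(\eps)$ discretization errors stay uniform in $\ell$, particularly near the right endpoint $\ell = 1/\alpha$ where $1/(1-\ell)$ is largest. Because $1 - 1/\alpha$ is a fixed positive constant, all coefficients remain $O(1)$ and no error term blows up; this is the only place the argument could have gone wrong, and it does not.
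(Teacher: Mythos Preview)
Your proposal is correct and follows essentially the same route as the paper: both construct the primal solution $n_0 = 2(\alpha-1)$ (the paper writes this as $1 - \int_{1/2}^{1/\alpha} C/(1-y)\,dy$, which evaluates to the same number via~\eqref{eqn:alpha}) and $n_x \approx (\alpha-1)\eps/(1-x)^2$ (the paper uses the integral $\int_{x-\eps}^x C/(1-y)^2\,dy$, which differs from your rectangle rule by $O(\eps^2)$ per term), then verify the three constraint families up to $O(\eps)$ and repair the slack by bumping $n_0$ and $\alpha'_\eps$. Your complementary-slackness derivation of the density $n(y)=(\alpha-1)/(1-y)^2$ is a clean piece of motivation that the paper does not make explicit, but the resulting solution and the verification are the same.
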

\begin{proof}
  As in the case of Lemma~\ref{lem:lp-opt-lb}, it is more convenient to
  work with~\eqref{lpepsnew}, whose optimum value is $\alphas_\eps \pm
  O(\eps)$.  Again we first give a solution that is nearly feasible, and
  then modify it to give a feasible solution with value at most $\alpha
  + O(\eps)$. 

  Let $C$ denote $\alpha -1$. Define $n_x := \int_{x-\eps}^x
  \frac{C}{(1-y)^2} dy$ for all $x\in \mathcal{S}_{\eps}$.  Define
  \begin{align*}
    n_0 & := 1- \int_{\frac12}^{\frac{1}{\alpha } } \frac{C\cdot
      dy}{(1-y)} = 1 - C\ln \Big(\frac12 \Big) + C \ln
    \Big(1-\frac{1}{\alpha} \Big).
  \end{align*}
  The first constraint of~\eqref{lpepsnew} is satisfied up to an additive
  $O(\eps)$:
  \begin{align*}
    n_0 + \sum_{x} (1-x) n_x &=  1- \int_{\frac12}^{\frac{1}{\alpha}} \frac{C}{(1-y)} dy + \sum_{x \in \mathcal{S}_{\eps}} (1-x) \int_{x-\eps}^{x} \frac{C}{(1-y)^2} dy \\
    &\geq  1- \int_{\frac12}^{\frac{1}{\alpha}} \frac{C}{(1-y)} dy +  \int_{\frac12}^{\frac{1}{\alpha}} \frac{C(1-y)}{(1-y)^2} dy - O(\eps) \\
    &= 1- O(\eps).
  \end{align*}
  Next, the second constraint of~\eqref{lpepsnew}. 
  \begin{align*}
    n_0 + \sum_{x \in \mathcal{S}_{\eps}} n_x &= 1-\int_{1/2}^{\frac{1}{\alpha} } \frac{C\cdot dy}{1-y} + \sum_{x\in \mathcal{S}_{\eps}} \int_{x-\eps}^{x} \frac{C \cdot dy}{(1-y)^2}  \\
    &= 1-\int_{1/2}^{\frac{1}{\alpha} } \frac{C\cdot dy}{1-y} + \int_{\frac12}^{\frac{1}{\alpha} } \frac{C \cdot dy}{(1-y)^2}  \\
    &= 1+  C \Big( -\ln(1/2) + \ln \Big(1-\frac{1}{\alpha} \Big) - 2 + \frac{1}{1- \frac{1}{\alpha}} \Big) \\
    &=  1+C \qquad =  \alpha ,
  \end{align*}
  where the penultimate step follows by~\eqref{eqn:alpha}, and the last
  step uses $C=\alpha-1$.  Performing a similar calculation for the last
  set of constraints in~\eqref{lpepsnew}, we get 
  \begin{align*}
    n_0 + \sum_{x \in \mathcal{S}_{\eps}, x<t - \eps} n_x &= n_0 + \sum_{x \in \mathcal{S}_{\eps}} n_x - \sum_{x\in \mathcal{S}_{\eps},x\geq t-\eps} n_x \\
    &= \alpha - \sum_{x\geq t-\eps,x \in \mathcal{S}_{\eps}} \int_{x-\eps}^{x} \frac{C}{(1-y)^2} dy  \\
    &\leq  \alpha - \int_{t}^{\frac{1}{\alpha}} \frac{C}{(1-y)^2}dy +O(\eps) \\
    &=  \alpha - C \Big( \frac{\alpha}{\alpha-1} - \frac{1}{1-t} \Big) +
    O(\eps) \qquad =  \frac{\alpha-1 }{1-t}  + O(\eps),
  \end{align*}
  where the second equality follows from the previous sequence of
  calculations. To satisfy the constraints of~\eqref{lpepsnew}, we
  increase $n_0$ to $n_0 + O(\eps)$ and set $\alpha'_\eps$ to $\alpha -1
  + O(\eps)$. Since $t$ is always $\geq 1/2$, this will also satisfy the
  last set of constraints. Since the optimal value of~\eqref{lpepsnew}
  is $\leq \alpha -1 + O(\eps)$, which implies that $\alphas_\eps \leq
  \alpha-1$, since we had subtracted 1 from the objective function when
  we constructed~\eqref{lpepsnew} from~\eqref{lpeps}).
\end{proof}

\fi

\subsection{Matching Algorithmic Results}     

As mentioned earlier, \ref{lpeps} also guides our algorithm. For the rest of
this section, items smaller than $\eps$ are called \emph{small}, and the
rest are 
\emph{large}. Items of size $s_i>1/2$ are \emph{huge}.

To begin, imagine a total of $B$ volume of small items
come first, followed by large items. Inspired by the LP analysis above,
we pack the small items such that an $N_\ell/B$ fraction of bins have $\ell$
free space for all $\ell \in \{0\} \cup \mathcal{S}_{\eps}$, where the $N_\ell$ values
are near-optimal for~\eqref{lpeps}. We call the space profile used by
the small items the ``curve''; see Figure~\ref{fig:lowerbound}. In the
LP analysis above, we showed that this packing can be extended to a
packing with \acr $\alpha + O(\eps)$ if $B/(1-\ell)$ items of size $\ell$ are added. But what if 
large items of \emph{different} sizes
are inserted and deleted? In 
\fullshort{\S\ref{sec:largeitems-full}}{\S\ref{sec:largeitems}}we 
show that this approach retains its $(\alpha+O(\epsilon))$-\acr in 
this case too, and outline a linear-time algorithm to obtain such a packing.

The next challenge is that the small items may also be inserted and deleted. 
In \S\fullshort{\ref{sec:fitcurve-full}}{\ref{sec:fitcurve}} 
we show how to dynamically pack the small items with bounded recourse, so that the 
number of bins with any amount of free space $f\in \mathcal{S}_\eps$ induce a near-optimal solution to \ref{lpeps}.

Finally, in \S\fullshort{\ref{sec:mainalgo-full}}{\ref{sec:mainalgo}} we combine the two
ideas together to obtain our fully-dynamic algorithm.

\subsubsection{\ref{lpeps} as a Factor-Revealing LP}
\fullshort{\label{sec:largeitems-full}}{\label{sec:largeitems}}


In this section we show how we use the linear program \ref{lpeps} to analyze and guide the following algorithm: pack the small items according to a near-optimal solution to \ref{lpeps}, and pack the large items near-optimally ``on top'' of the small items.

To analyze this approach, we first show it yields a good packing if all large items are huge and have some common size $\ell>1/2$.
Consider an instance $\cI_s$ consisting solely of small items with total volume $B$, packed using $N_x$ bins with gaps in the range $[x, x +
\eps)$ for all $x \in \{0\} \cup \mathcal{S}_{\eps}$, where $\{\alpha_\eps,
N_0, N_x : x \in \mathcal{S}_{\eps}\}$ form a feasible solution
for~\eqref{lpeps}. We say such a packing is \emph{$\alpha_\eps$-feasible}. By the LP's definition, any $\alpha_\eps$-feasible packing of small items can be extended to an
$\alpha_\eps$-competitive packing for any extension $\mathcal{I}_{\ell}$ of $\cI_s$ with
$\ell \in \mathcal{S}_{\eps}$, by packing the size $\ell$ items in the bins
counted by $N_x$ for $x \geq \ell$ before using new bins. In fact, this solution satisfies 
a similar property for any extension $\mathcal{I}_{\ell}^k$ obtained from
$\cI_s$ by adding \emph{any} number $k$ of items all of size $\ell$. (Note that  $\mathcal{I}_{\ell}$ is the special case of
$\mathcal{I}_{\ell}^k$ with $k = \lfloor B/(1-\ell) \rfloor$.) 


\ifshort
\begin{restatable}[Huge Items of Same Size]{lem}{UnitStrongCurve}
  \label{lem:strong-curve} 
  Any $\alpha_\eps$-feasible packing of small items of $\cI_s$ induces an
  $\alpha_\eps$-competitive packing for all extensions $\mathcal{I}_{\ell}^k$ of
  $\cI_s$ with $\ell > 1/2$ and $k \in \mathbb{N}$.
\end{restatable}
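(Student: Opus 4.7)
The plan is to pack the $k$ huge items greedily into bins whose free-space slot is large enough to host them: place each size-$\ell$ item into one of the bins counted by $N_x$ for $x \geq \ell$ (in any order), and open a fresh bin once those ``receptive'' bins are exhausted. Since $\ell > 1/2$, each bin can host at most one such item, so feasibility is automatic; what remains is to bound the bin count against $OPT(\mathcal{I}_{\ell}^k)$.

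Set $T := \sum_{x \geq \ell,\, x \in \mathcal{S}_{\eps}} N_x$. In the easy case $k \leq T$, the bin count does not grow at all, so the algorithm uses $N_0 + \sum_{x} N_x$ bins, which is at most $\alpha_\eps B$ by constraint (small$_\eps$); combined with the trivial lower bound $OPT(\mathcal{I}_{\ell}^k) \geq OPT(\cI_s) = B$, this gives the desired ratio. In the case $k > T$, the algorithm uses $N_0 + \sum_{x < \ell} N_x + k$ bins. Let $\ell' \in \mathcal{S}_{\eps}$ be the smallest size with $\ell' \geq \ell$ (this loses only $O(\eps)$ in the ratio; the case $\ell > 1/\alpha$ is handled separately below). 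Rearranging constraint (CR$_\eps$) at $\ell'$ yields
\[
N_0 + \sum_{x \leq \ell'-\eps} N_x \;\leq\; \frac{\alpha_\eps - 1}{1-\ell}\, B + O(1).
\]

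I would then split further on whether $k \geq \lceil B/(1-\ell) \rceil$. In that sub-case $OPT(\mathcal{I}_{\ell}^k) \geq k$ (each bin holds at most one huge item), and combining the display with $B/(1-\ell) \leq k$ bounds the algorithm's total by $(\alpha_\eps - 1)k + k = \alpha_\eps k$. Otherwise, I use the volume lower bound $OPT(\mathcal{I}_{\ell}^k) \geq B + k\ell$; a short algebraic manipulation reduces the target inequality $(\alpha_\eps - 1)B/(1-\ell) + k \leq \alpha_\eps(B + k\ell)$ to $(1 - \alpha_\eps \ell)\bigl(k - B/(1-\ell)\bigr) \leq 0$, which holds since $k < B/(1-\ell)$ in this sub-case and $\alpha_\eps \ell \leq 1$ (using $\ell \leq 1/\alpha \leq 1/\alpha_\eps$ up to the $O(\eps)$ slack already absorbed).

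The remaining corner case $\ell \in (1/\alpha, 1]$ (where $T = 0$ and no existing bin is receptive) is handled directly: every size-$\ell$ item opens a fresh bin, so the algorithm uses at most $\alpha_\eps B + k$ bins, while $OPT(\mathcal{I}_{\ell}^k) \geq \max(k, B + k\ell)$ and $\alpha_\eps \ell \geq 1$ in this regime makes the inequality immediate. The main obstacle of the proof is thus the algebraic check in the middle sub-case $T < k < \lceil B/(1-\ell) \rceil$; this is precisely the step where the LP shifts from a gap-revealing to a \emph{factor-revealing} role, since the rearrangement of (CR$_\eps$) above is exactly tight for the hard instance used in the lower bound.
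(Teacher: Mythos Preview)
Your proof is correct and follows essentially the same approach as the paper's: the same greedy placement, the same trivial case $k \leq T$ via (small$_\eps$), and the same split on whether $k$ exceeds $B/(1-\ell)$, using $OPT \geq k$ in the large-$k$ regime and the volume bound $OPT \geq B + k\ell$ in the intermediate regime. The one cosmetic difference is in the intermediate sub-case: the paper writes $k$ as a convex combination of $T$ and $B/(1-\ell)$ and blends (small$_\eps$) with (CR$_\eps$) using those weights, whereas you use only (CR$_\eps$) and reduce the target inequality to the sign check $(1 - \alpha_\eps \ell)(k - B/(1-\ell)) \leq 0$. These are the same argument in different clothing (your algebraic identity is exactly what the convex combination yields once expanded), and both incur the same $O(\eps)$ slack at the boundary $\ell \approx 1/\alpha$.
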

\fi
\iffull
\UnitStrongCurve*
\fi

\iffull
\begin{proof}
  Let $N = \sum_{x \mid x \geq t, x \in \mathcal{S}_{\eps}} N_x$ be the
  bins with at least $t$ free space; if $t \geq 1/\alpha$, then $N =
  0$. Let $N' = \sum_{x \mid x \leq t-\eps, x \in \mathcal{S}_{\eps}}
  N_x$. Our algorithm first packs the size-$t$ items in the $N$ bins of
  $\cB$ before using new bins, and hence uses $N' + \max(N, k)$ bins. If
  $k \leq N$, we are done because of the constraint~\eqref{small-eps},
  so assume $k \geq N$.  A volume argument bounds the number of bins in
  the optimal solution for $\mathcal{I}_{\ell}^k$:
  \[
  \text{OPT}(\mathcal{I}_{\ell}^k) \geq \begin{cases}
    k &\quad \text{ if } k(1-t) \geq B\\
    k + \big( B - k(1-t) \big) &\quad \text{ else }
  \end{cases}
  \]
  
  We now consider two cases:
  \begin{itemize}
  \item $k(1-t) \geq B$: Using constraint~\eqref{CR-t}, the number of
    bins used by our algorithm is
    $$ \textstyle  N' + k \leq \frac{\alpha_\eps B}{1-t} + O(\eps B) +
    \left(k - \frac{B}{1-t} 
    \right) \leq (\alpha_\eps + O(\eps)) k. $$
  \item $k(1-t) < B$: Since $k$ lies between $N$ and $\frac{B}{1-t}$, we
    can write it as a convex combination $\frac{\lambda_1 B}{1-t} +
    \lambda_2 N$, where $\lambda_1 + \lambda_2 = 1, \lambda_1, \lambda_2
    \geq 0$. We can rewrite constraints~\eqref{small-eps}
    and~\eqref{CR-t} as
    $$ \textstyle  N' + \frac{B}{1-t} \leq \frac{\alpha_\eps B}{1-t} +
    O(\eps B) \quad \text{ and } \quad N' + N \leq \alpha_\eps B. $$
    Combining them with the same multipliers $\lambda_1, \lambda_2$, we
    see that $N' + k$ is at most 
    $$ \textstyle \alpha_\eps \left( \frac{\lambda_1 B}{1-t} + \lambda_2 B \right) + O(\eps B) =
  \alpha_\eps \left( B + \frac{\lambda_1 tB}{1-t} 
    \right)  + O(\eps B) \leq  \alpha_\eps \left( B + t k \right) + O(\eps B). $$ 
    The desired result follows because $B + tk = k + (B-k(1-t))$ and $B$ is a lower bound on $\text{OPT}(\mathcal{I}_{\ell}^k)$.
\qedhere
  \end{itemize}
\end{proof}
\fi
\ifshort 

Now, to pack the large items of the instance $\cI$, we first create a similar new instance $\cI'$ whose large items are all huge items. To do so, we first need the following observation.

\begin{restatable}{obs}{UnitNiceOpt}
	\label{obs:nice-opt}
	For any input $\mathcal{I}$ made up of solely large items and function $f(\cdot)$, a 
	packing of $\cI$ using at most $(1+\eps)\cdot OPT(\mathcal{I})+f(\eps^{-1})$ bins has all but at most $2\eps\cdot OPT(\mathcal{I})+2f(\eps^{-1})+3$ of its bins containing either no large items or being more than half filled by large items.
\end{restatable}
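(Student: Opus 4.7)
The plan is a standard volume/repacking argument. Let $P$ denote the number of bins in the given packing, so $P \leq (1+\eps)\cdot OPT(\mathcal{I}) + f(\eps^{-1})$. Call a bin \emph{bad} if it contains at least one large item yet at most half of its capacity is occupied by large items, and let $K$ be the number of bad bins. The goal is to show $K \leq 2\eps\cdot OPT(\mathcal{I}) + 2f(\eps^{-1}) + 3$.

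The key observation is that every item in a bad bin has size at most $1/2$: if a bin contained an item of size strictly greater than $1/2$, that item alone would make the bin more than half full of large items, contradicting badness. Consequently, the contents of any two bad bins have total volume at most $1$ and each consist of items of size at most $1/2$, so they can always be repacked into a single bin (e.g., by First Fit, which never opens a second bin once the load is at most $1$ and items are of size at most $1/2$).

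Given this, I would construct an alternative feasible packing of $\mathcal{I}$ by keeping all non-bad bins of the given packing untouched and pairing up the $K$ bad bins arbitrarily, merging each pair into a single bin via the repacking above (leaving one singleton if $K$ is odd). The resulting packing uses at most $P - K + \lceil K/2 \rceil = P - \lfloor K/2 \rfloor$ bins, and since it is a feasible packing of $\mathcal{I}$, we get $OPT(\mathcal{I}) \leq P - \lfloor K/2 \rfloor$.

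Plugging in the upper bound on $P$ and rearranging yields $\lfloor K/2 \rfloor \leq \eps\cdot OPT(\mathcal{I}) + f(\eps^{-1})$, hence $K \leq 2\eps\cdot OPT(\mathcal{I}) + 2f(\eps^{-1}) + 1$, which is even sharper than the claimed bound. I do not anticipate any real obstacle; the only subtle point is verifying that every item in a bad bin has size at most $1/2$, which is what makes merging any two bad bins into one bin trivially legal.
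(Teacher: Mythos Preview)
Your proof is correct and follows essentially the same repacking argument as the paper: both show that pairs of ``bad'' bins (those with large-item volume in $(0,1/2]$) can be merged, so too many bad bins would yield a packing beating $OPT$. Your one-shot pairing is slightly cleaner than the paper's iterative merging and even yields the sharper additive constant $+1$; the observation that every item in a bad bin has size at most $1/2$ is true but unnecessary, since in bin packing any set of items with total size at most $1$ fits in a single bin regardless of individual sizes.
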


\begin{wrapfigure}{r}{0.425\textwidth}
	\centering
	\capstart	
	\includegraphics[width=1.8in]{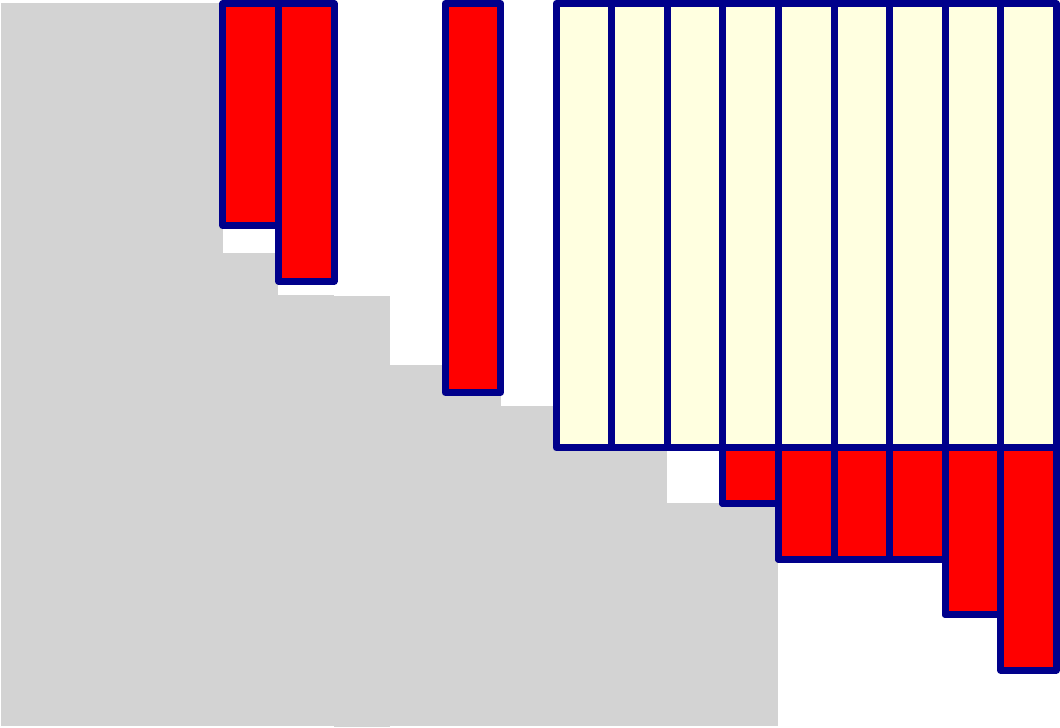}
	\caption{\small A packing of instance $\mathcal{I}'$. Large items are packed ``on top'' of the small items (in grey). Parts of large items not in the instance $\mathcal{I}_{\ell}^k$ are indicated in red.}
	\label{fig:upperbound}
	\vspace{-.25cm}
\end{wrapfigure}


Consider a packing $\mathcal{P}$ of the large items of $\cI$ using at most
$(1+\eps)\cdot OPT(\cI)+f(\eps^{-1})$ bins. By \Cref{obs:nice-opt}, at most
$2\eps\cdot OPT(\cI)+O(f(\eps^{-1}))$ bins in $\mathcal{P}$ are at most half
full.  We use these bins when packing $\cI$. For each of the remaining bins of
$\mathcal{P}$, we ``glue'' all large items occupying the same bin into a single
item, yielding a new instance $\cI'$ with only huge items, with any packing of
$\cI'$ ``on top'' of the small items of $\cI$ trivially inducing a similar
packing of $\cI$ with the same number of bins.  We pack the huge items of
$\cI'$ on top of the curve greedily, repeatedly packing the smallest such item
in the fullest bin which can accommodate it. Now, if we imagine we remove and
decrease the size of some of these huge items, this results in a new (easier)
instance of the form $\mathcal{I}_{\ell}^k$ for some $k$ and $\ell>1/2$, packed
in no more bins than $\cI'$ (see \Cref{fig:upperbound}). By
\Cref{lem:strong-curve}, the number of bins used by our packing of $\cI'$ (and of $\cI$) is at most
\begin{align*}
\alpha_\eps\cdot OPT(\mathcal{I}_{\ell}^k) & \leq \alpha_\eps\cdot OPT(\cI') \leq
(\alpha_\eps+O(\epsilon))\cdot OPT(\cI)+O(f(\eps^{-1})).
\end{align*}

To obtain worst-case bounds, we extend this idea as follows: we near-optimally
pack large items of size exceeding $1/4$. Given this packing of small and large
items,  we pack items of size in the range $(\eps,1/4]$ (which we
refer to as \emph{medium} items) using first-fit so that we only open a new bin if all bins are
at least $3/4$ full.  If we open a new bin, by a volume bound this packing has
\acr $4/3<\alpha+O(\epsilon)$. If we don't open a new bin, this packing is
$\alpha+O(\epsilon)$ competitive against an easier instance (obtained by
removing the medium items), and so we are  $\alpha+O(\epsilon)$ competitive.
These ideas underlie the following two theorems, a more complete proof of which
appears in \S\ref{sec:unit-cost-algo-appendix}.

\begin{restatable}{thm}{UnitOnlyLargeUBamortized}
	\label{thm:only-large-ub-amortized}
	An $\alpha_\eps$-feasible packing of the small items of an instance $\cI$ can be extended into a packing of all of $\cI$ using at most 
	$(\alpha_\epsilon+O(\eps))\cdot OPT(\cI)+O(\eps^{-2})$ bins in linear time for any fixed $\epsilon$.
\end{restatable}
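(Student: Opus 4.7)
The plan is to extend the given $\alpha_\eps$-feasible small-items packing in three stages on top of the curve: first add the items of size exceeding $1/4$, then the medium items of size in $(\eps,1/4]$.

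\textbf{Stage 1 (items of size $>1/4$).} Run an offline (A)PTAS on the sub-instance $\cI_{>1/4}$ of items of size $>1/4$ to obtain a packing using at most $(1+\eps)OPT(\cI_{>1/4})+\mathrm{poly}(\eps^{-1})$ bins; for fixed $\eps$ this is linear time, and since items of size $>1/4$ are at most three per bin and there are only $O(\eps^{-1})$ distinct rounded sizes, the additive term can be made $O(\eps^{-2})$. By Observation~\ref{obs:nice-opt}, all but $O(\eps\cdot OPT(\cI))+O(\eps^{-2})$ of these bins are more than half full. Set aside the few ``light'' bins as-is (they inflate only the additive term). In every remaining bin, glue the items of size $>1/4$ in it into a single huge super-item of size $>1/2$, producing an auxiliary instance $\cI'$ of only huge items with $OPT(\cI')\leq (1+\eps)OPT(\cI)+O(\eps^{-2})$.

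\textbf{Stage 2 (placing $\cI'$ above the curve).} Process the huge items of $\cI'$ in increasing order of size, placing each into the fullest curve-bin whose free space accommodates it, opening a new bin if none does. Let $\ell>1/2$ be the smallest size in $\cI'$ and $k=|\cI'|$. Shrinking each huge item of $\cI'$ down to size $\ell$ turns the resulting packing into a valid packing of the single-size instance $\mathcal{I}_\ell^k$ using the same bins; moreover, the smallest-first/fullest-bin rule guarantees that this shrunk packing is no worse than the canonical greedy placement analyzed by Lemma~\ref{lem:strong-curve}. Hence the number of bins used in Stage~2 is at most $\alpha_\eps\cdot OPT(\mathcal{I}_\ell^k)\leq \alpha_\eps\cdot OPT(\cI')$, and together with the light bins from Stage~1 the packing of small items plus all items of size $>1/4$ uses at most $(\alpha_\eps+O(\eps))\cdot OPT(\cI)+O(\eps^{-2})$ bins.

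\textbf{Stage 3 (medium items).} Insert the medium items by First-Fit with the rule that a new bin is opened only once every currently used bin is at least $3/4$ full; this is always feasible since each medium item has size $\leq 1/4$. Split into two cases. If no new bin is opened, the bin count is exactly that of Stage~2, so the bound above holds. Otherwise every bin but at most one is $\geq 3/4$ full, and a simple volume argument gives a bin count of at most $\tfrac{4}{3}OPT(\cI)+O(1)$; since $\alpha\approx 1.387>4/3$, this is dominated by $(\alpha_\eps+O(\eps))\cdot OPT(\cI)+O(1)$.

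\textbf{Main obstacle.} The subtle step is the reduction in Stage~2 from the multi-size $\cI'$ to a single-size $\mathcal{I}_\ell^k$: one must verify that the smallest-first-into-fullest-bin greedy placement of $\cI'$ really uses no more bins than the greedy placement of $\mathcal{I}_\ell^k$ controlled by Lemma~\ref{lem:strong-curve} (the shrinking step must preserve which bin each item lands in, at least up to a bin that is no emptier than the one used for $\mathcal{I}_\ell^k$). Once this is justified, the remaining work is bookkeeping: accumulating the APTAS error, the ``light-bin'' slack from Observation~\ref{obs:nice-opt}, and the $O(1)$ slack from the volume argument into a single additive $O(\eps^{-2})$ term.
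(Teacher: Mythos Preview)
Your plan is close to the paper's, but the step you flag as the ``main obstacle'' in Stage~2 genuinely fails: taking $\ell=\min_{j\in\cI'} s_j$ and $k=|\cI'|$, the inequality ``bins used for $\cI'$ $\leq \alpha_\eps\cdot OPT(\cI_\ell^k)$'' is simply false, even for an $\alpha_\eps$-feasible curve. Concretely, take the near-optimal LP curve with volume $B$, add one huge item of size $\ell\approx 1/2+\eps$ and many huge items of size close to $1/\alpha$. The single $\ell$-item lands in one of the abundant small-gap bins, but all the others fit only in the very few bins with free space near $1/\alpha$ and spill into new bins. After you shrink everything to $\ell$, your packing still uses all those new bins; yet the canonical greedy of $\cI_\ell^k$ would fill \emph{every} curve bin first (since $\ell$ fits everywhere) and open far fewer new bins. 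One can check numerically that this drives your packing above $\alpha_\eps\cdot OPT(\cI_\ell^k)$. So the shrunk packing is \emph{not} ``no worse than the canonical greedy placement,'' and Lemma~\ref{lem:strong-curve} does not apply as you invoke it.

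The paper fixes this by choosing $\ell$ and $k$ differently: before shrinking, it \emph{discards} every huge item sitting in a curve bin of free space $f$ whenever some bin with free space at least $f$ received no huge item. After this pruning the surviving $k$ items occupy a contiguous suffix of bins (those with the most free space), which is exactly where greedy would place $k$ copies of the smallest survivor $\ell$; hence the bin counts agree, Lemma~\ref{lem:strong-curve} applies, and $OPT(\cI_\ell^k)\le OPT(\cI')$ still holds since we only removed and shrank items. A minor structural remark: the paper does \emph{not} carve out a separate medium-item stage for this theorem---it runs the APTAS on \emph{all} items of size $>\eps$, glues, and places on the curve. Your Stage~3 is borrowed from the worst-case argument (Theorem~\ref{thm:only-large-ub-wc}); it is sound but an unnecessary detour here once Stage~2 is repaired.
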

\begin{restatable}{thm}{UnitOnlyLargeUBwc}
	\label{thm:only-large-ub-wc}
	For a dynamic instance $\mathcal{I}_t$, given a fully-dynamic $(1+\epsilon)$-\acr algorithm with additive term $f(\epsilon^{-1})$ for items of size greater than $1/4$ in $\cI_t$, and a fully-dynamic
	$\alpha_\eps$-feasible 
	packing of its small items, one can maintain a packing with \acr $(\alpha_\epsilon+O(\eps))$ with additive term $O(f(\epsilon^{-1}))$. This can be done using worst-case recourse
	\begin{enumerate}
		\item $O(\eps^{-1})$ per item move in the near-optimal fully-dynamic packing of the items of size $>1/4$,
		\item $O(\epsilon^{-1})$ per insertion or deletion of medium items, and
		\item $O(1)$ per item move in the $\alpha_\epsilon$-feasible packing of the small items.
	\end{enumerate} 
\end{restatable}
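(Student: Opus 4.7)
The plan is to build one combined physical packing by running three sub-algorithms in parallel on disjoint size classes of the current instance $\mathcal{I}_t$ and coupling their outputs bin-by-bin. Let $H$ denote the sub-instance of items of size $>1/4$, let $M$ denote the medium items (sizes in $(\eps,1/4]$), and let $S$ denote the small items (size $\leq\eps$). I would maintain the given fully-dynamic $(1+\eps)$-\acr algorithm $\mathcal{A}_L$ on $H$, maintain the given fully-dynamic $\alpha_\eps$-feasible packing $\mathcal{A}_S$ on $S$, and pack $M$ using first-fit with a $3/4$ threshold, opening a new bin only when every existing bin is already at least $3/4$ full. The three packings share the physical bins: each bin used by $\mathcal{A}_L$ is simultaneously used for the small items assigned there by $\mathcal{A}_S$ (reproducing the free-space ``curve'' described in \S\ref{sec:largeitems}), and medium items are dropped into whichever bin fits them under first-fit.

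For the competitive ratio I would follow the argument preceding Theorem~\ref{thm:only-large-ub-amortized}. Applying Observation~\ref{obs:nice-opt} to the $(1+\eps)$-\acr packing of $H$ produced by $\mathcal{A}_L$, all but $O(\eps\cdot OPT(H))+O(f(\eps^{-1}))$ of its bins are more than half full, so ``gluing'' the items in each such bin into a single huge item yields a reduced instance whose bins carry items of size $>1/2$. Lemma~\ref{lem:strong-curve}, combined with the $\alpha_\eps$-feasible small-item packing, then shows the packing of $H\cup S$ uses at most $(\alpha_\eps+O(\eps))\cdot OPT+O(f(\eps^{-1}))$ bins. Medium items do not worsen this: under the $3/4$-threshold rule, either they fit into already-opened bins (no extra bins used), or we open a new bin, in which case the total packed volume forces $OPT\geq \tfrac{3}{4}\cdot(\text{bins used})$, giving \acr $4/3\leq \alpha_\eps+O(\eps)$.

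For the worst-case recourse I would charge induced moves to the update that caused them. Each move of a huge item from some bin $B$ to some bin $B'$ by $\mathcal{A}_L$ may require freeing up to one unit of volume in $B'$; since each small item has size at most $\eps$, at most $O(\eps^{-1})$ small items are displaced and can be reinserted into the space freed in $B$. This yields $O(\eps^{-1})$ recourse per $\mathcal{A}_L$-move. A medium-item insertion or deletion analogously requires rearranging at most $O(\eps^{-1})$ small items to free (or reclaim) up to $1/4$ unit of space in its chosen bin, giving $O(\eps^{-1})$ recourse per medium-item update. Each move by $\mathcal{A}_S$ of a single small item contributes one further physical move, giving $O(1)$ recourse per $\mathcal{A}_S$-move.

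The main obstacle is ensuring the induced small-item shuffles do not cascade — specifically, that moving $O(\eps^{-1})$ small items to accommodate a huge or medium update does not further break the $\alpha_\eps$-feasibility invariant of $\mathcal{A}_S$ and so trigger additional $\mathcal{A}_S$-moves. I would resolve this by viewing the small items displaced to make room for a huge item as simply swapping between two physical bins whose aggregate free space is preserved (the huge item merely moved between them); the free-space gap profile that underlies $\mathcal{A}_S$'s feasibility is therefore unchanged up to an additive $\eps$ per affected bin, which is absorbed by the $O(\eps)$ slack in the \acr. For medium items, the analogous accounting works because a first-fit-with-threshold bin change alters the gap of at most one bin by at most the medium item's size $\leq 1/4$, again absorbed by the $O(\eps)$ slack. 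With these bookkeeping moves charged to the triggering update rather than to $\mathcal{A}_S$, the three recourse bounds hold in the worst case as claimed, and the detailed verification is deferred to \S\ref{sec:unit-cost-algo-appendix}.
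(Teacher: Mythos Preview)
Your architecture and your recourse mechanism for item~1 do not match what actually produces the $O(\eps^{-1})$ bound, and the ``swap small items between $B$ and $B'$'' step does not preserve $\alpha_\eps$-feasibility in the sense the theorem requires.

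In the paper, $\mathcal{A}_L$'s bins are \emph{not} the physical bins. $\mathcal{A}_L$ runs on its own virtual packing of the large items; each of its bins that is $\geq 1/2$ full is ``glued'' into a single huge item, and these huge items are then placed into the \emph{free space} of the curve bins produced by $\mathcal{A}_S$. A move by $\mathcal{A}_L$ therefore does not ask you to evict small items from some physical bin~$B'$; it changes one or two huge items, and the work is to re-place those huge items on the curve. The paper's $O(\eps^{-1})$ bound comes from a step you omit entirely: huge items are rounded to integer multiples of~$\eps$, so there are only $O(\eps^{-1})$ distinct huge sizes, and the huge items are kept in sorted-\textsc{FirstFit} order on the curve. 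Inserting or deleting one huge item then cascades through at most one item per size class, i.e.\ $O(\eps^{-1})$ huge-item moves (each huge item standing for $\leq 3$ large items). That cascading argument is the content of item~1; displacing small items plays no role.

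Your attempted fix for the cascade problem is also incorrect on its own terms. The $\alpha_\eps$-feasibility of $\mathcal{A}_S$ is a statement about the free-space profile of the bins \emph{viewed as a packing of small items only}; this is the curve the huge items later sit on. If you move $\Theta(1)$ volume of small items from $B'$ to $B$ to make room for a huge item, the small-item free space of $B'$ jumps by $\Theta(1)$ and that of $B$ drops by $\Theta(1)$---not ``up to an additive $\eps$.'' The aggregate free space including the huge item may be preserved, but that is not what feasibility for \eqref{lpeps} measures. So the invariant really is broken, and the ``absorbed by the $O(\eps)$ slack'' hand-wave does not go through. Similarly, for item~3 you must account for what a small-item move does to the medium and huge items sitting on that bin; the paper handles this by noting an $O(\eps)$ volume change displaces only $O(1)$ medium items, which you do not address.
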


%
\fi
\iffull
\fi


\iffull
(For the moment let us assume we solve~\ref{ILP1} exactly; we can
approximate it with an additive loss of $O(\eps^{-1})$ using standard ideas.)
If the optimal value of~\ref{ILP1} is $\taus$, any solution for $\cI$
which respects the packing of small items must use $\geq \taus$ bins. We
now show that we can also pack the items in $\taus$ bins; this is not
immediate since different bins have different amounts of free space.
Consider the greedy algorithm that orders bins in increasing order of
available space: let this ordering be $b_1, b_2, \ldots$. Now repeatedly
consider the smallest remaining configuration $c$, and place it in the
first bin with free space $\geq \sizec(c)$ that has not yet received a
configuration. 
Call this the {\em canonical} packing, and call a bin
{\em used} if it received some configuration. The following observation
is immediate.
\begin{obs}
  \label{obs:packorder}
  If a configuration $c$ is packed in bin $i$, then the available space
  in any unused bin $i' < i$ is strictly less than $\sizec(c)$.
  Moreover, if a configuration $c'$ is packed in a bin which appears
  after a configuration $c$, then $\sizec(c') \geq \sizec(c)$.
\end{obs}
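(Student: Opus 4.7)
The plan is to prove both statements directly from the greedy rule defining the canonical packing, exploiting the two orderings in play: the \emph{bins} $b_1, b_2, \ldots$ are indexed in increasing order of available space, while the \emph{configurations} are processed in increasing order of size. The key observation I would highlight at the outset is that the available space of any bin is a \emph{static} quantity (it is determined by the fixed packing of the small items), so the ``available space'' of a bin never changes until a configuration is actually assigned to it; in particular, for any bin that remains unused throughout, its available space is the same at every step of the procedure.

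For the first claim I would just unwind the greedy rule. If $c$ is placed in bin $b_i$, then $b_i$ was, by definition, the smallest-indexed unused bin with available space at least $\sizec(c)$ at the instant $c$ was processed. Any bin $b_{i'}$ with $i' < i$ which ultimately remains unused was in particular unused at that instant, so it must have had available space strictly less than $\sizec(c)$; since its available space is static, the inequality persists to the end.

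For the second claim I would argue by contradiction. Suppose $c$ is placed in bin $b_i$ and $c'$ in bin $b_j$ with $j > i$, yet $\sizec(c') < \sizec(c)$. Since configurations are considered in increasing order of size, $c'$ is processed strictly before $c$; in particular, bin $b_i$ is still unused at the moment $c'$ is being assigned. The greedy rule then forces the available space of $b_i$ to be strictly less than $\sizec(c')$ at that point, for otherwise $c'$ would have been placed in $b_i$ rather than in the later bin $b_j$. But later, when $c$ is placed in $b_i$, the same rule demands $b_i$ have available space at least $\sizec(c) > \sizec(c')$, contradicting the fact that the available space of $b_i$ is unchanged between the two events.

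The only subtlety I foresee is keeping the temporal quantifiers straight, i.e.\ carefully distinguishing ``unused at the end of the procedure'' from ``unused at the moment $c'$ is processed.'' Once one notes that the available spaces are frozen by the small-item packing and only the ``used/unused'' flag of each bin toggles, both claims collapse to a one-line application of the greedy rule, so no additional machinery is required.
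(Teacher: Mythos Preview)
Your proposal is correct and is precisely the intended argument: the paper in fact declares the observation ``immediate'' and gives no proof, so your direct unwinding of the greedy rule (using that bins are ordered by static available space and configurations are processed in increasing size order) is exactly what is expected. The one point you flagged---distinguishing ``unused at the end'' from ``unused when $c'$ is processed''---is the only thing one could possibly trip over, and you handle it correctly via the monotonicity of the used/unused flag.
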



We can now prove that the packing algorithm uses a near-minimum number
of bins.  \fi


\iffull
\UnitIlp*

\begin{proof}
  Let $N = N_0 + \sum_{x \in cN_\eps} N_x$ be the number of bins used
  for packing small items. If the large items do not open any new bins,
  we use $N$, and $N \leq \tau$ using constraint~\eqref{CR-ilp} for $x =
  \infty$. So assume that we use at least one new bin. Define a
  \emph{block} as a maximal continuous sequence of used bins (see
  Figure~\ref{fig:blockdef} for an example), and consider the last block
  $B$ used by the algorithm. If the index of the first bin is $i$ and
  the block has $t$ bins, the algorithm uses $(i-1) + t$ bins. Let $c$
  be the configuration packed in the first bin of $B$. Since bin $i-1$
  is unused, Observation~\ref{obs:packorder} shows that the available
  space in bin $i-1$ is less than $\sizec(c)$. Since bins are packed in
  increasing order of available space, it follows that $N_{< \sizec(c)}
  = i-1$.  Moreover, the size of all the configurations in this block is
  at least $\sizec(c)$, using the second part of
  Observation~\ref{obs:packorder}). But now the
  constraint~\eqref{CR-ilp} for $x = \sizec(c)$ implies $(i-1)+t \leq
  \tau$, which proves the first claim of the lemma. For the second
  claim, once we restrict the large job sizes to $\mathcal{L}$ where we
  lose an $(1+\eps)$-factor, any packing algorithm must satisfy the
  constraints of~\ref{ILP1}.
\end{proof}

\begin{figure}[h]
  \begin{center}
    \includegraphics[width=2.5in]{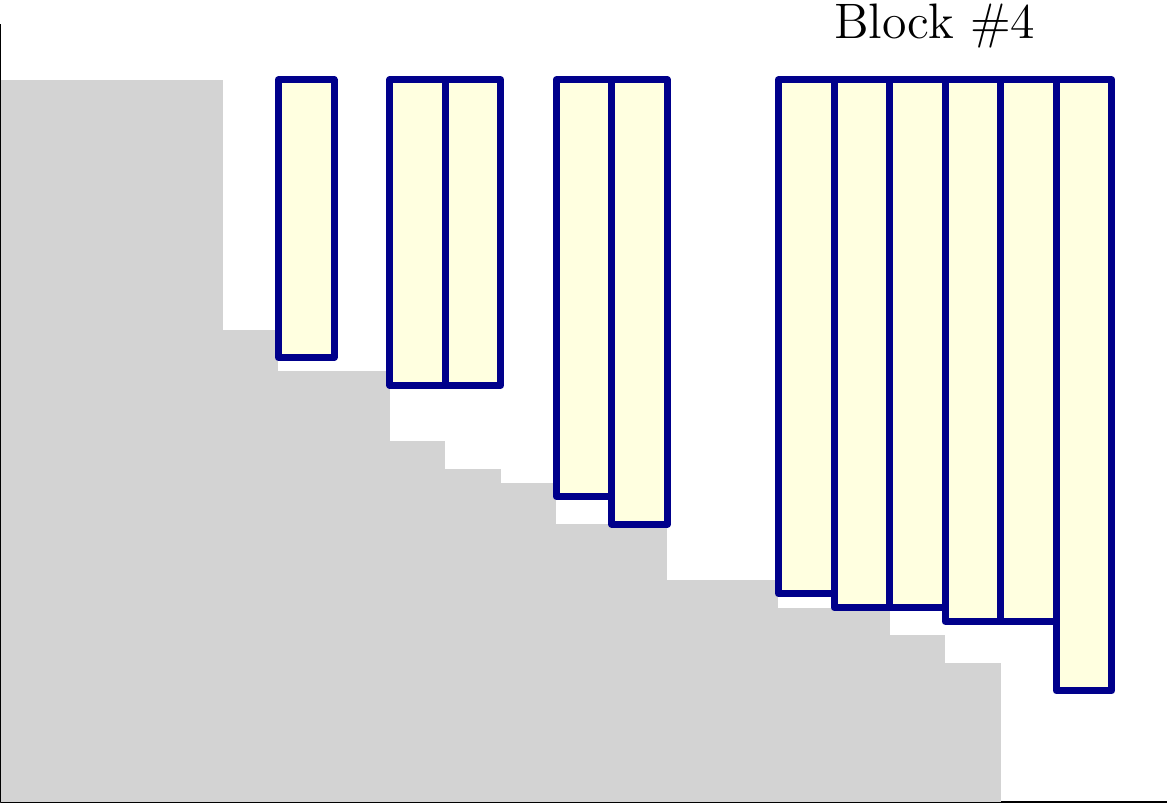}
    \caption{\small Canonical packing of configurations. The grey region is the
      curve $\{N_x\}_{x\in X}$ of small items. The boxes are
      configurations. The bins are shown in increasing order of free
      space. }
    \label{fig:blockdef}
  \end{center}
\end{figure}

\fi


\ifshort
\fi

\iffull
\UnitNiceOpt*

\begin{proof}
  We modify an optimal packing $\mathcal{P}$ of $\mathcal{I}$ requiring
  $B=OPT(\mathcal{I})$ bins so that it satisfies the criteria. During
  this process, some small items will be removed from their assigned
  bins and put aside. We denote this set by $E$, and we pack these items
  at the end. Initially $E$ is empty. Whenever there are two bins $i$
  and $j$ such that the space occupied by large items is $v_i, v_j\in
  (0,1/2]$ respectively, we transfer all the large items in these two
  bins to bin $i$. The small items are then packed arbitrarily into bin
  $j$ and into the remaining space in bin $i$. Since small items have
  size at most $\eps$, at most $2 \eps$ volume of small items will
  remain unassigned. These are put in the set $E$. Each operation
  creates one new bin with no large items, so there are at most $B$ such
  operations. Hence the total volume of (small) jobs in $E$ is $\leq 2
  \eps B$, and packing them using \FF, say, will require at most
  $2\eps\cdot B/(1-\eps)+1 \leq 3\eps\cdot B$ new bins.
\end{proof}

We can now prove the main theorem for packing large items.
\fi

\ifshort
\fi
\iffull
\UnitOnlyLargeLb*
\fi

\iffull
\subsubsection{Small Items: A Warm-up}
\label{sec:unit-small-warm}

Now we generalize this result to the fully-dynamic setting for small
items as well. Again the idea will be to divide the input epochs, such
that we work with a fixed profile of small items during an epoch. All
small items arriving during an epoch will be packed separately, and when
the epoch ends, we will construct a new profile of small items.  We
begin with a warm-up where the instance only consists of small items,
i.e., items with size at most $\eps \leq 1/6$. 

\begin{lem}\label{lem:small-ub}
  For all $\eps\leq \frac{1}{6}$ there exists \BP algorithm with \acr
  $\left(1+3\eps\right)$, with $O(\frac{1}{\eps})$ additive term and
  $O(\frac{1}{\eps})$ worst-case recourse for instances comprising only
  of $\eps$-small items.
\end{lem}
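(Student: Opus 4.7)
The plan is to partition bins into \emph{buckets} of $k = \lceil 2/\eps \rceil$ consecutive bins each, and to maintain the invariant that every bucket except a single ``open'' bucket carries total occupied volume at least $k-1$. Since a closed bucket has exactly $k$ bins, this forces its average fill to be at least $(k-1)/k = 1 - 1/k \geq 1 - \eps/2$, so the total number of bins used is at most $V/(1-\eps/2) + k$. A direct check gives $(1-\eps/2)(1+3\eps) \geq 1$ for $\eps \leq 1/6$, so using $V \leq OPT$ this is at most $(1+3\eps)\,OPT + O(1/\eps)$, matching the stated \acr and additive term.

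For insertions, the plan is to always place the arriving item into the open bucket. So long as the open bucket's occupied volume stays below $k-1$, its total free space strictly exceeds $1$, and by a pigeonhole argument across $k$ bins there is at least one bin in the open bucket with free space exceeding $1/k = \eps/2$; by biasing insertions to the bin with the greatest free space one actually guarantees a bin with free space at least $\eps$, which accommodates any $\eps$-small item with zero movement. When the open bucket's occupied volume first exceeds $k-1$, the bucket is declared closed and a fresh empty open bucket is created. Insertions therefore incur no recourse.

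For deletions, the only nontrivial case is a deletion of an item of size $s_x \leq \eps$ from a closed bucket $G$, which drops $V_G$ by $s_x$ and can push $V_G$ slightly below $k-1$. To restore the invariant, I would transfer a total volume of at most $s_x$ from the open bucket into $G$, inserting one item at a time. The vacated bin of $G$ now has free space at least $s_x$, and once it is refilled the other bins of $G$ together still have at least $\Omega(1)$ free space by an averaging argument, so transfers can continue into other bins of $G$ as needed. Each transferred item has size at most $\eps$ and the total volume transferred is at most $s_x + \eps = O(\eps)$, giving a worst-case recourse bound of $O(1/\eps)$.

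The hard part, which I expect to require the most care, is the boundary case where the open bucket has no item small enough to fit into any of the small free spaces available in $G$ after the deletion. I would handle this by a local rebuild: jointly repack $G$'s affected bin together with the open bucket's ``slack'' bin, choosing items greedily in decreasing order of size to refill both pieces while preserving the $V_G \geq k-1$ invariant. Because the combined region touches $O(1)$ bins and each such bin holds only $O(1/\eps)$ items of size within a constant factor of the deleted one, this rebuild terminates in $O(1/\eps)$ moves in the worst case. With this fallback in place, the per-update recourse is uniformly $O(1/\eps)$ and the bucket invariant is preserved, completing the proof of the lemma.
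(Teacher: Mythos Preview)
Your approach has two genuine gaps, and the second one is not easily patched.

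\textbf{Insertion.} With $k=\lceil 2/\eps\rceil$, pigeonhole on an open bucket of volume $<k-1$ only gives a bin with free space $>1/k\approx\eps/2$, not $\geq\eps$. Concretely: take $\eps=0.1$, $k=20$, and insert items of size $0.055$ round-robin; after $17$ rounds each bin has free space $0.065<\eps$ while total volume $18.7<k-1=19$, so the bucket is still open yet cannot accept an item of size $\eps$. (This particular issue is fixable by taking $k\leq 1/\eps$, which still gives the needed fill ratio.)

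\textbf{Deletion.} The real problem is your recourse accounting. You write that the transferred volume is $O(\eps)$ and conclude worst-case recourse $O(1/\eps)$, but under unit costs recourse is the \emph{number of items moved}, not the volume. If the open bucket currently holds only items of size $\delta\ll\eps$ (nothing rules this out), then filling a gap of size $s_x\approx\eps$ in a closed bucket requires $\Theta(\eps/\delta)$ moves, which is unbounded. Your ``local rebuild'' fallback does not help: the claim that the affected bins hold only $O(1/\eps)$ items ``of size within a constant factor of the deleted one'' is exactly what fails, since nothing ties sizes in the open bucket to the deleted item's size. A secondary issue is that the open bucket may simply be empty when a deletion in a closed bucket occurs.

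The paper's proof avoids both problems by keeping items globally sorted by size and maintaining a \emph{per-bin} invariant (every non-last bin in a bucket has volume $\geq 1-\eps$), rather than a per-bucket volume invariant. Then a deletion from bin $i$ is repaired by borrowing the smallest item from bin $i+1$; sortedness guarantees that this single item is at least as large as the deleted one, so one move restores bin $i$, and the cascade through the bucket moves exactly one item per bin, at most $3/\eps$ times. The sorted structure is precisely what lets you bound the number of moves rather than just the volume.
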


\begin{proof}
  Keep the bins ordered and assign jobs to maintain two invariants: (i).
  items in earlier bins (according to the bin ordering) are no larger
  than items in later bins, and (ii) bins are partitioned into buckets
  of consecutive bins, with the number of bins in each bucket (except
  possibly the rightmost bucket) being $\in [1/\eps,3/\eps]$.  All bins
  in a bucket (except the rightmost one) have volume $\geq 1-\eps$. The
  claimed \acr follows from a volume argument.

  We now show a simple way to maintain these invariants with
  $O\left(\frac{1}{\eps}\right)$ worst-case recourse per operation.
  Upon insertion of an item $j$, insert the item into the correct bin,
  say bin $i$, based on its size $s_j$. If bin $i$ overflows, remove the
  largest item $j'$ in $i$ (fixing this overflow), insert it into bin
  $i+1$ in the same bucket, and repeat. If this process cascades and the
  last bin $i''$ in this bucket overflows, extend this bucket with a new
  bin, and put the overflowed item from $i''$ into it. In case the size
  of the bucket exceeds $3/\eps$, split it into two buckets with
  (almost) the same number of bins---this is just book-keeping and
  causes no recourse. Since the cascade of inserts has length at most
  $\frac{3}{\eps}$, the worst-case recourse cost for inserts is
  $O(\eps^{-1})$. Deletions are similar: just ``borrow'' items from the
  next bin in the bucket if the bin volume falls below than $1-\eps$. If
  the last bin in the bucket becomes empty, remove it. If number of bins
  in the bucket falls below $1/\eps$, merge it with the next bucket.
  This new bucket has at most $1/\eps + 3/\eps \leq 4/\eps$ bins; if it
  has more than $3/\eps$ bins, split it into evenly-sized buckets (with
  no recourse). Clearly, the worse-case recourse cost is again $O(\eps^{-1})$.
\end{proof}

\fi

It remains to address the issue of maintaining an $\alpha_\eps$-feasible packing of small items dynamically using limited recourse.

\subsubsection{Dealing With Small Items: ``Fitting a Curve''}
\fullshort{\label{sec:fitcurve-full}}{\label{sec:fitcurve}}

We now consider the problem of packing $\eps$-small items according to
an approximately-optimal solution of \eqref{lpeps}. We abstract the
problem thus.

\ifshort	
\begin{restatable}[Bin curve-fitting]{Def}{BinCurveFit}
  Given a list of bin sizes $0\leq b_0\leq b_1\leq\ldots, b_K\leq 1$ and
  relative frequencies $f_0, f_1,f_2,\dots,f_K$, such that $f_x \geq 0$
  and $\sum_{x=0}^K f_x = 1$, an algorithm for the \emph{bin
  curve-fitting problem} must pack a set of $m$ of items
  with sizes $s_1,\ldots,s_m\leq 1$ into a minimal number of bins $N$
  such that for every $x \in [0,K]$ the number of bins of size $b_x$
  that are used by this packing lie in $\{\lfloor N \cdot f_x \rfloor,
  \lceil N\cdot f_x \rceil\}$.
\end{restatable}
\fi

\iffull
\BinCurveFit*
\fi

If we have $K=0$ with $b_0 = 1$ and $f_0 = 1$, we get standard \BP. We
want to solve the problem only for (most of the) small items, in the fully-dynamic
setting. We consider the special case with relative frequencies $f_x$
being multiples of $1/T$, for $T \in \mathbb{Z}$; e.g., $T =
O(\eps^{-1})$. Our approach is inspired by the algorithm of \cite{jansen13robust}, and 
maintains bins in
increasing sorted order of item sizes. The number of bins is
always an integer product of $T$. Consecutive bins are aggregated into
\emph{clumps} of exactly $T$ bins each, and clumps aggregated into
$\Theta(\eps^{-1})$ \emph{buckets} each.  Formally, each clump has $T$
bins, with $f_x \cdot T\in \mathbb{N}$ bins of size $b_x$ for
$x=0, \ldots, K$. The bins in a clump are ordered according to their
capacity $b_x$, so each clump looks like its target curve. Each bucket except the last 
consists of some $s\in[1/\eps, 3/\eps]$ consecutive clumps (the last bucket may have fewer than $1/\epsilon$ clumps).  See
Figure~\ref{fig:target-curve}.  For each bucket, all bins except those
in the last clump are full to within an additive $\eps$.

\begin{figure}[h]
\begin{center}
  \includegraphics[scale=0.6]{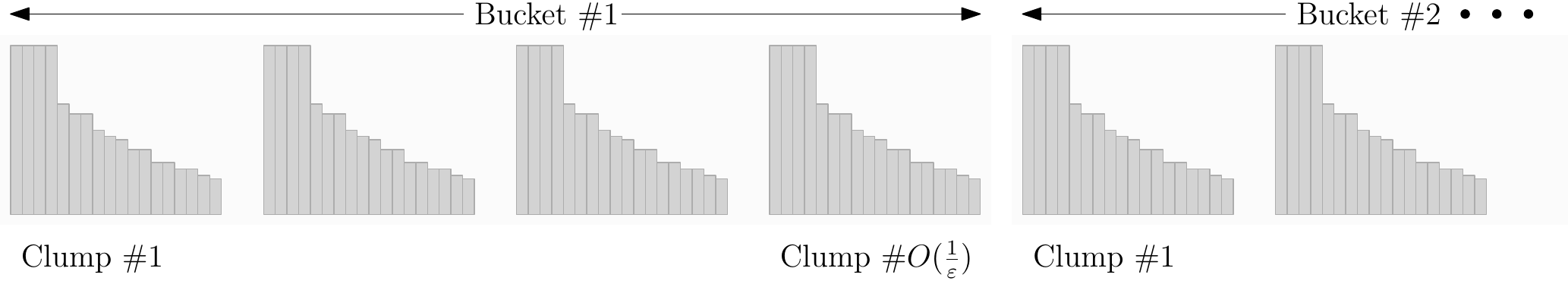}
  \caption{\small Buckets have $O(\eps^{-1})$ clumps, clumps have $T$ bins.}
  \label{fig:target-curve}
\end{center}
\end{figure}


Inserting an item adds it to the correct bin according to its size. If
the bin size becomes larger than the target size for the bin, the
largest item overflows into the next bin, and so on. Clearly this
maintains the invariant that we are within an additive $\eps$ of the
target size.  We perform $O(T/\epsilon)$ moves in the same bucket; if we
overflow from the last bin in the last clump of the bucket, we add a new
clump of $T$ new bins to this bucket, etc. If a bucket contains too many
clumps, it splits into two buckets, at no movement cost. An analogous
(reverse) process happens for deletes. Loosely, the process maintains
that on average the bins are full to within $O(\eps)$ of the target
fullness -- one loss of $\eps$ because each bin may have $\eps$ space,
and another because an $O(\eps)$ fraction of bins have no guarantees
whatsoever.

We now relate this process to the value of
\ref{lpeps}. We first
show that setting $T = O(\eps^{-1})$ and restricting to frequencies which are
multiples of $\eps$ does not hurt us. Indeed, for us, $b_0 = 1$, and
$b_x = (1-x)$ for $x \in \mathcal{S}_{\eps}$. Since \eqref{lpeps} depends on the
total volume $B$ of small items, and $f_x$ may change if $B$ changes, it
is convenient to work with \fullshort{the normalized LP \eqref{lpepsnew}.}{a normalized
  version, referred to as~\eqref{lpepsnew} (see Figure~\ref{fig:newlp}
  in \S\ref{sec:unit-appendix}).} Now $n_x$ can be interpreted as
just being proportional to number of bins of size $b_x$, and we can
define $f_x =n_x/\sum_x n_x$. However, we also need each $f_x$
to be an integer multiple of $1/T$ for some integer $T = O(\eps^{-1})$. We achieve this by slightly modifying the LP solution, obtaining the following.

\ifshort
\begin{restatable}[Multiples of $\eps$]{lem}{UnitTildeN}
  \label{lem:tildeN}
  For any optimal solution $\{n_x\}$ to~\eqref{lpepsnew} with 
  objective value $\alpha_\eps$, we can construct in linear time a  
  solution $\{\tilde{n}_x\}\subseteq \eps\cdot \mathbb{N}$ with objective value 
  $\alpha_\eps + O(\eps)$.
\end{restatable}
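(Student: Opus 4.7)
The plan is to produce the rounded solution not by rounding each individual $n_x$ -- since $|\mathcal{S}_\eps| = O(\eps^{-1})$ and per-variable errors of size $\Theta(\eps)$ could accumulate to a non-negligible constant and violate the tight competitive-ratio constraints (3) -- but by rounding the \emph{cumulative partial sums} of the optimal solution up to multiples of $\eps$ and then taking successive differences. Differences of multiples of $\eps$ are automatically multiples of $\eps$, so this gives $\tilde n_x\in\eps\cdot\mathbb{N}$ for free; the single $\eps$-wide rounding error introduced into each partial sum can then be absorbed into an $O(\eps)$ additive slack on the objective.

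Concretely, index $\mathcal{S}_\eps$ as $x_1<\cdots<x_k$ with $x_i=\tfrac12+i\eps$, and let $c_i:=n_0+\sum_{j=1}^{i} n_{x_j}$ be the partial sums, so constraint (3) of \eqref{lpepsnew} at $t=x_{i+1}$ reads $c_i\le \alpha'_\eps/(1-x_{i+1})$, and constraint (2) reads $c_k\le \alpha'_\eps+1$. Set $\tilde c_i:=\lceil c_i/\eps\rceil\,\eps\in[c_i,c_i+\eps)$, and then $\tilde n_0:=\tilde c_0$, $\tilde n_{x_i}:=\tilde c_i-\tilde c_{i-1}$ for $i\ge 1$, and $\tilde\alpha'_\eps:=\alpha'_\eps+\eps$. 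The $\tilde n_x$'s are nonnegative multiples of $\eps$ (the $\tilde c_i$ are monotone nondecreasing), and the whole procedure runs in $O(k)=O(\eps^{-1})$ time, i.e.\ linear in the LP size.

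It remains to verify feasibility of \eqref{lpepsnew}. Constraint (3) at $t=x_j$ reads $\tilde c_{j-1}\le c_{j-1}+\eps\le \alpha'_\eps/(1-x_j)+\eps\le \tilde\alpha'_\eps/(1-x_j)$ (using $1-x_j\le 1/2<1$), and constraint (2) reads $\tilde c_k\le c_k+\eps\le \tilde\alpha'_\eps+1$. The one step that actually requires care, and which I expect to be the main obstacle, is the volume inequality (1): since we only rounded \emph{partial sums}, it is \emph{not} true in general that $\tilde n_{x_i}\ge n_{x_i}$ individually, so we cannot just say ``rounding up preserves volume.'' The clean fix is an Abel-summation rewrite that re-expresses the volume entirely in terms of partial sums,
\[
\tilde n_0+\sum_{i=1}^{k}(1-x_i)\,\tilde n_{x_i} \;=\; x_1\,\tilde c_0\;+\;(1-x_k)\,\tilde c_k\;+\;\eps\sum_{i=1}^{k-1}\tilde c_i ,
\]
with the identical identity holding for $\{c_i\}$. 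Since $\tilde c_i\ge c_i$ for every $i$, the volume only grows, so constraint (1) is preserved. Hence $(\{\tilde n_x\},\tilde\alpha'_\eps)$ is feasible with objective $\tilde\alpha'_\eps+1=\alpha_\eps+O(\eps)$, as required.
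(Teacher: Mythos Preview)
Your proof is correct and takes a genuinely different, cleaner route than the paper's.

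The paper rounds each $n_x$ individually, choosing to round up or down so that the running prefix-sum error stays in $[-\eps,\eps]$. This immediately handles constraints (2) and (3), but the volume constraint (1) becomes delicate: since some $\tilde n_x$ are rounded down, the paper needs a page-long argument (tracking sign changes of the prefix-sum error, a unimodality observation, and summing $2\eps^2|S_i|$ over intervals) to show $|\sum_x x\cdot\Delta n_x|\le O(\eps)$, after which they still have to bump $n_0$ up by $O(\eps)$ to restore feasibility.

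You instead round the \emph{cumulative} sums $c_i$ always \emph{up} to multiples of $\eps$ and take differences. Monotonicity of $\lceil\cdot\rceil$ gives nonnegativity of the $\tilde n_{x_i}$ for free, and your Abel-summation identity
\[
\tilde n_0+\sum_i(1-x_i)\tilde n_{x_i}=x_1\tilde c_0+(1-x_k)\tilde c_k+\eps\sum_{i=1}^{k-1}\tilde c_i
\]
has all nonnegative coefficients, so $\tilde c_i\ge c_i$ immediately yields constraint (1) without any loss at all---no $O(\eps)$ correction to $n_0$ is needed. The only slack you introduce is the single $\eps$ added to $\alpha'_\eps$, which is already enough for constraints (2) and (3) since $1-x_j\le 1$. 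This is strictly simpler than the paper's argument and gives the same conclusion with the same (indeed slightly better) constants.
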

\fi
\iffull
\UnitTildeN*

\begin{proof}
  For sake of brevity, let $\mathcal{S}_{\eps}' := \mathcal{S}_{\eps} \cup \{0\}$. Consider
  the indices $x \in \cN'_\eps$ in increasing order, and modify $n_x$ in
  this order. Let $\Delta_x := \sum_{x' \in \cN'_\eps : x' < x} n_{x'}$,
  and define $\tilde{\Delta}_x$ be the analogous expression for
  $\tilde{n}_x$. We maintain the invariant that $|\Delta_x -
  \tilde{\Delta}_x| \leq \eps$, which is trivially true for the base
  case $x = 0$. Inductively, suppose it is true for $x$. If $\Delta_x >
  \tilde{\Delta}_x$, define $\tilde{n}_x$ be $n_x$ rounded up to the
  nearest multiple of $\eps$, otherwise it is rounded down; this
  maintains the invariant. If we add $O(\eps)$ to the old $\alpha_\eps'$
  value, this easily satisfies the second and third set of constraints,
  since our rounding procedure ensures that the prefix sums are
  maintained up to additive $\eps$, and $t \in [0, \frac12]$. Checking
  this for the first constraint turns out to be more subtle. 
  We claim
  that 
  \[ \tilde{n}_0 + \sum_{x \in \mathcal{S}_{\eps}} (1-x) \tilde{n}_x \geq 1 -
  O(\eps). \] 
  
  For an element $x \in \mathcal{S}_{\eps}'$, let $\Delta n_x$ denote $n_x - \tilde{n}_x$. It is enough to show that 
  $|\sum_{x \in \mathcal{S}_{\eps}'} x \cdot \Delta n_x| \leq O(\eps)$. Indeed, 
  \begin{align*}
  \tilde{n}_0 + \sum_{x \in \mathcal{S}_{\eps}} (1-x) \tilde{n}_x & = n_0 + \sum_{x \in \mathcal{S}_{\eps}} (1-x) n_x - \Delta n_0 - \sum_{x \in \mathcal{S}_{\eps}} (1-x) \Delta n_x \\
  & \geq 1  -  \sum_{x \in \mathcal{S}_{\eps}'} \Delta n_x  + \sum_{x \in \mathcal{S}_{\eps}'} x \cdot \Delta n_x \geq 1 - O(\eps) +  \sum_{x \in \mathcal{S}_{\eps}'} x \cdot \Delta n_x . 
  \end{align*}
  
  We proceed to bound $\sum_{x \in \mathcal{S}_{\eps}'} x \cdot \Delta
  n_x$. Define $\Delta n_{\leq x}$ as $\sum_{x' \in \mathcal{S}_{\eps}': x' \leq
    x} \Delta n_{x'}$.  Define $\Delta n_{< x}$ analogously.  Note that
  $\Delta n_{\leq x}$ stays bounded between $[-\eps, +\eps]$.  Let $I$
  denote the set of $x \in \mathcal{S}_{\eps}'$ such that $\Delta n_{\leq x}$
  changes sign, i.e., $\Delta n_{< x}$ and $\Delta n_{\leq x}$ have
  different signs. We assume w.l.o.g.\ that $\Delta n_{\leq x} = 0$ for any
  $x \in I$---we can do so by splitting $\Delta n_x$ into two parts
  (and so, having two copies of $x$ in $\mathcal{S}_{\eps}'$).  Observe that for
  any two consecutive $x_1, x_2 \in I$, the function $\Delta n_{\leq x}$ is
  unimodal as $x$ varies from $x_1$ to $x_2$, i.e., it has only one local
  maxima or minima. This is because $\Delta n_x$ is negative if $\Delta
  n_{< x}$ is positive, and \emph{vice versa}.
  
  Let the elements in $I$ (sorted in ascending order) be $x_1, x_2,
  \ldots, x_q$. Let $S_i$ denote the elements in $\mathcal{S}_{\eps}'$ which lie
  between $x_i$ and $x_{i+1}$, where we include $x_{i+1}$ but exclude
  $x_i$. Note that $\sum_{x \in S_i} \Delta n_x = \Delta n_{\leq
    x_{i+1}} - \Delta n_{\leq x_i} = 0. $ Let $x_i' = x_i + \eps$ be the
  smallest element in $S_i$. Now observe that
  $$ \sum_{x \in S_i} x \cdot \Delta n_x = x_i' \cdot \sum_{x \in S_i} \Delta n_x  + \sum_{x \in S_i} (x - x_i') \cdot \Delta n_x = 
  \sum_{x \in S_i} (x - x_i') \cdot \Delta n_x.$$ Because of the
  unimodal property mentioned above, we get $\sum_{x \in S_i} |\Delta
  n_x| \leq 2 \eps$. Therefore, the absolute value of the above sum is
  at most $2 \eps \cdot (x_{i+1} - x_i') \leq 2 \eps^2 |S_i|$, using
  that $x_{i+1} - x_i' = (|S_i|-1)\eps$.  Now summing over all $S_i$ we
  see that $\sum_x x \cdot \Delta n_x \leq O(\eps)$ because $|\mathcal{S}_{\eps}|$
  is $O(\eps^{-1})$. This proves the desired claim.


  So to satisfy the first constraint we can increase $n_0$ by
  $O(\eps)$. And then, increasing $\alpha_\eps'$ by a further $O(\eps)$
  satisfies the remaining constraints, and proves the lemma.
\end{proof}
\fi

Using a near-optimal solution to \eqref{lpeps} as in the above lemma, 
we obtain a bin curve-fitting instance with $T=O(\eps^{-1})$. Noting 
that if we ignore the last bucket's $O(T/\eps^{-1})=O(\eps^{-2})$ 
bins in our solution, we obtain an $(\alphas_\eps+O(\eps))$-feasible 
packing (with additive term $O(\eps^{-2}))$ of the remaining items by our algorithm above, using $O(\eps^{-2})$ worst-case recourse. We obtain the following.

\ifshort
\begin{restatable}[Small Items Follow the LP]{lem}{UnitSmallCurveFitting}
  \label{lem:small-curve-fitting-ub}
  Let $\eps \leq 1/6$. 
  Using $O(\eps^{-2})$ worst-case recourse we can maintain packing of 
  small items such that the content of all but $O(\eps^{-2})$ 
  designated bins in this packing form 
  an $(\alphas_\eps+O(\eps))$-feasible packing.
\end{restatable}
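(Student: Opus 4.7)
The plan is to instantiate the ``clumps and buckets'' scheme sketched above with parameters derived from the LP. First, I would invoke Lemma~\ref{lem:tildeN} to obtain values $\{\tilde{n}_x\}_{x\in\mathcal{S}_{\eps}\cup\{0\}}$ that are multiples of $\eps$, feasible for \eqref{lpepsnew} with objective value $\alpha^{\star}_{\eps}+O(\eps)$. Setting $T := (\sum_x \tilde{n}_x)/\eps = O(\eps^{-1})$ and $f_x := \tilde{n}_x/\sum_y \tilde{n}_y$ yields $f_x T = \tilde{n}_x/\eps \in \mathbb{N}$, so a \emph{clump} is a block of $T$ bins comprising exactly $\tilde{n}_x/\eps$ bins of target capacity $b_x = 1-x$ for each $x$. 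Bins within a clump are ordered by decreasing target capacity, items are kept sorted by size globally across bins, and clumps are grouped into \emph{buckets} of between $1/\eps$ and $3/\eps$ consecutive clumps (so each bucket contains $\Theta(\eps^{-2})$ bins).

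Next, I would describe the updates, mimicking the warm-up algorithm of Lemma~\ref{lem:small-ub} but respecting the target-capacity profile. On insertion, the new item is placed in the correct bin determined by its size, and if that bin exceeds its target capacity, the largest item in the bin is pushed forward to the next bin, and the cascade continues through the bucket. If the last bin of the final bucket overflows, a fresh clump of $T$ bins is appended; once a bucket contains more than $3/\eps$ clumps it is split into two (pure bookkeeping). Deletions are analogous: an underfull bin borrows its target-sized item from successor bins in the bucket, an empty trailing clump is removed, and an undersized trailing bucket is merged with its predecessor (re-splitting if necessary). Since cascades are confined to a single bucket and trigger at most one item move per bin, the worst-case recourse per update is $O(\eps^{-2})$.

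For the feasibility analysis, I would designate the single trailing bucket (of at most $3T/\eps = O(\eps^{-2})$ bins) as the exempt set, and argue that the remaining bins form the desired $(\alpha^{\star}_{\eps}+O(\eps))$-feasible packing. By the bucket invariant, every remaining bin of target capacity $b_x$ holds volume in $[1-x-\eps, 1-x]$, so its free space lies in $[x, x+\eps)$, meaning it contributes to $N_x$. Because each clump exactly realizes the distribution $\{f_x\}$, the counts satisfy $N_x = N f_x$ for $N$ the number of non-exempt bins, up to at most one clump of rounding error per affected bucket boundary which is absorbed into the additive exemption. The LP constraints then follow by rescaling the LP solution: the volume lower bound \eqref{Vol-eps} follows from the fullness invariant, \eqref{small-eps} follows from the packed volume satisfying $B \geq N \cdot \sum_y (1-y)\tilde{n}_y / \sum_y \tilde{n}_y \geq N/(\alpha^{\star}_{\eps}+O(\eps))$ using $\sum_y(1-y)\tilde{n}_y \geq 1-O(\eps)$, and \eqref{CR-t} is inherited from the corresponding constraint on $\tilde{n}_x$ in \eqref{lpepsnew} after multiplying through by $N/\sum_y \tilde{n}_y$ and absorbing lower-order losses into the $O(\eps)$ slack of $\alpha_{\eps}$.

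The main obstacle is ensuring that only the designated $O(\eps^{-2})$ bins ever fail the fullness invariant, despite arbitrary interleavings of insertions and deletions that can target any bucket. In particular, after a deletion from an interior bucket one must redistribute items so as to leave that bucket's bins full and concentrate all slack at the global trailing bucket; this is what motivates cascades and borrowings that may traverse multiple buckets when necessary, while ensuring that each such traversal still fits within a single bucket's worth of work by charging any cross-bucket movement to an adjustment of the trailing bucket's size. Verifying that the invariant is restored after every operation, and that this can be done within the $O(\eps^{-2})$ recourse budget, is the delicate bookkeeping step of the proof.
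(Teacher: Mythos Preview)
Your setup is right: the instantiation of $T$, $f_x$, clumps, and buckets matches the paper, and the recourse bound of $O(\eps^{-2})$ from a single-bucket cascade is exactly the paper's argument. The LP verification sketch is also along the right lines.

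The gap is in your invariant. You assert that \emph{every} non-exempt bin of target capacity $b_x$ holds volume in $[1-x-\eps,1-x]$, and then flag as the ``main obstacle'' the need to concentrate all slack in the single global trailing bucket after a deletion in an interior bucket. This is both stronger than what the paper maintains and stronger than what is needed. In the paper's scheme, cascades are strictly confined to one bucket, so \emph{each} bucket has its own trailing clump (of $T$ bins) that may be arbitrarily underfull. There is no cross-bucket redistribution, and the ``delicate bookkeeping'' you describe simply does not arise. The reason this weaker invariant suffices is an averaging argument you are missing: since every non-last bucket contains at least $1/\eps$ clumps, the underfull trailing clumps constitute at most an $\eps$-fraction of all clumps. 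Hence if $N$ is the total number of clumps (outside the exempt last bucket) and $\eta = \sum_x f_x b_x$ is the target average fill, at least $(1-\eps)N$ clumps are $(\eta-\eps)$-full on average, giving $N \le \frac{B}{(1-\eps)(\eta-\eps)T} = (1+O(\eps))\frac{B}{\eta T}$. The number of bins of target size $b_x$ is then $(1\pm O(\eps))\cdot B f_x/\eta$, and this multiplicative $O(\eps)$ loss is absorbed into the $O(\eps)$ slack on $\alpha_\eps$ when checking \eqref{Vol-eps}, \eqref{small-eps}, and \eqref{CR-t}.

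In short: drop the goal of pushing all slack to a single global bucket. Allow one slack clump per bucket, and replace your per-bin fullness claim with the fractional argument above. Your proposed cross-bucket charging is both unnecessary and, as stated, would not yield worst-case $O(\eps^{-2})$ recourse.
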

\fi

\iffull
\UnitSmallCurveFitting*

\begin{proof}
  The recourse bound is immediate, as each insertion or deletion causes
  a single item to move from at most $T\cdot 3/\eps$ bins. 
  For the rest of the argument, ignore the last bucket with
  $O(\eps^{-2})$ bins.  Let the total volume of items in the other bins
  be $B$.  Since $\eta = \sum f_x b_x$ is the average bin-size, we
  expect to use $\approx B/\eta$ bins for these items. We now show that
  we use at most $(1+O(\eps)) \cdot \frac{f_x B}{\eta}$ and at least
  $(1-O(\eps)) \cdot \frac{f_x B}{\eta}$ bins of size $b_x$ for each $x$.  

  Indeed, each (non-last) bucket satisfies the property that all bins in
  it, except perhaps for those in the last clump, are at least
  $\eps$-close to the target value. Since each bucket has at least
  $1/\eps$ clumps, it follows that if there are $N$ clumps and the
  target average bin-size is $\eta$, then $(1-\eps)N$ clumps are at
  least $(\eta-\eps)$ full on average. The total volume of a clump is
  $\eta \cdot T$, so $N \leq \frac{B}{(1-\eps) (\eta - \eps) \cdot T} =
  \frac{B}{\eta T}(1+O(\eps))$, where we use that $\eta \geq
  1/4$. Therefore, the total number of bins of size $b_x$ used is $f_x T
  \cdot N \leq (1+O(\eps)) \cdot \frac{B f_x}{\eta}$. The lower bound
  for the number of bins of size $b_x$ follows from a similar argument
  and the observation that if we scale the volume of small items up by a
  factor of $(1+O(\eps))$, this volume would cause each bin to be filled
  to its target value. This implies that we use at least $(1-O(\eps))
  \cdot \frac{B f_x}{\eta}$ bins with size $b_x$.

  We now show that the $\bar{N_x}$ satisfy \ref{lpepsnew} with $\alpha_\epsilon = \alpha+O(\epsilon)$. Recall that we started with an optimal solution to~\ref{lpepsnew} of value
  $\alphas_\eps + O(\eps)$, used Lemma~\ref{lem:tildeN} to get $f_x =
  \frac{\tilde{n_x}}{\sum_{x'} \tilde{n_x}}$ and ran the algorithm above.
  By the computations above, ${\bar N}_x$,
  the number of bins of size $b_x$ used by our algorithm, is
  $$ (1+O(\eps)) \cdot \frac{f_x B }{\sum_{x'} f_{x'} b_{x'}} =
  (1+O(\eps)) \cdot \frac{{\tilde n}_x B}{\sum_{x'} \tilde n_{x'}b_{x'}}
  \leq (1 + O(\eps))\cdot  {\tilde n}_x B, $$ where the last inequality
  follows from the fact that $\sum_{x'} \tilde n_{x'}b_{x'} \geq 1$ (by
  the first constraint of~\ref{lpepsnew}). 
  Likewise, by the same argument, we find that these $\bar{N_x}$ satisfy \ref{CR-t} with $\alpha_\epsilon = \alpha + O(\epsilon)$.
  Finally, since $\tilde{n}_x$
  satisfies constraints of~\ref{lpepsnew} (up to additive $O(\eps)$
  changes in $\alphas_\eps$), we can verify that the quantities ${\bar
    N}_x$ satisfy the last two constraints of~\ref{lpeps} (again up to
  additive $O(\eps)$ changes in $\alphas_\eps$). To see that they also
  satisfy~\ref{Vol-eps}, we use the following calculation: 
  \[ \sum_x {\bar N}_x \geq (1-3 \eps) \cdot \sum_x \frac{f_x B}{\sum_x b_x
    f_x } = (1-O(\eps))\cdot  \frac{B}{\sum_x b_x f_x} \geq (1-O(\eps))\cdot  B, \]
  because $\sum_x f_x = 1$ and $b_x \leq 1$ for all $x$. Therefore,
  scaling all variables with $(1-O(\eps))$ will satisfy
  constraint~\ref{Vol-eps} as well.  It follows that ${\bar N}_x$
  satisfy~\ref{lpeps} with $\alpha_\eps = \alphas_\eps + O(\eps)$.
  Finally, since $\alphas_\eps = \alpha + O(\eps)$, we get the claim.
\end{proof}
\fi

\subsubsection{Our Algorithm}
\fullshort{\label{sec:mainalgo-full}}{\label{sec:mainalgo}}

\ifshort
From \Cref{lem:tildeN} and \Cref{lem:small-curve-fitting-ub}, we can maintain an $(\alphas_\eps+O(\eps))$-feasible packing of the small items (that is, a packing of inducing a solution to \eqref{lpeps} with objective value $(\alphas_\eps+O(\eps))$), all while using $O(\eps^{-2})$ worst-case recourse. 
From \Cref{thm:only-large-ub-amortized} and \Cref{thm:only-large-ub-wc}, 
using a $(1+\eps)$-\acr packing of the large items one can extend such a packing of the small items into an $(\alphas_\eps+O(\eps))$-approximate packing for $\cI_t$, where $\alphas_\eps \leq \alpha +O(\eps)$, by \Cref{lem:lp-opt-lb-ub-combo}. It remains to address the recourse incurred by extending this packing to also pack the large items.

\mysubsubsection{Amortized Recourse} Here we periodically recompute in linear time the extension guaranteed by  \Cref{thm:only-large-ub-amortized}. 
Dividing the time into epochs and lazily 
addressing updates to large items (doing 
nothing on deletion and opening new bins 
on insertion) for $\eps\cdot N$ steps, 
where $N$ is the number of bins we use at 
the epoch's start, guarantees a 
$(\alpha+O(\eps))$-\acr throughout the 
epoch. As for this approach's recourse, 
we note that the number of large items at 
the end of an epoch of length 
$\epsilon\cdot N$ is at most 
$O(N/\epsilon)$, and so repacking them 
incurs $O(N/\eps)/(\epsilon\cdot N)=O(\epsilon^{-2})$ amortized recourse.

\mysubsubsection{Worst-Case Recourse} To obtain worst-case recourse we rely on the fully-dynamic 
$(1+\eps)$-\acr algorithm of Berndt al.~\cite[Theorem 9]{berndt15fully} to maintain 
a $(1+\eps)$-approximate packing of the sub-instance made of items of size exceeding $1/4$ 
using $\tilde{O}(\eps^{-3})$ size cost recourse, and so at most $\tilde{O}(\eps^{-3})$ item moves 
(as these items all have size $\Theta(1)$). By \Cref{thm:only-large-ub-wc}, our 
$(\alpha+O(\eps))$-feasible solution for the small items of $\cI_t$ can be extended 
dynamically to an $(\alpha+(O(\eps)))$-\acr packing of all 
of $\cI_t$, using $\tilde{O}(\eps^{-4})$ 
worst-case recourse. 

From the above discussions  we obtain our main result -- tight algorithms for unit costs.

\begin{restatable}[Unit Costs: Upper Bound]{thm}{UnitUnitUb}
  \label{thm:unit-ub}
  There is a polytime fully-dynamic \BP algorithm which
  achieves $\alpha+O(\eps)$ \acr, additive
  term $O(\eps^{-2})$ and $O(\eps^{-2})$ amortized recourse, or additive term $poly(\eps^{-1})$ and $\tilde{O}(\eps^{-4})$ worst 
  case recourse, under unit movement costs. 
\end{restatable}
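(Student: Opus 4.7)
The plan is to assemble the ingredients already developed in the preceding subsections. By Lemma~\ref{lem:lp-opt-lb-ub-combo}, the optimal value $\alphas_\eps$ of \eqref{lpeps} is at most $\alpha+O(\eps)$, so it suffices to achieve \acr $\alphas_\eps+O(\eps)$. The skeleton of the algorithm in both the amortized and worst-case variants is the same: (i) maintain the small items (of size $\leq \eps$) dynamically according to a near-optimal solution of \eqref{lpeps} so that, at every time $t$, their packing is $(\alphas_\eps+O(\eps))$-feasible, and (ii) extend this to a packing of all of $\cI_t$ using the large items on top of the ``curve''. Step (i) is precisely what Lemma~\ref{lem:tildeN} and Lemma~\ref{lem:small-curve-fitting-ub} give us, with worst-case recourse $O(\eps^{-2})$ and an additive slack of $O(\eps^{-2})$ bins coming from the last bucket. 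What changes between the two regimes is how we execute step (ii).

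For the amortized variant I would run the algorithm in epochs. At the start of an epoch let $N$ denote the current number of bins; invoke Theorem~\ref{thm:only-large-ub-amortized} to recompute, in linear time, an extension of the current small-item packing to an $(\alphas_\eps+O(\eps))$-competitive packing of $\cI_t$ with additive term $O(\eps^{-2})$. During the following $\eps N$ updates, handle updates to large items lazily: open a fresh bin on every large insertion and leave bins intact on large deletions (while still running the curve-fitting algorithm for small items). Since $OPT$ can change by at most $O(1)$ per update, the number of bins grows by at most $\eps N +O(1)$ during the epoch, so the \acr stays $\alphas_\eps+O(\eps)$ throughout. When the epoch ends (or when $OPT$ doubles/halves), recompute from scratch. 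Because each large item has size at least $\eps$, there are at most $O(N/\eps)$ large items in play, so a full recomputation moves $O(N/\eps)$ items; amortized over the $\Theta(\eps N)$ updates of an epoch, this contributes $O(\eps^{-2})$ amortized recourse, matching the worst-case $O(\eps^{-2})$ recourse from the curve-fitting of small items.

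For the worst-case variant I would invoke Theorem~\ref{thm:only-large-ub-wc}, which already delivers an $(\alphas_\eps+O(\eps))$-\acr packing with worst-case recourse dominated by (a) moving items in the fully-dynamic packing of items of size exceeding $1/4$, (b) medium-item updates, and (c) moves in the small-item curve. For ingredient (a) I would plug in the fully-dynamic $(1+\eps)$-\acr algorithm of \cite{berndt15fully}, which has $\tilde O(\eps^{-3})$ size-cost migration; since the relevant items have $s_i=\Theta(1)$, this translates into $\tilde O(\eps^{-3})$ item moves per update. Combined with the $O(\eps^{-1})$ multiplicative blow-up from Theorem~\ref{thm:only-large-ub-wc}(1), (b) gives $O(\eps^{-1})$ per medium update, and (c) the $O(\eps^{-2})$ from curve-fitting of small items, the worst-case recourse is $\tilde O(\eps^{-4})$, and the additive term becomes $\text{poly}(\eps^{-1})$.

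I do not anticipate a genuine technical obstacle here since each component theorem has already been stated; the only care needed is bookkeeping. The main thing to verify is the epoch-doubling argument for the amortized case, namely that an epoch length tied to the current $N$ (rather than some fixed initial value) does not allow $OPT$ to drift enough to break the \acr guarantee, and that restarting whenever $N$ changes by a constant factor does not inflate the amortized cost. Both are standard; after handling them, collecting the bounds yields the theorem.
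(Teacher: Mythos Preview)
Your proposal is correct and follows essentially the same approach as the paper: both assemble Lemmas~\ref{lem:lp-opt-lb-ub-combo}, \ref{lem:tildeN}, \ref{lem:small-curve-fitting-ub} for the small-item curve, use Theorem~\ref{thm:only-large-ub-amortized} with $\eps N$-length epochs and lazy handling of large items for the amortized bound, and plug \cite{berndt15fully} into Theorem~\ref{thm:only-large-ub-wc} for the worst-case bound. The only cosmetic difference is that the paper's detailed proof defers \emph{all} updates (small and large) to the epoch boundary, whereas you run the curve-fitting live on small items; since Lemma~\ref{lem:small-curve-fitting-ub} already gives worst-case $O(\eps^{-2})$ recourse, both variants work.
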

\fi

\iffull
\UnitUnitUb*

\begin{proof}
  We divide time into epochs. At the beginning of an epoch, let
  ${\mathcal I}$ be the set of items with ${\mathcal I}^s$ and
  ${\mathcal I}^l$ being the small and the large items respectively.
  The algorithm will satisfy the following invariants
  \begin{itemize}
  \item Small items ${\mathcal I}^s$ are packed as in
    Lemma~\ref{lem:small-curve-fitting-ub}: apart from a set of
    $O(\eps^{-2})$ {\em extra bins}, the remaining small items of volume
    $B$ are packed according to profile given by ${\bar N}_x$.
  \item The large items form a canonical packing with respect to the
    bins above; we ignore the $O(\eps^{-2})$ extra bins for packing the
    large items.
  \end{itemize}
  If the above invariants hold, Theorem~\ref{thm:only-large-ub} shows
  that the number of bins, $N$, used by our algorithm at this time is at
  most $(1+O(\eps)) OPT({\mathcal I}) + O(\eps^{-2})$. This epoch ends
  after $\eps N$ operations (i.e., insert or delete of an
  item). Whenever an item is deleted, we do not do anything. If an item
  is inserted we create a new bin and add the item to it. Let ${\mathcal
    I}'$ be the instance at the end of this epoch. When the epoch ends,
  we are using at most $N(1 + \eps)$ bins. Further, the optimal value
  $OPT({\mathcal I}')$ may have gone down to $OPT({\mathcal I}) - \eps
  N$. Therefore,
  $$ (1+\eps)N \leq (1 + O(\eps)) OPT({\mathcal I}) + O(\eps^{-2}) \leq
  (1 + O(\eps)) (OPT({\mathcal I}')  + \eps N) + O(\eps^{-2}), $$ 
  which implies that $N$ remains $(1 + O(\eps) OPT({\mathcal I}')$. Let
  $I^s$ and $D^s$ be the small items inserted and deleted during this
  epoch respectively. Let ${{\mathcal I}'}^l$ be the large items in the
  instance ${\mathcal I}'$. When the epoch ends, we use the
  fully-dynamic algorithm of \S\fullshort{\ref{sec:fitcurve-full}}{\ref{sec:fitcurve}} to insert/delete
  the items in $I^s$ and $I^d$ respectively. After this, we solve the
  algorithm in \S\fullshort{\ref{sec:largeitems-full}}{\ref{sec:largeitems}} to find a canonical packing
  of large items into these bins (as before, we ignore the ``extra''
  bins in the last bucket). After these operations, it follows that both
  the invariants are satisfied at the beginning of the next epoch again.
  
  \emph{Recourse cost.} The recourse cost for inserting/deleting $I^s$
  and $I^d$ respectively is $O(\eps^{-2})$
  (Lemma~\ref{lem:small-curve-fitting-ub}). For large items, notice that
  there were at most $N/\eps$ large items in the instance ${\mathcal
    I}$, and $\eps N$ more may have arrived during this epoch. Our
  algorithm may move all the large items to different bins, but the
  recourse cost is at most $O(N/\eps)$. Since $\eps N$ operations
  occurred during this epoch, it follows that the amortized recourse
  cost for large items is also $O(\eps^{-2})$. 

  \emph{Efficient Implementation.} Computing an optimal solution to
  \ref{lpepsnew} can be easily done in $poly(\eps^{-1})$ time.
  For the solution of the \ref{ILP1} the ideas of Fernandez de la Vega
  and Lueker~\cite{de1981bin} can be used. The number of variables of
  \ref{ILP1} is proportional to the number of valid configurations,
  which is at most $(\eps^{-1})^{O(\eps^{-2})}$.  The number of constraints
  is $O(\eps^{-1})$, since we only have $O(\eps^{-1})$ different sizes in
  $\mathcal{S}_{\eps}$. Hence there are at most $O(\eps^{-1})$ non-zero
  variables in any basic feasible solution to the LP obtained by
  relaxing this configuration~\ref{ILP1}. Consequently, the na\"ive
  rounding of a basic solution to the LP relaxation of \ref{ILP1} (i.e.,
  taking the ceiling of every value) incurs only a $O(\eps^{-1})$ additive
  term compared to the optimal solution of the LP relaxation, and
  therefore of the optimal solution of the ILP. The dominant additive
  term of our algorithm remains $O(\eps^{-2})$.
\end{proof}
\fi


\section{General Movement Costs}
\label{sec:general}

We now consider 
the case of general movement costs, and show a close connection with
the (arrival-only) online problem. We first show that the
fully-dynamic problem under general movement costs cannot achieve a
better \acr than the online problem.  Next, we 
match the \acr of any \textsc{super-harmonic} algorithm
for the online problem in the fully-dynamic setting.

\subsection{Matching the Lower Bounds for Online Algorithms}
\label{sec:general-lower-bounds}

Formally, an \emph{adversary process} $\calB$ for the online \BP
problem is an adaptive process that, depending on the state
of the system (i.e., the current set of configurations used to pack the
current set of items) either adds a new item to the system, or stops the
request sequence. We say that an adversary process shows a lower bound
of $c$ for online algorithms for \BP if for any
online algorithm $\calA$, this process starting from the
empty system always eventually reaches a state where the \acr is at
least $c$.

\begin{restatable}{thm}{LBGeneralCost}\label{thm:lb-gen-recourse}
  Let $\beta \geq 2$. Any adversary process $\calB$ showing a lower
  bound of $c$ for the \acr of any online \BP algorithm can
  be converted into a fully-dynamic \BP instance with general
  movement costs such that any fully-dynamic \BP algorithm
  with amortized recourse at most $\beta$ must have \acr at least $c$.
\end{restatable}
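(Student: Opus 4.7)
The plan is to adaptively simulate $\calB$ against $\calA$ while assigning to the $i$-th item ever created (in order of first insertion) a \emph{geometrically decaying} movement cost $c(e_i) := \delta^i$, for a small parameter $\delta$ depending only on $\beta$. Let $K$ be large enough that $\calB$ drives every recourse-less online \BP algorithm into a configuration of \acr at least $c - o(1)$ within $K$ items; the goal is to steer $\calA$ to a state where, on some sequence of $K$ items, it has effectively acted as such a recourse-less online algorithm. The decay is what makes this possible: with rapidly decreasing costs, the amortized budget $\beta\sum_t c_t$ is too small for $\calA$ to afford many moves of any early item.

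Concretely, I maintain a counter $k$ and a ``virtual trajectory'' $(\pi_1,\ldots,\pi_k)$ recording the bins where $\calA$ has stably placed $e_1,\ldots,e_k$. At each step I query $\calB$ for the next item's size given $(\pi_1,\ldots,\pi_k)$, create a fresh item $e_{k+1}$ with that size and cost $\delta^{k+1}$, and insert it. If $\calA$ places it without moving any earlier item, I record $\pi_{k+1}$ and set $k\gets k+1$. Otherwise, letting $j^\star$ be the smallest index of a moved item, I delete $e_{k+1},\ldots,e_{j^\star+1}$ in reverse order (handling recursively any cascaded rollbacks triggered during these deletions), update $\pi_{j^\star}$ to $\calA$'s new bin, and set $k\gets j^\star$. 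The process stops the first time $k = K$.

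The crux is showing termination. Let $M_j$ be the total number of times $\calA$ ever moves an item of index $j$; each such move costs $\delta^j$, so $\calA$'s total movement cost is at least $\sum_j M_j \delta^j$. Letting $r_i$ be the number of distinct items of index $i$ ever created, the total cost of updates is at most $2\sum_i r_i \delta^i$ (each such item contributes one insertion and at most one deletion to the cost sum), and since $\beta \ge 2$ comfortably absorbs the placement overhead, the amortized recourse bound gives $\sum_j M_j \delta^j \le 2\beta \sum_i r_i \delta^i$. A fresh $e_i$ is created each time $k$ climbs back up to $i$ after dropping below it, which requires a prior move of some $e_j$ with $j<i$; hence $r_i \le 1 + \sum_{j<i} M_j$. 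Writing $S := \sum_j M_j \delta^j$, the above collapses to the self-bounding inequality
\begin{equation*}
S \;\le\; \frac{2\beta\delta}{1-\delta} \;+\; \frac{2\beta\delta}{1-\delta}\cdot S.
\end{equation*}
Choosing $\delta \le 1/(4\beta+1)$ bounds the coefficient of $S$ on the right by $1/2$, whence $S \le O(\beta\delta)$ and $M_j \le S/\delta^j = O(\beta\delta^{1-j})$ is finite for every $j \le K$. Consequently only finitely many rollbacks occur at each level, and the process halts at $k = K$.

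At this termination time $t$, $\calA$'s current packing of $\{e_1,\ldots,e_K\}$ is exactly $(\pi_1,\ldots,\pi_K)$: each $\pi_i$ was recorded the last time $\calA$ placed the final incarnation of $e_i$ without moving earlier items, and no subsequent move of $e_i$ occurred (else we would have rolled back or deleted it). Since $\calB$'s queries depend only on the stable trajectory, the same $(\pi_1,\ldots,\pi_K)$ is also what the online algorithm ``place the $i$-th item in $\pi_i$ and never repack'' would produce against $\calB$, so the online lower bound of $\calB$ applies and yields that $\calA$ uses at least $c \cdot OPT(\cI_t)$ bins at time $t$; taking $K$ arbitrarily large makes $OPT(\cI_t)$ dominate any fixed additive slack, forcing the \acr of $\calA$ to be at least $c$. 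The main obstacle is the termination argument in the presence of cascaded rollbacks; the geometric decay of costs is essential, because it is precisely what allows the coefficient in the self-bounding inequality above to be tuned arbitrarily close to zero.
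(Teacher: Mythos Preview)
Your construction and termination argument are essentially the same as the paper's: both assign geometrically decaying movement costs to items in order of creation, roll back (delete down to the smallest-index item that was repacked) whenever $\calA$ moves an earlier item, and show that the amortized recourse budget is too small for $\calA$ to trigger infinitely many rollbacks at any fixed level. The paper packages the termination argument as a potential function (budget minus spending stays nonnegative, yet would strictly decrease across consecutive visits to any level), whereas you derive the same conclusion via the self-bounding inequality on $S=\sum_j M_j\delta^j$; these are two presentations of the same accounting.

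One imprecision to fix: you posit a uniform $K$ such that $\calB$ forces \acr $\ge c-o(1)$ against \emph{every} online algorithm within $K$ items. The definition of an adversary process only guarantees that, for each online algorithm, $\calB$ \emph{eventually} reaches such a state---no uniform bound across algorithms is assumed, and the online behavior $(\pi_1,\pi_2,\dots)$ that $\calA$ ends up simulating is not known in advance. The paper avoids this by not fixing $K$: it shows the counter $k$ grows without bound (equivalently, from your argument, each $M_j$ and hence each $r_i$ is finite), so $\calB$'s guarantee eventually kicks in against the specific online trajectory being traced out. Your bounds already yield this; just drop the uniform-$K$ claim and phrase the conclusion as ``$k\to\infty$, hence at some finite time $\calB$ has forced \acr $\ge c$ on the current packing.''
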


Such a claim is simple for \emph{worst-case} recourse. Indeed,
given a recourse bound $\beta$, set the movement cost of the $i$-th item to be
$(\beta+\eps)^{n-i}$ for $\eps>0$. 
When the $i$-th item arrives we cannot repack \emph{any} previous item because
their movement costs are larger by a factor of $> \beta$.
So this is a regular online algorithm. The argument fails, however, if we allow
amortization.

To construct our lower bound instance, we start with an adversary process and create a dynamic instance as follows. Each subsequent arriving item
has exponentially decreasing (by a factor $\beta$) movement cost. When the
next item arrives, our algorithm could move certain existing items. These items
would have much higher movement cost than the arriving item, and so, this
process cannot happen too often. Whenever this happens, we {\em reset} 
the state of the process to an earlier time and remove all jobs arriving
in the intervening period. This will ensure that the algorithm
always behaves like an online algorithm which has not moved any of the
existing items. Since it cannot move jobs too often, the set of existing items
which have not been moved by the algorithm grow over time. 
This idea allows us to show: 

\begin{cor}
  No fully-dynamic \BP algorithm with bounded recourse under
  general movement costs and $o(n)$ additive term is better than
  $\binpackinglb$-asymptotically competitive.
\end{cor}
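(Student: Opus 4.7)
The plan is to observe that this corollary is a direct consequence of \Cref{thm:lb-gen-recourse} applied to a known lower bound for online \BP. Specifically, Balogh, B\'ek\'esi, D\'osa, Epstein, and Levin~\cite{balogh2012new} exhibit an adversary process $\calB$ witnessing that no deterministic online \BP algorithm achieves \acr smaller than $\binpackinglb$; I would simply feed this process into the reduction.

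In more detail, let $\calA$ be any fully-dynamic \BP algorithm with general movement costs, amortized recourse bounded by some constant $\gamma$, and additive term $o(n)$. Setting $\beta := \max(\gamma, 2)$, the assumption on $\calA$'s recourse implies that $\calA$ also has amortized recourse at most $\beta$ (and $\beta \geq 2$ as required), so \Cref{thm:lb-gen-recourse} applies with this $\beta$ and $c = \binpackinglb$. The theorem then produces a fully-dynamic input on which $\calA$'s \acr is at least $\binpackinglb$, which is precisely the claimed bound.

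The one point that requires checking, rather than proving, is the additive term: both the online lower bound of \cite{balogh2012new} and the reduction of \Cref{thm:lb-gen-recourse} are asymptotic, and the hard instances they produce have $OPT \to \infty$. In fact, in the Balogh et al.\ construction the item sizes are bounded below by a positive constant, so $OPT = \Theta(n)$, and therefore any $o(n)$ additive term is also $o(OPT)$ on these instances and cannot absorb the $\Omega(OPT)$ gap that the reduction exhibits. There is no genuine obstacle to overcome in this corollary -- all of the real work sits inside \Cref{thm:lb-gen-recourse} itself, whose delicate amortization argument (deleting recently inserted items whenever $\calA$ repacks, so that $\calA$ is forced to behave as a recourse-less online algorithm on a growing subsequence) is what allows the online lower bound to transfer at all.
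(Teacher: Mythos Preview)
Your proposal is correct and matches the paper's approach exactly: the corollary is stated as an immediate consequence of \Cref{thm:lb-gen-recourse} combined with the $\binpackinglb$ online lower bound of~\cite{balogh2012new}, and the paper offers no further argument. Your added remark about the $o(n)$ additive term being absorbed because the adversarial instances have $OPT = \Theta(n)$ is a helpful clarification that the paper leaves implicit in its definition of \acr.
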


\subsection{(Nearly) Matching the Upper Bounds for Online Algorithms} 

We outline some of the key ingredients in obtaining 
an algorithm with competitive ratio nearly matching our upper bound of the previous section. The first is an algorithm to pack similarly-sized items.

\begin{restatable}[Near-Uniform Sizes]{lem}{SimilarlySized}
	\label{lem:c-items-per-bin}
	There exists a fully-dynamic \BP algorithm with constant worst case
	recourse which given items of sizes
	$s_i\in [1/k,1/(k-1))$ for some integer $k\geq 1$, packs them into bins of
	which all but one contain $k-1$ items and are hence at least $1-1/k$
	full. (If all items have size $1/k$, the algorithm packs $k$
	items in all bins but one.)
\end{restatable}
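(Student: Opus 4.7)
My plan is to construct a fully-dynamic algorithm that maintains the structural invariant (all but one bin contain $k-1$ items, or $k$ when all items have size exactly $1/k$) with movement cost $O(c)$ per update of an item of cost $c$. I first handle the special case where all items have size exactly $1/k$: maintain $k$ items per bin and one partial bin; insertions add to the partial (opening a new partial when it fills); deletions either remove the item (if it is in the partial) or move an item from the partial to fill a vacated slot (if from a full bin), or relabel the deleted item's bin as the new partial if the old partial is empty. This is at most two item moves per update.

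For items of varying sizes in $[1/k,1/(k-1))$, we maintain $k-1$ items per bin plus one partial. The key technique is to round movement costs to powers of $2$, inducing cost classes $C_0 > C_1 > \cdots$ (with $C_i = 2\,C_{i+1}$), and to exploit the fact that items of the same rounded cost are \emph{interchangeable} for cost analysis. I maintain items in an abstract sorted order by decreasing rounded cost, partitioned into consecutive blocks of $k-1$ items as bins (the last block being the partial), together with a bijection between abstract bins and physical bins that realizes this sort.

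Upon insertion of an item $x$ of rounded cost $c$, place $x$ at the end of class $c$ in the abstract order. This shifts the abstract positions of items in all lower cost classes $c' > c$ by one. For each such class $c'$, its occupied range of positions shifts by one, and by interchangeability of same-class items the new configuration can be realized with exactly one physical move of a class-$c'$ item: transferring it from the former start-bin of the class's range to its new end-bin. Summing over classes, the total recourse is $c + \sum_{c' > c} C_{c'} \leq c + c/2 + c/4 + \cdots = O(c)$. Deletions are symmetric: the gap propagates through lower classes at geometrically decreasing cost.

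The main obstacle is the ``single physical move per class'' bound. This follows because, after a one-position shift in class $c'$'s abstract range, the number of class-$c'$ items in each physical bin changes by at most $\pm 1$ and only at the range's two boundary bins; by interchangeability, a single physical move achieves the new configuration. A further subtlety is the transition between the $k$-per-bin and $(k-1)$-per-bin modes, triggered when the first item of size $>1/k$ is inserted or the last such item is deleted; this is handled by a lazy reorganization that spreads the rebalancing across subsequent updates so as to keep the per-update worst-case recourse constant.
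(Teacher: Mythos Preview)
Your proposal is correct and essentially identical to the paper's proof: round movement costs down to powers of two, keep items sorted by (rounded) cost with $k-1$ per bin (or $k$ when all sizes equal $1/k$), and on each update cascade a single item through each lower cost class, giving total movement cost at most $c_i + c_i(1 + 1/2 + 1/4 + \cdots) = O(c_i)$. The one wrinkle you add---the dynamic transition between the $k$-per-bin and $(k{-}1)$-per-bin modes---is unnecessary: the parenthetical in the lemma is a fixed promise about the instance (and in every application the size class is fixed in advance), so the paper's proof never addresses or needs mode switching, and you can simply drop that paragraph.
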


Lemma \ref{lem:c-items-per-bin} readily yields a $2$-\acr
algorithm with constant recourse (see \S\ref{sec:general-simple-two-apx}). We now discuss how to obtain  
$1.69$-\acr based on this lemma and the \textsc{Harmonic}
algorithm~\cite{lee1985simple}.

\mysubsubsection{The Harmonic
  Algorithm}\label{sec:harmonic}
The idea of packing items of nearly equal size together is commonly
used in online \BP algorithms. For example, the \textsc{Harmonic}
algorithm \cite{lee1985simple} packs large items (of size $\geq \epsilon$) as in Lemma \ref{lem:c-items-per-bin}, while packing small items (of
size $\leq \eps$) into dedicated bins which are at least $1-\eps$
full on average, using e.g., \FF. This algorithm uses
$(1.69+O(\eps))\cdot OPT + O(\eps^{-1})$ bins~\cite{lee1985simple}.
Unfortunately, due to item removals, \FF won't suffice to pack small items into nearly-full bins.

To pack small items in a dynamic setting we extend our ideas for the unit cost case, 
partitioning the bins into {\em buckets} 
of $\Theta(1/\epsilon)$ many bins,
such that all but one bin in a bucket are $1-O(\epsilon)$ full, and
hence the bins are $1-O(\epsilon)$ full on average. Since the size and
cost are not commensurate, we maintain the small items in sorted order
according to their \emph{Smith ratio} ($c_i/s_i$). 
However, insertion of a small
item can create a large cascade of movements throughout the bucket. 
We only move items to/from a bin once it has $\Omega(\epsilon)$'s worth of
volume removed/inserted (keeping track of erased, or ``ghost'' items).
A potential function argument allows us to show amortized $O(\epsilon^{-2})$
recourse cost for this approach, implying the following lemma, and by \cite{lee1985simple}, a $(1.69+O(\eps))$-\acr algorithm with $O(\epsilon^{-2})$ recourse.

\begin{restatable}{lem}{SmallGeneralRecourse}\label{lem:small-ub-gen}
	For all $\eps\leq \frac{1}{6}$ there exists an asymptotically
	$\left(1+O(\eps)\right)$-competitive bin packing algorithm with
	$O(\eps^{-2})$ amortized recourse if all items
	have size at most $\eps$.
\end{restatable}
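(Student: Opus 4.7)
The plan is to adapt the bucketing strategy for small items from \S\ref{sec:fitcurve} to handle general movement costs, using Smith-ratio ordering and lazy rebalancing. I would partition the bins into \emph{buckets} of $\Theta(\eps^{-1})$ consecutive bins. Within each bucket (other than the last), all but one bin should be $(1-O(\eps))$-full, which by a volume argument yields the $(1+O(\eps))$ competitive ratio with additive term $O(\eps^{-1})$. Unlike the unit-cost case, maintaining this invariant under general costs requires care in choosing which items to move.

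First, I would maintain the items of each bucket in sorted order by Smith ratio $s_i/c_i$: items of large $s_i/c_i$ (i.e., cheap per unit volume moved) go toward one end of the bucket. Upon insertion of an item $i$, place it in the bin matching its sorted position; if the bin overflows, displace items of the smallest $s_i/c_i$ in that bin into the next bin and cascade along the bucket. Because every displaced item has Smith ratio at most that of the triggering item $i$, the cost of displacing enough items to free $s_i$ volume at one bin is bounded by a constant times $c_i$. This is the key reason for sorting by Smith ratio rather than by size.

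Second, to avoid paying for a cascade on every tiny update, I would use a lazy scheme: on deletion, leave a \emph{ghost} in the bin occupying the deleted volume, and only trigger a rebalancing cascade (pulling items from later bins) once a bin has accumulated $\Omega(\eps)$ ghost volume. Symmetrically, pending insertions past a bin's target are deferred until $\Omega(\eps)$ excess volume is accumulated and then flushed in one cascade. Since a cascade traverses at most one bucket of $O(\eps^{-1})$ bins and each bin's displacement costs $O(c_i)$ by the argument above, each triggered cascade incurs $O(c_i/\eps)$ cost. A potential function that assigns to each out-of-place item a potential proportional to its movement cost then shows that each insertion or deletion of an item of cost $c$ contributes $O(c/\eps^2)$ to total movement; the extra $1/\eps$ factor comes from only triggering cascades on $\eps$-accumulated volume changes, and a $\Theta(\eps)$ fraction of updates trigger a cascade of cost $O(c_i/\eps)$. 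Bucket splits and merges (triggered when the bucket size drifts outside $[\Theta(\eps^{-1}), \Theta(\eps^{-1})]$) are handled in pure bookkeeping and incur no additional recourse, exactly as in \S\ref{sec:fitcurve}.

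The main obstacle will be designing the potential function carefully to capture the interaction between the Smith-ratio sorting, the ghost bookkeeping, and the deferred rebalancing, so that adversarial interleavings of insertions and deletions of items with wildly different Smith ratios cannot force super-quadratic amortized recourse. A secondary subtlety is verifying that ghost volume does not degrade the competitive ratio: since each bucket stores at most $O(\eps^{-1})\cdot O(\eps)=O(1)$ ghost volume, this contributes only $O(\eps^{-1})$ additional bins overall, absorbed into the additive term and the $(1+O(\eps))$ multiplicative slack. Combining the resulting packing of small items with the online \textsc{Harmonic}-style packing of large items then yields an asymptotically $(1+O(\eps))$-competitive algorithm for the all-small case, as claimed.
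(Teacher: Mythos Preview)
Your high-level strategy matches the paper's: bucket the bins in groups of $\Theta(\eps^{-1})$, keep items sorted by Smith ratio, use ghost markers to delay deletions until $\Omega(\eps)$ ghost volume accumulates, and cascade only within a bucket. That is exactly the algorithm in \S\ref{sec:gen-small-items}. However, two details in your write-up are wrong and would block the analysis from going through as stated.

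First, the direction of the Smith-ratio inequality is reversed. You write that on overflow you displace items of \emph{smallest} $s_i/c_i$, and that ``every displaced item has Smith ratio at most that of the triggering item $i$, [so] the cost of displacing enough items to free $s_i$ volume at one bin is bounded by a constant times $c_i$.'' But $s_j/c_j \leq s_i/c_i$ means $c_j \geq (c_i/s_i)\,s_j$, which gives a \emph{lower} bound $\sum c_j \geq c_i$, not an upper bound. You must displace the items of \emph{largest} $s_i/c_i$ (equivalently, smallest density $c_j/s_j$), so that displaced items satisfy $c_j/s_j \leq c_i/s_i$ and hence moving volume $v$ costs at most $(c_i/s_i)\,v$.

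Second, and more substantively, the amortization sketch does not work as written. The claim ``each bin's displacement costs $O(c_i)$'' is false: during a cascade you may move $\Theta(1)$ volume through each of $\Theta(\eps^{-1})$ bins, and the density bound is $c_i/s_i$, not $c_i$, so the cascade cost is $O(d/\eps)$ where $d$ is the density at the triggering bin --- this can be arbitrarily larger than $c_i/\eps$ when $s_i$ is tiny. The fix is not the ``$\Theta(\eps)$ fraction of updates trigger'' heuristic you describe, but a specific potential: the paper takes $\Phi = \tfrac{4}{\eps^2}\sum_i \big(c(G_i) + c(\text{items in the top-$\eps$ slice of }B_i)\big)$. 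The second term is the crucial piece you are missing: it ensures that whenever a bin overflows, $\Phi_i$ was already at least $d\eps$ (since the top-$\eps$ slice is full of items of density $\geq d$), and the cascade drives it to zero, releasing $\Omega(d/\eps)$ potential to pay the $O(d/\eps)$ movement. Without that term your potential has nothing to charge the overflow cascades to.
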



\mysubsubsection{Seiden's Super-Harmonic Algorithms}
%
We now discuss our remaining ideas to match the bounds of
any Super-Harmonic algorithm \cite{seiden2002online} in the fully-dynamic setting.
A \emph{Super-harmonic} (SH) algorithm partitions the unit
interval $[0,1]$ into $K+1$ intervals
$[0,\eps],(t_0=\eps,t_1](t_1,t_2],\dots,(t_{K-1},t_K=1]$.  Small items
(of size $\leq \eps$) are packed into dedicated bins which
are $1-\eps$ full. A large item has type $i$ if its size is in the range
$(t_{i-1},t_i]$. The algorithm also colors items blue or red. Each bin
contains items of at most two distinct item types $i$ and $j$. If a bin
contains only one item type, all its items are colored the same. If a
bin contains two item types $i \neq j$, all type $i$ items are colored
blue and type $j$ ones are colored red (or vice versa). The SH algorithm
is defined by four sequences
$(\alpha_i)_{i=1}^K, (\beta_i)_{i=1}^K, (\gamma_i)_{i=1}^K$, and a
bipartite \emph{compatibility graph} $\mathcal{G} = (V,E)$.  A bin with
blue (resp., red) type $i$ items contains at most $\beta_i$ (resp.,
$\gamma_i$) items of type $i$, and is \emph{open} if it contains less
than this many type $i$ items.
The compatibility graph $\mathcal{G} = (V,E)$ has vertex set
$V=\{b_i \mid i\in [K]\}\cup \{r_j \mid j\in [K]\}$, with an edge
$(b_i,r_j)\in E$ indicating blue items of type $i$ and red items of type
$j$ are \emph{compatible} and may share a bin. In addition, an SH
algorithm must satisfy the following invariants.

\begin{enumerate}[(P1)]
\setlength{\itemsep}{0pt}
     \setlength{\parsep}{3pt}
     \setlength{\topsep}{3pt}
     \setlength{\partopsep}{0pt}
\item\label{prop:num-open-bins} The number of open bins is $O(1)$.
	\item\label{prop:frac-red} If $n_i$ is the number of type-$i$ items, the number of red
	type-$i$ items is $\lfloor \alpha_i\cdot n_i\rfloor$.
	\item\label{prop:no-unmatched-compatible} If $(b_i,r_j)\in E$ (blue type $i$ items and red type $j$ items
	are compatible), there is no pair of bins with one containing
	only blue type $i$ items and one containing only red type $j$ items.
\end{enumerate}

Appropriate choice of
$(t_i)_{i=1}^{K+1},(\alpha_i)_{i=1}^K, (\beta_i)_{i=1}^K, (\gamma_i)_{i=1}^K$ and
$\mathcal{G}$ allows one to bound the \acr of any SH algorithm. (E.g.,
Seiden gives an SH algorithm with \acr $\binpackingub$~\cite{seiden2002online}.)


\mysubsubsection{Simulating SH algorithms.}  In a sense, SH
algorithms ignore the exact size of large items, so we can take all
items of some type and color. This extends
Lemma~\ref{lem:c-items-per-bin} to pack at most $\beta_i$ or $\gamma_i$
of them per bin to satisfy Properties \ref{prop:num-open-bins} and
\ref{prop:frac-red}. 
The challenge is in maintaining Property
\ref{prop:no-unmatched-compatible}: consider a bin with
$\beta_i$ blue type $i$ items and $\gamma_j$ type-$j$ items, and suppose
the type $i$ items are all removed. Suppose there exists an
open bin with items of type $i' \neq i$ compatible with $j$. If the
movement costs of both type $j$ and type $i'$ items are significantly
higher than the cost of the type $i$ items, we cannot afford to place
these groups together, violating
Property~\ref{prop:no-unmatched-compatible}. To avoid such a problem, we
use ideas from \emph{stable matchings}. We think of groups of $\beta_i$
blue type-$i$ items and $\gamma_j$ red type-$j$ items as nodes in a
bipartite graph, with an edge between these nodes if $\mathcal{G}$
contains the edge $(b_i,r_j)$. We maintain a stable matching under
updates, with priorities being the value of the costliest item in a
group of same-type items packed in the same
bin. 
The stability of this matching implies
Property~\ref{prop:no-unmatched-compatible};  we maintain this
stable matching 
using (a variant of) the Gale-Shapley algorithm.
Finally, relying on our solution for packing small
items as in \S\ref{sec:harmonic}, we can pack the small
items in bins which are $1-\eps$ full on average.
Combined with Lemma \ref{lem:good-sh}, we
obtain following:

\begin{restatable}{thm}{UBGeneral}\label{thm:general-ub}
  There exists a fully-dynamic \BP algorithm with \acr $\binpackingub$
  and constant additive term using constant recourse under
  general movement costs.
\end{restatable}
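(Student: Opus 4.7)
The plan is to instantiate Seiden's Super-Harmonic (SH) algorithm achieving \acr $\binpackingub$ with its fixed choice of parameters $\{t_i,\alpha_i,\beta_i,\gamma_i\}_{i=1}^K$ and compatibility graph $\mathcal{G}$, and simulate its packing dynamically. Since $K$ and the small-item threshold $\eps=t_0$ are fixed constants (depending only on the SH scheme), all ``$O(\eps^{-c})$'' factors we incur reduce to absolute constants, giving the claimed constant additive term and constant amortized recourse. Small items are handled by \Cref{lem:small-ub-gen}, which maintains them in bins that are $1{-}\eps$ full on average with $O(\eps^{-2})=O(1)$ amortized recourse; what remains is the large items.

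For the large items, I would first ensure Property \ref{prop:frac-red} by coloring the $\lfloor \alpha_i n_i\rfloor$ highest-cost type-$i$ items red and the rest blue; this coloring is updated on insert/delete by at most one recoloring per update per type, moving an item whose cost is at most that of the triggering item. After coloring, items of each (type,color) class have sizes in a common interval $(t_{i-1},t_i]$, so I can extend \Cref{lem:c-items-per-bin} to group them into \emph{packs} of exactly $\beta_i$ (blue) or $\gamma_i$ (red) items, leaving $O(1)$ ``open'' packs per class (yielding Property \ref{prop:num-open-bins}). Each insert/delete perturbs $O(1)$ packs and moves only $O(1)$ items, all of cost at most the triggering item's cost (by sorting within each class by cost, analogously to the Smith-ratio ordering used for small items). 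This gives constant worst-case recourse at the pack level.

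The main obstacle is maintaining Property \ref{prop:no-unmatched-compatible} (no two compatible open packs) with bounded recourse under \emph{general} costs. I would cast this as a dynamic stable matching: view each pack as a node in a bipartite graph whose edges are inherited from $\mathcal{G}$, and assign each pack a priority equal to the maximum movement cost of items it contains (breaking ties by a timestamp). A bin corresponds to a matched edge, an open pack to an unmatched node; Property \ref{prop:no-unmatched-compatible} is exactly stability. Maintain the matching by a Gale-Shapley style local-repair: whenever a pack $p$ becomes unmatched (or is newly created) it proposes, in priority order, to compatible partners, possibly displacing a current partner whose priority is lower than $p$'s, and the displaced pack re-proposes. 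The crucial observation is that a displacement moves a pack whose maximum item cost is at most $p$'s priority, which is at most the cost of the item that triggered the update; hence, amortizing against the triggering item's cost, we can show via a standard potential argument (potential = sum over matched edges of the larger-priority endpoint) that each update incurs $O(1)$ pack moves, each of cost $O(c_t)$.

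Combining the three pieces, every update pays $O(c_t)$ for coloring maintenance, $O(c_t)$ for the pack-level packing (from the extension of \Cref{lem:c-items-per-bin}), $O(c_t)$ amortized for the stable matching, and $O(c_t)$ amortized for the small-item subroutine (\Cref{lem:small-ub-gen}), for a total of $O(c_t)$ amortized recourse. Because properties \ref{prop:num-open-bins}--\ref{prop:no-unmatched-compatible} hold at every point in time, Seiden's SH analysis applies verbatim and certifies \acr $\binpackingub$ with constant additive term, proving the theorem.
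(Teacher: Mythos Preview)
Your overall architecture matches the paper's, but two of your key claims are false as stated, and both require ideas you have omitted.

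\textbf{The coloring step.} You assert that maintaining ``the $\lfloor \alpha_i n_i\rfloor$ highest-cost type-$i$ items red'' costs at most one recolor whose cost is bounded by the triggering item's cost. This is wrong. Take $\alpha_i=1/2$ and three type-$i$ items of costs $100,10,1$; red $=\{100\}$. Insert an item of cost $0.5$: now $\lfloor 2\rfloor=2$ items must be red, so the item of cost $10$ flips from blue to red, and moving it between a blue and a red bin costs $10 \gg 0.5$. The same happens on deletions. Any scheme that insists on \emph{exactly} $\lfloor \alpha_i n_i\rfloor$ red items suffers from this. The paper sidesteps the issue by only maintaining Property~\ref{prop:frac-red} \emph{approximately} (within an additive $O(1)$ slack $\delta_i$), deciding the color of a whole bin once, when it is opened, so no item is ever recolored.

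\textbf{The stable-matching step.} You claim each update triggers $O(1)$ pack moves via a potential $\Phi=\sum_{\text{matched }e}\max(\text{endpoint priorities})$. Neither the move count nor the amortized cost is bounded without a further trick. Take a complete compatibility graph with left packs of costs $n,n-1,\ldots,1$ and right packs of costs $n,n-1,\ldots,1$, matched as $L_i\text{--}R_i$. Inserting a left pack of cost $n{+}1$ displaces $L_1$, which displaces $L_2$, and so on for $n$ steps; total movement is $\sum_{i=0}^{n}(n{+}1{-}i)=\Theta(n^2)$ against a trigger cost of $n{+}1$. Your potential changes by only $O(n)$ here, so it cannot pay for $\Theta(n^2)$. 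The paper's fix is to first round all movement costs to powers of two; then along any displacement chain the priorities drop strictly by at least a factor of $2$, and the total cost is a geometric series $2^k+2^{k-1}+\cdots=O(2^k)$. (Deletions need an extra factor of $K$, absorbed since $K=O(1)$.) You should incorporate this rounding; once you do, the cascade argument is direct and no potential is needed.

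With these two repairs---approximate red/blue counts decided at bin-open time, and powers-of-two cost rounding to make displacement chains geometric---your outline becomes essentially the paper's proof.
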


\section{Size Movement Costs (Migration Factor)}\label{sec:migration}

In this section, we settle the optimal recourse to a.c.r tradeoff for size movement cost (referred to as \emph{migration factor} in the literature); that is, $c_i=s_i$ for each item $i$. For worst case recourse in this model (studied in \cite{epstein09robust,jansen13robust,berndt15fully}), $poly(\epsilon^{-1})$ upper and lower bounds are known for $(1+\epsilon)$-a.c.r.~algorithms \cite{berndt15fully}, though the optimal tradeoff remains elusive. We show that for amortized recourse the optimal tradeoff is $\Theta(\epsilon^{-1})$ recourse for $(1+\epsilon)$-a.c.r.

\begin{restatable}{fact}{trivialAmortizedMigration}\label{thm:ub-amortized-migration}
	For all $\epsilon \leq 1/2$, 
	there exists an algorithm requiring
	$(1+O(\eps))\cdot OPT(\mathcal{I}_t)+O(\eps^{-2})$ bins at all times
	$t$ while using only $O(\eps^{-1})$ amortized migration factor.
\end{restatable}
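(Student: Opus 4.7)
The plan is a standard epoch-based rebuild strategy. At the start of each epoch (at time $t_0$), invoke any offline AFPTAS (e.g.~Fernandez de la Vega and Lueker~\cite{de1981bin}) on the current instance $\cI_{t_0}$ to compute a packing using at most $(1+\eps)\cdot OPT(\cI_{t_0}) + O(\eps^{-2})$ bins. During the epoch, handle each insertion by placing the new item using \FF into a dedicated pool of ``online'' bins, and handle each deletion by simply removing the item from its current bin (with no other moves). Terminate the epoch, and trigger a new rebuild, once the total size of items inserted and deleted during the epoch reaches $\eps\cdot V_{t_0}$, where $V_{t_0}$ is the total item volume at time $t_0$. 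If $V_{t_0}$ is too small for this trigger to be meaningful (say $V_{t_0} < \eps^{-2}$), then $OPT(\cI_{t_0}) = O(\eps^{-2})$ is absorbed by the additive term, and nothing needs to be done.

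For the bin count at any time $t$ inside an epoch, write $OPT_0 := OPT(\cI_{t_0})$ and $V_0 := V_{t_0} \le OPT_0$. The initial packing still uses $(1+\eps) OPT_0 + O(\eps^{-2})$ bins, and \FF on the inserted items (of total volume at most $\eps V_0$) opens at most $O(\eps\cdot OPT_0) + O(1)$ further bins. To relate $OPT_0$ to $OPT(\cI_t)$, the key observation is that one can construct a packing of $\cI_{t_0}$ from any optimal packing of $\cI_t$ by \FF-inserting the at most $\eps V_0$ volume of deleted items, giving $OPT_0 \le OPT(\cI_t) + O(\eps V_0) + O(1)$; rearranging using $V_0 \le OPT_0$ and $\eps \le 1/2$ gives $OPT_0 \le (1+O(\eps))\cdot OPT(\cI_t) + O(1)$. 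Combining these yields the claimed bound of $(1+O(\eps))\cdot OPT(\cI_t) + O(\eps^{-2})$ bins at every time~$t$.

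For the amortized recourse, observe that a single rebuild can conservatively be assumed to move every current item, incurring migration cost at most $V_t \le (1+\eps) V_0$. The epoch was triggered only after accumulating at least $\eps V_0$ in total $c_i = s_i$ update cost, so charging the rebuild against the updates of its epoch gives amortized cost $(1+\eps)V_0/(\eps V_0) = O(\eps^{-1})$. Since amortized recourse is a cumulative quantity, it is fine that the migration is paid in a burst at the end of each epoch.

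I expect no real technical obstacle, in line with this being labeled a \textbf{Fact}; the only verification of substance is the inequality $OPT_0 \le (1+O(\eps))\cdot OPT(\cI_t) + O(1)$, which the trigger was designed to guarantee and which follows in a couple of lines from adding back the deleted items via \FF.
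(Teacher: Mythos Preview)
Your proposal is correct and follows essentially the same epoch-based rebuild strategy as the paper's proof: trigger a rebuild via an AFPTAS once $\eps V_{t_0}$ volume has been inserted or deleted, handle online insertions with \FF, and charge the rebuild cost against the accumulated update volume. The only cosmetic difference is that the paper leaves deleted items in place as ``ghosts'' until the epoch ends rather than removing them immediately, which slightly streamlines the bookkeeping but does not change the argument.
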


This upper bound is trivial -- it suffices to repack according to an AFPTAS whenever the volume changes by a multiplicative $(1+\epsilon)$ factor (for completeness we prove this fact in  \S\ref{sec:migration-appendix}). 
The challenge here is in showing this algorithm's a.c.r to recourse tradeoff is \emph{tight}. We do so by constructing an instance where addition or removal of small items of size
$\approx \eps$ causes \emph{every} near-optimal solution to be far from every
near-optimal solution after addition/removal. 

\begin{restatable}{thm}{LBAmortizedMigration}\label{thm:lb-amortized-migration}
	For infinitely many $\epsilon>0$, any fully-dynamic bin packing
	algorithm with a.c.r $(1+\eps)$ and additive
	term $o(n)$ must have \emph{amortized} migration factor of
	$\Omega(\eps^{-1})$.
\end{restatable}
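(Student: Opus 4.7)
The plan is to build a hard dynamic instance that forces any $(1+\eps)$-approximate algorithm to pay $\Omega(N)$ movement cost per adversarial round against only $\Theta(N\eps)$ update cost. I would use item sizes derived from (and close to) the reciprocals of the Sylvester sequence $a_1=2,\,a_2=3,\,a_3=7,\ldots$ (with $a_{k+1}=a_1\cdots a_k+1$). Put $A=a_1\cdots a_n$ and fix the approximation parameter $\eps$ to be of order $1/A$. The instance uses $n$ large sizes $s_i\approx 1/a_i$ together with a small size $\sigma\approx 1/A$, tuned so that one item of each of the $n+1$ sizes fills a bin exactly (the \emph{canonical} configuration). Let $\cI_0$ contain $N$ items of each $s_i$ only, and let $\cI_1$ additionally contain $N$ items of size $\sigma$. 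Volume bounds and explicit packings give $OPT(\cI_0)=N(1-\sigma)$, realised by \emph{uniform} bins (each carrying $a_i$ copies of a single $s_i$), and $OPT(\cI_1)=N$, realised by $N$ canonical bins.

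The technical core is a structural dichotomy extracted from the divisibility properties of the Sylvester sequence. For $\cI_0$ I would show that the only full bin configurations are uniform: writing $\sum n_i/a_i=1$ and clearing denominators gives $\sum n_i(A/a_i)=A$, and reducing modulo each $a_i$ together with pairwise coprimality of the $a_i$'s forces $a_i \mid n_i$, after which non-negativity pins down the uniform configurations. Consequently, any $(1+\eps)$-approximate packing of $\cI_0$ has total waste $O(\eps N)$, every non-uniform bin contributes waste at least $\sigma=\Theta(\eps)$, and hence at most $O(N/C)$ bins contain items of two or more distinct sizes (for a suitably large constant $C$ hidden in the relation $\eps=\Theta(1/A)$). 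Dually, by choosing small additive perturbations $\eta_i$ of $1/a_i$ so that every non-canonical integer solution of the perturbed volume equation picks up waste at least $\sigma$, the same waste-budget argument forces $\Omega(N)$ bins of any $(1+\eps)$-approximate packing of $\cI_1$ to be canonical.

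The adversary then repeats the round: insert, then delete, the $N$ items of size $\sigma$, cycling between $\cI_0$ and $\cI_1$. The per-round update cost under size migration is $2N\sigma=\Theta(N\eps)$. Between a $(1+\eps)$-approximate packing of $\cI_1$ (which has $\Omega(N)$ canonical bins) and one of $\cI_0$ (which has at most $O(N/C)$ multi-size bins), the algorithm must destroy $\Omega(N)$ canonical bins: simply deleting the $\sigma$-items would leave behind $\Omega(N)$ ``canonical-prefix'' multi-size bins in the $\cI_0$ packing, violating the $\cI_0$ structural bound. Destroying each such bin requires moving a constant volume of large items, so the algorithm pays $\Omega(N)$ per round, yielding amortized migration factor $\Omega(N)/\Theta(N\eps)=\Omega(\eps^{-1})$.

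The main obstacle will be the $\cI_1$ half of the dichotomy. The unperturbed Sylvester setup admits many non-canonical full bin configurations---for instance replacing one item of size $1/a_i$ by $A/a_i$ items of size $\sigma$ preserves total volume, so the fully uniform packing is optimal and contains no canonical bins at all. To rule these alternatives out I would select the perturbations $(\eta_1,\ldots,\eta_n)$ so that the associated linear functional on candidate integer bin contents separates the canonical configuration from every other non-negative integer solution of the volume equation by at least $\sigma$. Because the space of ``volume-preserving modifications'' of the canonical bin forms a finitely-generated integer cone (enumerable from the divisibility structure), a generic-choice argument on $(\eta_i)$ followed by a scaling adjustment suffices, and this is the step where the Sylvester divisibility is used most sharply.
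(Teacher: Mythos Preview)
Your high-level plan matches the paper's almost exactly: Sylvester-based item sizes, two instances differing only in the small items, a structural dichotomy (near-optimal packings of the ``without small items'' instance have few mixed bins; near-optimal packings of the ``with small items'' instance have many canonical bins), and alternation to amortize $\Omega(N)$ movement against $\Theta(\eps N)$ update cost. Your divisibility argument for the $\cI_0$ side (reducing $\sum n_i(A/a_i)=A$ modulo each $a_j$) is precisely the paper's reasoning.

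The one substantive difference is how you handle the obstacle you correctly flag for $\cI_1$. You propose generic additive perturbations $\eta_i$ plus a ``scaling adjustment'' so that every non-canonical feasible integer vector has waste at least $\sigma$. This is plausible but soft: a generic choice separates the canonical vector from the finitely many other feasible lattice points, but the resulting gap need not be $\Omega(\eps)$, and once you rescale to force $\vec{1}\cdot s=1$ exactly you have spent your free parameter and are no longer generic. The paper sidesteps this entirely with a single multiplicative perturbation: set $s_i=(1-\eps/2)/k_i$ for $i\le c$ and $s_{c+1}=\eps(3/2-\eps/2)$. Then $\vec{1}\cdot s=1$ by construction, and because $\vec\chi\cdot(1/k_1,\ldots,1/k_c)$ is always an integer multiple of $\eps$, the scaling by $(1-\eps/2)$ immediately gives clean $\eps/2$ or $\eps$ waste bounds for every non-canonical configuration (their Lemma on characteristic vectors). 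This explicit choice is what turns your sketch into a proof; I would recommend adopting it rather than pursuing the generic-perturbation route.
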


Our matching lower bound relies on the well-known Sylvester
sequence~\cite{sylvester1880point}, given by the recurrence relation
$k_1=2$ and $k_{i+1}=\big(\prod_{j\leq i} k_j\big)+1$,  
the first few terms of which are $2,3,7,43,1807,\dots$ 
While this sequence has been used previously in the context of \BP, our proof relies on more fine-grained divisibility
properties. 
In particular, letting $c$ be a positive integer specified later and
$\eps:=1/\prod_{\ell=1}^c k_\ell$, we use the following properties:
\begin{enumerate}[(P1)]
	\item\label{prop:sum} 
	$\frac{1}{k_1} +
	\frac{1}{k_2} + \ldots + \frac{1}{k_c} = 1- \frac{1}{\prod_{\ell=1}^c
		k_\ell} {= 1 - \eps}. $
	\item\label{prop:coprime} If $i \neq j$, then $k_i$ and $k_j$ are relatively prime. 
	\item\label{prop:prod-eps} For all $i\in[c]$, the value $1/k_i=\prod_{\ell\in [c]\setminus\{i\}} k_\ell/\prod_{\ell=1}^c k_\ell$ is an integer product of $\eps$.
	\item\label{prop:divisibility} If $i\neq j\in [c]$, then $1/k_i = \prod_{\ell\in [c]\setminus\{i\}} k_\ell / \prod_{\ell=1}^c k_\ell$ is an integer product of $k_j\cdot \epsilon$.
\end{enumerate}

We define a vector of item sizes $\vec{s} \in [0,1]^{c+1}$ in our instances as follows: for $i\in [c]$ we let $s_i = \frac{1}{k_i}\cdot (1-\frac{\eps}{2})$, and $s_{c+1} = \eps\cdot(\frac{3}{2}-\frac{\eps}{2})$.
The adversarial input sequence will alternate between two instances, $\mathcal{I}$ and $\mathcal{I}'$. For some large $N$ a product of $\prod_{\ell=1}^c k_\ell$, Instance $\mathcal{I}$ consists of $N$ items of sizes $s_i$ for all $i\in [c+1]$. Instance $\mathcal{I}'$ consists of $N$ items of all sizes but $s_{c+1}$. 

Properties \ref{prop:sum}-\ref{prop:divisibility} imply on the one hand that $\mathcal{I}$ can be packed into completely full bins containing one item of each size, while any bin which does not contain exactly one item of each size has at least $\Omega(\eps)$ free space.
Similarly, an optimal packing of $\mathcal{I}'$ packs items of the same size in one set of bins, using up exactly $1-\frac{\epsilon}{2}$ space, while any bin which contains items of at least two sizes has at least $\epsilon$ free space.
These observations imply the following.

\begin{restatable}{lem}{migrationKeyLemma}
	\label{lem:migration-instances}
	Any algorithm $\mathcal{A}$ with $(1+\eps/7)$-\acr and $o(n)$ additive term packs instance $\mathcal{I}$ such that
        at least $2N/3$ bins contain exactly one item of each size
        $s_i$, and packs instance $\mathcal{I}'$ such that at least
        $N/2$ bins contain items of exactly one size.
\end{restatable}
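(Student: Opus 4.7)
The plan is to combine the divisibility structure of the Sylvester-like sequence ((P1)--(P4)) with a refined volume-counting argument. First I will verify the two ``template'' configurations: by (P1), a bin of $\mathcal{I}$ with exactly one item of each size has total volume $(1-\eps/2)(1-\eps)+\eps(3/2-\eps/2)=1$, so $OPT(\mathcal{I})=N$; and a bin of $\mathcal{I}'$ with exactly $k_i$ items of some single size $s_i$ has volume $1-\eps/2$, giving $OPT(\mathcal{I}')\le N\sum_{i=1}^c 1/k_i=N(1-\eps)$.

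Next I will show the structural claim at the heart of the argument: any ``off-template'' bin has $\Omega(\eps)$ free space. Using (P3), write $1/k_i=q_i\,\eps$ with $q_i=\prod_{\ell\in[c]\setminus\{i\}}k_\ell\in\mathbb{Z}_{>0}$, so that a bin with $a_j$ items of size $s_j$ has volume $\eps\bigl[(1-\eps/2)Q+a_{c+1}(3/2-\eps/2)\bigr]$ for $Q=\sum_{i\in[c]}a_iq_i$. For $\mathcal{I}'$, feasibility $V\le 1$ gives $Q\le 1/\eps$, and the coprimality in (P2) together with the Chinese Remainder Theorem (applied modulo each $k_j$) shows that $Q=1/\eps$ is attained only by single-size profiles $a_j=k_j$; hence every multi-size bin has $Q\le 1/\eps-1$ and free space $\ge 3\eps/2-O(\eps^2)$. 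For $\mathcal{I}$, the same CRT argument identifies the canonical profile as the unique integer solution with $Q=Q^*=1/\eps-1$ and $a_{c+1}=1$. Setting $p=Q^*-Q$ and $r=1-a_{c+1}$, the free space of a bin becomes $\eps\bigl[p+3r/2-\eps(p+r)/2\bigr]$; a case analysis on the half-integer $p+3r/2$, using $a_{c+1}\ge 0$, $Q\ge 0$ and $V\le 1$, yields the refined per-bin bound $\mathrm{free}(b)\ge \eps/2-\eps^2 a_{c+1}(b)/4$ for every non-canonical bin $b$.

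The volume count then closes the argument. An $(1+\eps/7)$-competitive packing of $\mathcal{I}$ uses at most $(1+\eps/7)OPT(\mathcal{I})+o(N)$ bins, so the total free space is at most $\eps N/7+o(N)$. Summing the per-bin bound over all $N'$ non-canonical bins, and using that the total number of size-$s_{c+1}$ items is $N$,
\[
\tfrac{\eps}{2}\,N'-\tfrac{\eps^2}{4}\,N\ \le\ \sum_{b\text{ non-canonical}}\mathrm{free}(b)\ \le\ \tfrac{\eps N}{7}+o(N),
\]
which rearranges to $N'\le 2N/7+\eps N/2+o(N)\le N/3$ whenever $\eps\le 2/21$ (automatic for $c\ge 3$, so $\eps\le 1/42$). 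Combined with the volume identity $N=C+\sum_{b\text{ non-can}}v_b\le C+N'$, this gives $C\ge N-N/3=2N/3$. For $\mathcal{I}'$, the per-bin free-space bound $3\eps/2-O(\eps^2)$ is strong enough on its own: combined with the analogous $\Theta(\eps N)$ total free-space budget, it yields at least $(6/7)N-o(N)>N/2$ single-size bins.

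The delicate step is Step~2. The naive minimum free space over non-canonical configurations is only $\eps/3$, attained by the extremal configuration consisting of $2/(3\eps)$ copies of $s_{c+1}$ alone, and that naive bound would yield only $C\ge 4N/7<2N/3$. Extracting the refined bound $\eps/2-\eps^2 a_{c+1}(b)/4$ and amortizing its slack term against the globally bounded supply of $s_{c+1}$ items is what makes the constants $2/3$ and $1/2$ attainable; getting the full $(p,r)$ case analysis and the CRT uniqueness argument right is the main obstacle.
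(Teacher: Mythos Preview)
Your proof is correct and follows the same overall strategy as the paper: establish ``gap'' lower bounds on the free space of non-template bins via the divisibility properties (P1)--(P4), then run a volume-counting argument against the algorithm's free-space budget. Your CRT-based uniqueness argument for the template configurations is equivalent to the paper's direct divisibility computation in its auxiliary lemma.

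There is, however, one point where your argument is genuinely more careful than the paper's. For instance~$\mathcal{I}$, the paper asserts that every bin with characteristic vector $\vec{\chi}\neq\vec{1}$ has free space at least $\eps/2$, citing only the case $\chi_{c+1}=1$ (and implicitly $\chi_{c+1}=0$) of its structural lemma. It does not address bins with $\chi_{c+1}\ge 2$, and as you correctly note, such bins can have free space as small as $\eps/3$ (e.g., a bin containing only $\lfloor 2/(3\eps)\rfloor$ copies of $s_{c+1}$), so a uniform $\eps/2$ bound is false. Your refined per-bin bound $\mathrm{free}(b)\ge \eps/2-\eps^2 a_{c+1}(b)/4$, which one checks is equivalent to the integer inequality $2p+3r\ge 1$ and holds for all feasible non-canonical configurations, together with the amortization $\sum_b a_{c+1}(b)\le N$, is exactly what is needed to close this gap; your resulting estimate $N'\le 2N/7+\eps N/2+o(N)\le N/3$ then goes through. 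For $\mathcal{I}'$ the two arguments coincide: the paper uses the slightly weaker bound $\mathrm{free}\ge\eps$ for multi-size bins where you use $3\eps/2-O(\eps^2)$, but either suffices for the stated $N/2$ conclusion.
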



Theorem \ref{thm:lb-amortized-migration} follows from Lemma \ref{lem:migration-instances} 
in a rather straightforward fashion. The full details of this proof and
the lemmas leading up to it can be found in
\S\ref{sec:migration-appendix}.

\section{Acknowledgments}
The work of Anupam Gupta and Guru Guruganesh is supported in part by NSF awards CCF-1536002, CCF-1540541, and CCF-1617790. Amit Kumar and Anupam Gupta are part of the Joint Indo-US Virtual Center for Computation and Uncertainty.
The work of David Wajc is supported in part by NSF grants CCF-1527110, CCF-1618280 and NSF CAREER award CCF-1750808. 


\newpage
\appendix
\section*{Appendix}
\section{Tabular List of Prior and Current Results}
\label{sec:tabula}

Here we recap and contrast the previous upper and lower bounds for dynamic bin packing with bounded recourse, starting with the upper bounds, in Table \ref{table:bin-packing-ubs}.
\begin{table*}[h]
	\captionsetup{justification=centering}
	\caption{Fully dynamic bin packing with limited recourse: Positive results\newline(Big-$O$ notation dropped for notational simplicity)}
	\label{table:bin-packing-ubs}
	\vspace{-0.2cm}
	\centering
	\begin{tabular}{| c | c | c | c | c | c | c |}
		\hline
		Costs & A.C.R & Additive & Recourse & W.C. & Notes & Reference \\ 	\hline	
		
		\multirow{3}{*}{General} & $2$ & $\log n$ & $1$ & \xmark & w.c. if $n$ known 
		& \textbf{Fact \ref{fact:simple-2}}
		\\
		& $\binpackingub$ & $1$ & $1$ & \xmark &  & \textbf{Theorem \ref{thm:general-ub}}\\		
		& $1.333$ & $1$ & $1$ & \xmark & insertions only & Gambosi et al.~\cite{ 
			gambosi2000algorithms}	\bigstrut\\	
		\hline
		\multirow{4}{*}{\vspace{-0.2cm}Unit} & $1.5+\eps$ & $\epsilon^{-1}$ & $\epsilon^{-1}$ & \cmark & insertions only & Balogh et al.~\cite{balogh2014line}	\bigstrut\\	
		& $\alpha + \eps$ & $\epsilon^{-2}$ & $\eps^{-2}$ & \xmark & $\alpha\bound$ & \textbf{Theorem \ref{thm:unit-ub}}	\bigstrut\\	
		& $\alpha + \eps$ & $poly(\epsilon^{-1})$ & $\eps^{-4}\log(\eps^{-1})$ & \cmark & $\alpha\bound$ & \textbf{Theorem \ref{thm:unit-ub}}	\bigstrut\\	
		& $\alpha + \eps$ & $\epsilon^{-1}$ & $\eps^{-2}$ & \cmark & $\alpha\bound$ & \citet{feldkord2017tight}	\bigstrut\\			
		\hline	 
		
		\multirow{5}{*}{Size}		 & $1+\eps$ & $1$ & $\eps^{-O(\epsilon^{-2})}$ & \cmark & insertions only & \citet{epstein09robust}	\bigstrut \\
		& $1+\eps$ & $\epsilon^{-2}$ & $\epsilon^{-4}$ & \cmark & insertions only & \citet{jansen13robust} \\
		& $1+\eps$ & $poly(\epsilon^{-1})$ & $\epsilon^{-3}\log(\eps^{-1})$ & \cmark & insertions only & \citet{berndt15fully} \\		 
		& $1+\eps$ & $poly(\epsilon^{-1})$ & $\epsilon^{-4} \log(\eps^{-1})$ & \cmark &  & Berndt et al.~\cite{berndt15fully} \\
		& $1+\eps$ & $\epsilon^{-2}$ & $\epsilon^{-1}$ & \xmark &  & \textbf{Theorem \ref{thm:ub-amortized-migration}} \\	
		\hline
	\end{tabular}
\end{table*}

We contrast these upper bounds with matching and nearly-matching lower bounds, in Table \ref{table:bin-packing-lbs}. Note that here an amortized bound is a stronger bound than a corresponding worst case bound.

\begin{table*}[h]
	\captionsetup{justification=centering}
	\caption{Fully dynamic bin packing with limited recourse: Negative results}
	\label{table:bin-packing-lbs}
	\vspace{-0.2cm}
	\centering
	\begin{tabular}{| c | c | c | c | c | c | c |}
		\hline
		Costs & A.C.R & Additive & Recourse & W.C. & Notes & Reference \\ 	\hline	
		General & $\binpackinglb$ & $o(n)$ 
		& $\infty$ & \xmark & as hard as online & \textbf{Theorem \ref{thm:lb-gen-recourse}} \\
		\hline
		
		\multirow{3}{*}{\vspace{-0.3cm}Unit} & $1.333$ & $o(n)$ & $1$ & \cmark &  &  \citet{ivkovic1996fundamental} \bigstrut	\\
		& $\alpha - \eps$ & $o(n)$ & $1$  & \cmark & $\alpha\bound$ &  \citet{balogh2008lower} \bigstrut	\\
		& $\alpha - \eps$ & $o(\epsilon^2\cdot n^\delta)$ & $\Omega(\eps^2 \cdot n^{1-\delta})$ & \xmark & for all $\delta \in (0,1/2]$ &  \textbf{Theorem \ref{thm:final-lb}}\bigstrut	\\
		\hline
		\multirow{2}{*}{Size}	
		& $1+\eps$ & $o(n)$ & $\Omega(\epsilon^{-1})$ & \cmark &  & Berndt et al.~\cite{berndt15fully} \\				
		& $1+\eps$ & $o(n)$ & $\Omega(\epsilon^{-1})$ & \xmark &  & \textbf{Theorem \ref{thm:lb-amortized-migration}} \\				
		\hline
	\end{tabular}
\end{table*}

We emphasize again the tightness and near-tightness of our upper and lower bounds for the different movement costs. For general movement costs we show that the problem is at least as hard as online bin packing (without repacking), while the problem admits a $\binpackingub$-asymptotically competitive algorithm, nearly matching the state of the art $1.578$ online algorithm of \cite{balogh2017new}. For unit movement costs, we show the lower bound of $\alpha\bound$ of Balogh et al.~\cite{balogh2008lower} is sharp, by presenting an algorithm with a.c.r of $\alpha+\epsilon$ and additive term and recourse polynomial in $1/\epsilon$. We further simplify and strengthen the previous lower bound, by showing 
that any algorithm with a.c.r better than $\alpha$ requires polynomial additive term times recourse. Finally, for size movement costs, we show that an a.c.r of $1+\epsilon$ implies a recourse cost of $\Omega(\epsilon^{-1})$, even allowing for amortization (strengthening the hardness result of Berndt et al.~\cite{berndt15fully}). A simple lazy algorithm which matches this bound proves this tradeoff to be optimal too.
\section{Omitted Proofs of Section \ref{sec:unit} (Unit Movement Costs)}\label{sec:unit-appendix}

Here we provide proofs for our unit recourse upper and lower bounds.

First, we show that the optimal value of \eqref{lpeps}, restated below for ease of reference, is roughly 
$\alpha\bound$, where we recall that $\alpha$ is such that 
$1-1/\alpha$ is a solution to the following equation
\begin{equation}
\label{eqn:alpha}
3+ \ln(1/2) = \ln(x) + 1/x.
\end{equation}

\LPopt*
\begin{alignat}{5}
\text{minimize }  & \alpha_{\eps} & \tag{LP$_{\eps}$} \\
\text{s.t. } & \ts N_0 + \sum_{x \in \mathcal{S}_{\eps}} (1-x)\cdot N_x &&\geq B-1/B^c & \tag{Vol$_{\eps}$} \\
& \ts N_0 + \sum_{ x \in \mathcal{S}_{\eps}} N_x && \leq \alpha_{\eps} \cdot B  \tag{small$_{\eps}$} \\
& \ts N_0 + \sum_{x \in \mathcal{S}_{\eps}, x \leq t-\eps} N_x
+ \Big \lfloor \frac{B}{1-t} \Big \rfloor && \ts \leq \alpha_{\eps} \cdot \Big \lceil \frac{B}{1-t} \Big \rceil  & \quad \forall t\in \mathcal{S}_{\eps} \tag{CR$_\eps$}  \\
& N_x\geq 0 & \notag
\end{alignat}

\begin{proof}
We first modify~\eqref{lpeps} slightly to make it easier to work with---this
will affect its optimal value only by $O(\eps)$. 
\begin{enumerate}
	\item[(i)] Change the $B - 1/B^c$ term in the RHS of
	inequality~\eqref{Vol-eps} to $B$, and remove the floor and ceiling
	in the inequalities~\eqref{CR-t}. As $B \geq 1/\eps$, this
	affects the optimal value by $O(\eps)$.
	\item[(ii)] Divide the inequalities through by $B$, and introduce
	new variables $n_x$ for $N_x/B$, and
	\item[(iii)] Replace $\alpha_\eps - 1$ by a new variable
	$\alpha'_\eps$, and change the objective value to $\alpha'_\eps+1$.
\end{enumerate}
This gives the LP~\eqref{lpepsnew}, whose
optimal value is $\alphas_\eps \pm O(\eps)$. We will provide a feasible solution to this linear program
and a feasible dual solution for its dual linear program~\eqref{dualeps} whose objective
values are $\alpha+O(\eps)$ and $\alpha - O(\eps)$, respectively. This proves the desired result.

	\begin{figure}[h]
	\fbox{
		\begin{minipage}{.5\textwidth}
			\vspace{-0.235cm}
			\begin{alignat}{5}
			\text{min. }  & \alpha_{\eps}' +1& \tag{LPnew$_{\eps}$}  \label{lpepsnew}\\
			& n_0 + \sum_{x \in \mathcal{S}_{\eps}} (1-x)\cdot n_x &&\geq 1 &  \notag \\
			& n_0 + \sum_{ x \in \mathcal{S}_{\eps}} n_x - \alpha'_{\eps} && \leq 1 \notag  \\		
			&  n_0 + \sum_{x \in \mathcal{S}_{\eps},  x \leq t-\eps} n_x  && \leq \frac{\alpha_\eps'}{1-t}   & \quad \forall t\in \mathcal{S}_{\eps} \notag  \\
			& n_x\geq 0 & \notag
			\end{alignat}
		\end{minipage}}
	\fbox{
		\begin{minipage}{.5\textwidth}
			\begin{alignat}{5}
			\text{max. }  & Z - q_0+1 && \tag{Dual$_{\eps}$}  \label{dualeps}\\
			& q_0 + \sum_{t \in \mathcal{S}_{\eps} } \frac{q_t}{1-t} &&\leq 1 \tag{d1}\label{d1} \\
			&   q_0  + \sum_{t \in \mathcal{S}_{\eps} } q_t  && \geq Z    \tag{d2}\label{d2} \\
			&  q_0 + \sum_{t \geq x+\eps, t \in \mathcal{S}_{\eps}} q_t &&\geq (1-x)\cdot Z     \qquad  \forall x \in \mathcal{S}_{\eps} \tag{d3}\label{d3}
			\end{alignat}
		\end{minipage}}
		\caption{The modified LP and its dual program}
		\label{fig:newlp}	
	\end{figure}
		
	\textbf{Upper Bounding $\mathbf{OPT(}$\ref{lpepsnew}$\mathbf{)}$.} We first give a solution that is nearly feasible, and
	then modify it to give a feasible solution with value at most $\alpha
	+ O(\eps)$. 
	
	Let $C$ denote $\alpha -1$. Define $n_x := \int_{x-\eps}^x
	\frac{C}{(1-y)^2}\, \,dy$ for all $x\in \mathcal{S}_{\eps}$ and
	\begin{align*}
	n_0 & := 1- \int_{\frac12}^{\frac{1}{\alpha } } \frac{C\cdot
		\,dy}{(1-y)} = 1 - C\ln \Big(\frac12 \Big) + C \ln
	\Big(1-\frac{1}{\alpha} \Big).
	\end{align*}
	The first constraint of~\eqref{lpepsnew} is satisfied up to an additive
	$O(\eps)$:
	\begin{align*}
	n_0 + \sum_{x} (1-x) n_x &=  1- \int_{\frac12}^{\frac{1}{\alpha}} \frac{C}{(1-y)} \,dy + \sum_{x \in \mathcal{S}_{\eps}} (1-x) \int_{x-\eps}^{x} \frac{C}{(1-y)^2} \,dy \\
	&\geq  1- \int_{\frac12}^{\frac{1}{\alpha}} \frac{C}{(1-y)} \,dy +  \int_{\frac12}^{\frac{1}{\alpha}} \frac{C(1-y)}{(1-y)^2} \,dy - O(\eps) \\
	&= 1- O(\eps).
	\end{align*}
	Next, the second constraint of~\eqref{lpepsnew}. 
	\begin{align*}
	n_0 + \sum_{x \in \mathcal{S}_{\eps}} n_x &= 1-\int_{1/2}^{\frac{1}{\alpha} } \frac{C\cdot \,dy}{1-y} + \sum_{x\in \mathcal{S}_{\eps}} \int_{x-\eps}^{x} \frac{C \cdot \,dy}{(1-y)^2}  \\
	&= 1-\int_{1/2}^{\frac{1}{\alpha} } \frac{C\cdot \,dy}{1-y} + \int_{\frac12}^{\frac{1}{\alpha} } \frac{C \cdot \,dy}{(1-y)^2}  \\
	&= 1+  C \Big( -\ln(1/2) + \ln \Big(1-\frac{1}{\alpha} \Big) - 2 + \frac{1}{1- \frac{1}{\alpha}} \Big) \\
	& =  1+C \\
	& =  \alpha ,
	\end{align*}
	where the penultimate step follows by~\eqref{eqn:alpha}, and the last
	step uses $C=\alpha-1$.  Performing a similar calculation for the last
	set of constraints in~\eqref{lpepsnew}, we get 
	\begin{align*}
	n_0 + \sum_{x \in \mathcal{S}_{\eps}, x<t - \eps} n_x &= n_0 + \sum_{x \in \mathcal{S}_{\eps}} n_x - \sum_{x\in \mathcal{S}_{\eps},x\geq t-\eps} n_x \\
	&= \alpha - \sum_{x\geq t-\eps,x \in \mathcal{S}_{\eps}} \int_{x-\eps}^{x} \frac{C}{(1-y)^2} \,dy  \\
	&\leq  \alpha - \int_{t}^{\frac{1}{\alpha}} \frac{C}{(1-y)^2}\,dy +O(\eps) \\
	&=  \alpha - C \Big( \frac{\alpha}{\alpha-1} - \frac{1}{1-t} \Big) +
	O(\eps) \\ 
	&=  \frac{\alpha-1 }{1-t}  + O(\eps),
	\end{align*}
	where the second equality follows from the previous sequence of
	calculations. To satisfy the constraints of~\eqref{lpepsnew}, we
	increase $n_0$ to $n_0 + O(\eps)$ and set $\alpha'_\eps$ to $\alpha -1
	+ O(\eps)$. Since $t$ is always $\geq 1/2$, this will also satisfy the
	last set of constraints. Since the optimal value of~\eqref{lpepsnew}
	is $\leq \alpha -1 + O(\eps)$, which implies that $\alphas_\eps \leq
	\alpha-1$, since we had subtracted 1 from the objective function when
	we constructed~\eqref{lpepsnew} from~\eqref{lpeps}).
	
	\textbf{Lower Bounding $\mathbf{OPT(}$\ref{lpepsnew}$\mathbf{)}$.}
	As with our upper bound on $OPT($\ref{lpepsnew}$)$, we start with a nearly-feasible dual solution to \eqref{dualeps} and
	later modify it to obtain a feasible solution. Set $Z =
	\alpha(\alpha-1)$, $q_0 = (\alpha-1)^2$, $q_{\frac12 + \eps} = \alpha
	(\alpha-1)/2$, and $q_t = \alpha(\alpha-1)\eps$ for all $t \in
	\mathcal{S}_{\eps}$ with $t \geq \frac{1}{2}+2\eps$.  The objective
	value of~\eqref{dualeps} with respect to this solution is exactly
	$\alpha(\alpha-1) - (\alpha-1)^2 +1= \alpha$.  We will now show that
	it (almost) satisfies the constraints. For sake of brevity, we do not
	explicitly write that variable $t$ takes values in $\mathcal{S}_{\eps}$
	in the limits for the sums below. First, consider constraint~\eqref{d1}:
	\begin{align*}
	q_0 + \sum_{t=\frac12 + \eps}^{\frac{1}{\alpha}} \frac{q_t}{1-t}
	&= q_0 + \frac{q_{\frac12+\eps} }{\frac12 -\eps} + \sum_{t > \frac12 + \eps}^{\frac{1}{\alpha}} \frac{q_t}{1-t} \\
	&\leq q_0 + 2q_{\frac12+\eps} + \sum_{t > \frac12 + \eps}^{\frac{1}{\alpha}} \frac{q_t}{1-t} \\
	&\leq (\alpha-1)^2 + \alpha(\alpha-1) + \int_{\frac12}^{\frac{1}{\alpha}} \frac{\alpha (\alpha-1) \,dx}{1-x} \\
	&= (\alpha-1)^2 + \alpha(\alpha-1) + \alpha (\alpha-1) \big(\ln\left(\frac12\right) - \ln\left(1-\frac{1}{\alpha}\right) \big)  \\
	&= (\alpha-1)^2 + \alpha(\alpha-1) + \alpha (\alpha-1)
	\Big(\frac{3-2\alpha}{\alpha-1} \Big) \qquad = 1
	\end{align*}
	where we used Equation~\eqref{eqn:alpha}, which follows from the definition of $\alpha$, in the
	penultimate equation. Next, consider constraint~\eqref{d2}.
	\begin{align*}
	q_0 + \sum_{t=\frac12 + \eps}^{\frac{1}{\alpha}} q_t
	&= q_0 +  q_{\frac12 + \eps} + \sum_{t>\frac12 + \eps}^{\frac{1}{\alpha}} q_t   \\
	&\geq (\alpha-1)^2  + \frac12 \alpha (\alpha-1) + \int_{\frac12}^{\frac{1}{\alpha}} \alpha(\alpha-1) \,dx - O(\eps) \\
	&= (\alpha-1)^2  + \frac12 \alpha (\alpha-1) +  \alpha(\alpha-1)
	\Big(\frac{1}{\alpha}-\frac12 \Big)  - O(\eps) \qquad = Z - O(\eps)
	\end{align*}
	Finally, consider constraint~\eqref{d3} for any $x \in \mathcal{S}_{\eps}$:
	\begin{align*}
	q_0 + \sum_{t \geq x+\eps, t \in \mathcal{S}_{\eps}} q_t &\geq (\alpha-1)^2 + \int_{x+\eps}^{\frac{1}{\alpha}} \alpha (\alpha-1) \,dx  - O(\eps)\\
	&= (\alpha-1)^2 + \Big(\frac{1}{\alpha} - x-\eps\Big) \alpha(\alpha-1) - O(\eps)  \\
	&= (\alpha-1) \Big(\alpha -1 + \Big(\frac{1}{\alpha} - x -\eps\Big)\cdot  \alpha \Big) - O(\eps) \\
	&= Z\cdot (1-x-\eps) - O(\eps).
	\end{align*}
	Finally, increase $q_0$ to $q_0 + O(\eps)$ to ensure that
	constraints~\eqref{d2} and~\eqref{d3} are satisfied. This is now a
	feasible solution to \eqref{dualeps} with objective value
	$Z-q_o+1-O(\eps) = \alpha-O(\eps)$. Hence the optimal value
	$\alphas_\eps$ for the LP~\eqref{lpepsnew} is at least $\alpha-O(\eps)$.
\end{proof}

\subsection{Proof of the Lower Bound}
We now prove that $\alphas_\eps$, the optimal value of \eqref{lpeps}, is such that any algorithm with \acr below
this $\alphas_\eps$ must have either polynomial additive term or recourse (or both).

\unitLpValidLb*

\begin{proof}	
	For any $x\in \mathcal{S}_{\eps} \cup \{0\}$, again define $N_x$ as the
	number of bins with free space in the range $[x,x+\eps)$ when
	$\mathcal{A}$ faces input $\mathcal{I}_s$.  Inequality~\eqref{Vol-eps}
	is satisfied for the same reason as above.  Recall that $B=\Theta(n^\delta)$.  As $\mathcal{A}$ is
	$(\alphas_\eps-\Omega(\eps))$-asymptotically competitive with additive
	term $o(\eps\cdot n^\delta)$, i.e., $o(\eps\cdot B)$, and
	$OPT(\mathcal{I}_s)=B$, we have $N_0 + \sum_{ x \in
		\mathcal{S}_{\eps}} N_x \leq (\alphas_{\eps}-\Omega(\eps)+o(\eps))
	\cdot B \leq \alphas_\eps\cdot B$. That is, the $N_x$'s satisfy
	constraint \eqref{small-eps} with $\alpha_\eps = \alphas_\eps$.
	
	We now claim that there exists a $\ell \in \mathcal{S}_{\eps}$ such that
	\begin{equation}\label{ineq:CR-opposite}
	N_0 + \sum_{x \in \mathcal{S}_{\eps},  x \leq  \ell-\eps} N_x +
	\Big\lfloor \frac{B}{1-\ell} \Big\rfloor \geq \alphas_{\eps} \cdot
	\Big\lceil \frac{B}{1-\ell} \Big\rceil 
	\end{equation}
	holds (notice the opposite inequality sign compared to constraint
	\eqref{CR-t}). Suppose not. Then the quantities $N_0, N_x $ for $x \in
	\mathcal{S}_{\eps}$, and $\alphas_\eps$ strictly satisfy the
	constraints~\eqref{CR-t}.  If they also strictly satisfy the
	constraint~\eqref{small-eps}, then we can maintain feasibility and
	slightly reduce $\alphas_\eps$, which contradicts the definition of
	$\alphas_\eps$.  Therefore assume that constraint~\eqref{small-eps} is
	satisfied with equality. Now two cases arise: (i) All but one variable
	among $\{N_0\} \cup \{N_x \mid x \in \mathcal{S}_{\eps}\}$ are zero.  If
	this variable is $N_0$, then tightness of~\eqref{small-eps} implies
	that $N_0 = \alphas_\eps B$. But then we satisfy~\eqref{Vol-eps} with
	slack, and so, we can reduce $N_0$ slightly while maintaining
	feasibility. Now we satisfy all the constraints strictly, and so, we
	can reduce $\alphas_\eps$, a contradiction. Suppose this variable
	happens to be $N_x$, where $x \in \mathcal{S}_{\eps}$. So, $N_x =
	\alphas_\eps B$. We will show later in Theorem~\ref{lem:lp-opt-lb-ub-combo}
	that $\alphas_\eps \leq 1.4$. Since $(1-x) \leq 1/2$, it follows that
	$(1-x) N_x \leq 0.7 B$, and so we satisfy~\eqref{Vol-eps} with
	slack. We again get a contradiction as argued for the case when $N_0$
	was non-zero, (ii) There are at least two non-zero variables among
	$\{N_0\} \cup \{N_x \mid x \in \mathcal{S}_{\eps}\}$ -- let these be
	$N_{x_1}$ and $N_{x_2}$ with $x_1 < x_2$ (we are allowing $x_1$ to be
	0). Now consider a new solution which keeps all variables $N_x$
	unchanged except for changing $N_{x_1}$ to $N_{x_1} +
	\frac{\eta}{1-x}$, and $N_{x_2}$ to $N_{x_2} - \frac{\eta}{1-x}$,
	where $\eta$ is a small enough positive constant (so that we continue
	to satisfy the constraints~\eqref{CR-t} strictly).  The LHS
	of~\eqref{Vol-eps} does not change, and so we continue to satisfy
	this. However LHS of~\eqref{small-eps} decreases strictly. Again, this
	allows us to reduce $\alphas_\eps$ slightly, which is a
	contradiction. Thus, there must exist a $\ell$ which
	satisfies~\eqref{ineq:CR-opposite}. We fix such a $\ell$ for the
	rest of the proof.
	
	Let $\mathcal{B}$ denote the bins which have less than $\ell$ free space.
	So, $|\mathcal{B}| = N_0 + \sum_{x \in {\mathcal N}_\eps: x \leq \ell -
		\eps} N_x. $ Now, we insert $\lfloor \frac{B}{1-\ell-\eps} \rfloor$
	items of size $\ell+\eps$. (It is possible that $\ell = 1/\alpha$, and so $\ell
	+ \eps \notin \mathcal{S}_{\eps}$, but this is still a valid instance).  We
	claim that the algorithm must move at least $\eps$ volume of small
	items from at least $\eps B$ bins in $\mathcal{B}$. Suppose not. Then
	the large items of size $\ell+\eps$ can be placed in at most $\eps B$
	bins in $\mathcal{B}$. Therefore, the total number of bins needed for
	$\cI_\ell$ is at least $N_0 + \sum_{x \in {\mathcal N}_\eps: x
		\leq \ell - \eps} N_x - \eps B + \Big\lfloor \frac{B}{1-\ell}
	\Big\rfloor$, which by inequality~\eqref{ineq:CR-opposite}, is at
	least $(\alphas_\eps - O(\eps)) \cdot OPT(\mathcal{I}_{\ell+\eps})$,
	because $OPT(\mathcal{I}_{\ell+\eps}) = \Big\lceil \frac{B}{1-\ell-\eps}
	\Big\rceil = \Big\lceil \frac{B}{1-\ell} \Big\rceil + O(\eps B)$. But we
	know that $\mathcal{A}$ is
	$(\alphas_\eps-\Omega(\eps))$-asymptotically competitive with additive
	term $o(\eps\cdot n^\delta)$ (which is $o(\eps\cdot
	OPT(\mathcal{I}_{\ell+\eps}))$. So it should use at most
	$(\alphas_\eps-\Omega(\eps)+o(\eps))\cdot OPT(\mathcal{I}_{\ell+\eps})$ bins,
	which is a contradiction. Since each small item has size $1/B^c$, the
	total number of items moved by the algorithm is at least $\eps^2
	B/B^c$. This is $\Omega(\eps \cdot n^{1-\delta})$, because $\eps \geq
	1/B$, and $B^c=\Theta(n^{1-\delta})$.
	\end{proof}

\subsection{Matching Algorithmic Results}\label{sec:unit-cost-algo-appendix}

We now address the omitted proofs of our matching upper bound.

\subsubsection{\ref{lpeps} as a Factor-Revealing LP}

We now present the omitted proofs allowing us to use 
\eqref{lpeps} to upper bound our algorithm's \acr.
We start by showing that an optimal solution to \eqref{lpeps} 
induces a packing of the small items which can be trivially extended (i.e., without moving any items) to
an $(\alphas_\eps+O(\eps))$-competitive packing of any number of $\ell$-sized items, for any $\ell>1/2$.

\UnitStrongCurve*

\begin{proof}
	Fix $\ell$ and $k$. Let $N_x$ be the bins with exactly free space in the packing of $\cI_s$ and let $N = \sum_{x \mid x \geq \ell, x \in \mathcal{S}_{\eps}} N_x$ be the
	bins with at least $\ell$ free space; if $\ell \geq 1/\alpha$, then $N =
	0$. Let $N' = \sum_{x \mid x \leq \ell-\eps, x \in \mathcal{S}_{\eps}}
	N_x$. Our algorithm first packs the size-$\ell$ items in the $N$ bins of
	the packing before using new bins, and hence uses $N' + \max(N, k)$ bins. If
	$k \leq N$, we are done because of the constraint~\eqref{small-eps},
	so assume $k \geq N$.  A volume argument bounds the number of bins in
	the optimal solution for $\cI_\ell^k$:
	$$
	\text{OPT}(\cI_\ell^k) \geq \begin{cases}
	k &\quad \text{ if } k(1-\ell) \geq B\\
	k + \big( B - k(1-\ell) \big) &\quad \text{ else. }
	\end{cases}
	$$
	
	We now consider two cases:
	\begin{itemize}
		\item $k(1-\ell) \geq B$: Using constraint~\eqref{CR-t}, the number of
		bins used by our algorithm is
		$$ \textstyle  N' + k \leq \frac{\alpha_\eps B}{1-\ell} + O(\eps B) +
		\left(k - \frac{B}{1-\ell} 
		\right) \leq (\alpha_\eps + O(\eps)) k. $$
		\item $k(1-\ell) < B$: Since $k$ lies between $N$ and $\frac{B}{1-\ell}$, we
		can write it as a convex combination $\frac{\lambda_1 B}{1-\ell} +
		\lambda_2 N$, where $\lambda_1 + \lambda_2 = 1, \lambda_1, \lambda_2
		\geq 0$. We can rewrite constraints~\eqref{small-eps}
		and~\eqref{CR-t} as
		$$ \textstyle  N' + \frac{B}{1-\ell} \leq \frac{\alpha_\eps B}{1-\ell} +
		O(\eps B) \quad \text{ and } \quad N' + N \leq \alpha_\eps B. $$
		Combining them with the same multipliers $\lambda_1, \lambda_2$, we
		see that $N' + k$ is at most 
		$$ \textstyle \alpha_\eps \left( \frac{\lambda_1 B}{1-\ell} + \lambda_2 B \right) + O(\eps B) =
		\alpha_\eps \left( B + \frac{\lambda_1 \ell B}{1-\ell} 
		\right)  + O(\eps B) \leq  \alpha_\eps \left( B + \ell k \right) + O(\eps B). $$ 
		The desired result follows because $B + \ell k = k + (B-k(1-\ell))$ and $B$ is a lower bound on $\text{OPT}(\cI_\ell^k)$.
		\qedhere
	\end{itemize}
\end{proof}

We now proceed to provide a linear-time algorithm which for any fixed $\epsilon$, given an input $\cI$ produces a packing into $(1+O(\eps))\cdot OPT(\cI)$ bins such that in almost all bins large items occupy either no space or more than half the bin.
\UnitNiceOpt*

\begin{proof}
	Suppose this packing uses $N\leq (1+\eps)\cdot OPT(\mathcal{I})+f(\eps^{-1})$ bins of which $B\geq 2\eps\cdot OPT(\mathcal{I})+2f(\eps^{-1})+4$ are ``bad'' bins -- bins with some $v\in (0,1/2]$ volume taken up by large items. We show that this packing can be improved to require fewer than $OPT(\cI)$, which would lead to a contradiction. 
	Indeed, repeatedly combining the contents of any two bad bins until no two such bins remain would decrease the number of bins by one and the number of bad bins by at most two for each combination (so this process can be repeated at least $\lfloor B/2 \rfloor\geq \eps\cdot OPT(\mathcal{I})+f(\eps^{-1})+1$), and so would result in a new packing of the $\cI$ using at most $N-\lfloor B/2 \rfloor< OPT(\mathcal{I})$ bins.
\end{proof}

\UnitOnlyLargeUBamortized*

\begin{proof}
    We run the linear-time algorithm of \citet{de1981bin} to compute a packing
    of the large items of $\cI$ into at most $(1+\eps)\cdot
    OPT(\cI)+O(\eps^{-2})$ many bins.  By \Cref{obs:nice-opt}, all but at most
    $2\eps\cdot OPT(\cI)+O(\eps^{-2})$ of these bins have at most $\frac12$
    volume occupied by small items. We use these bins in our packing of $\cI'$.
    For the remaining bins we ``glue'' all large items occupying the same
    bin into a single \emph{huge} item.  Note that any packing of $\cI'$
    trivially induces a similar packing of the as-of-yet unpacked large items
    of $\cI$ with the same number of bins (simply pack large items glued together in
    the place of their induced glued item). Moreover, by construction all large
    items of $\cI'$ are huge (i.e., have size greater than $1/2$), and clearly
    $OPT(\cI')\leq (1+O(\eps))\cdot OPT(\cI)+O(\eps^{-2})$, as $\cI'$ can be
    packed using this many bins. As the free space in all bins is an
    integer multiple of $\epsilon$, we can round the huge items' sizes to
    integer multiples of $\epsilon$ and obtain a packing with the same number of
    bins for $\cI'$. Such a rounding allows us to bucket sort the huge items of
    $\cI'$ (and bins) in time $O(n/\epsilon)$. All the above steps take
    $O_\eps(n)$ time. It remains to address the obtained packing's
    approximation ratio.
 
    \begin{figure}[h]
	 	\centering
	 	\begin{subfigure}{.485\textwidth}
	 		\centering
	 		\includegraphics[width=1.8in]{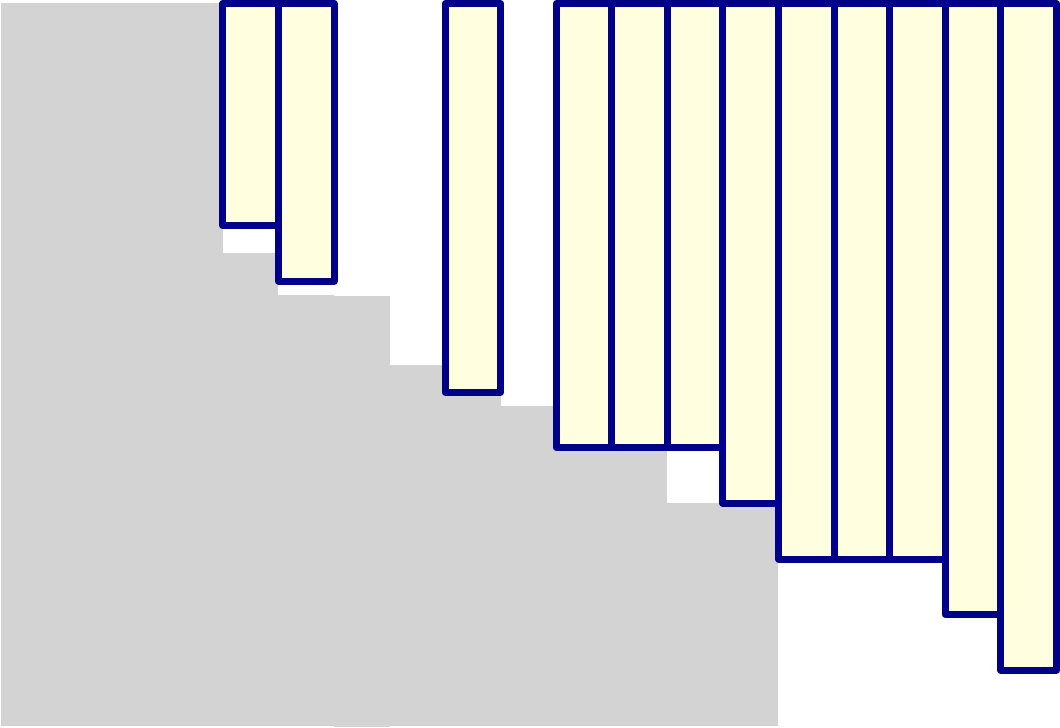}
	 		\caption{\small A greedy packing of instance $\mathcal{I}'$. Large items are packed ``on top'' of the small items (in grey).}
	 		\label{fig:upperbound-optimal-packing}
	 	\end{subfigure}%
	 	\quad 
	 	\begin{subfigure}{.485\textwidth}
	 		\centering
	 		\includegraphics[width=1.8in]{figs/upperbound}
	 		\caption{\small Packing of instance $\cI_\ell^k$ obtained from $\cI'$ by removing parts of huge items $\cI_\ell^k$ (in red).}
	 		\label{fig:upperbound-creating-Itk}
	 	\end{subfigure}
	 	\vspace{-0.15cm}
	 	\caption{A packing of instance $\cI'$ and the instance $\cI_\ell^k$ obtained from $\cI'$.}
	 	\label{fig:upperbound-app}
	 \end{figure}
	 
	 \vspace{-0.2cm}
	 In order to 
	 upper bound the number of bins used to pack $\cI'$ (and 
	 therefore $\cI$), we create a new instance of the form 
	 $\cI_\ell^k$ for some $k$ and $\ell>1/2$. Specifically, if we 
	 sort the bins containing small items in decreasing order 
	 of free space, we remove all large items packed in a bin 
	 with $f$ free space such that some bin with at least $f$
	 free space contains no large item of $\cI$.
	 Let the smallest remaining large item size be $\ell$. We 
	 decrease the size of all $k$ remaining larger items to 
	 $\ell$, yielding the instance $\cI_\ell^k$ packed on top of
	 the curve using 
	 the same number of bins as our packing of $\cI$ (see 
	 \Cref{fig:upperbound-creating-Itk}). By 
	 \Cref{lem:strong-curve}, ignoring the additive $O(\epsilon^{-2})$ term due to the packing of the large items, the packing of $\cI'$ on top of 
	 the small items -- and hence the packing we obtain
	 for $\cI$ -- uses at most $\alpha_\eps\cdot 
	 OPT(\cI_\ell^k)\leq \alpha_\eps\cdot OPT(\cI') \leq 
	 \alpha_\eps\cdot (1+O(\epsilon))\cdot OPT(\cI)$ bins. 
\end{proof}

Theorem~\ref{thm:only-large-ub-amortized} immediately gives us amortized
recourse bounds. Extending it slightly allows us to obtain worst-case
recourse bounds, as follows: replace the static $(1+\epsilon)$-\acr
algorithm by a dynamic algorithm and round the huge items to sizes which
are multiples of $\epsilon$. Doing this naively yields an
$(\alpha+O(\epsilon))$ \acr, with worst-case recourse bounds
that are at most $O(\epsilon^{-3})$ times the recourse bounds of a
fully-dynamic $(1+\epsilon)$-\acr bin packing algorithm for items of
size at least $\epsilon$. Using the worst-case recourse bounds from
\citet[Theorem 9]{berndt15fully}, we get an $\tilde{O}(\epsilon^{-6})$
worst-case recourse bound.

The following theorem improves this worst-case recourse bound to being
only an $O(\epsilon^{-1})$-times worse, instead of the naive $O(\epsilon^{-3})$
times worse. This, combined with \citet[Theorem 9]{berndt15fully}, gives an 
improved $\tilde{O}(\epsilon^{-4})$ worst-case
recourse bound.


\UnitOnlyLargeUBwc*

\begin{proof}
	The idea here is similar to the proof of \Cref{thm:only-large-ub-amortized}, and we mainly discuss the recourse bound.
	
	By the theorem's hypothesis, we have a $(1+\epsilon)$-\acr packing of the subinstance made up of the large items of size greater than $1/4$. We ``glue'' items in bins which are at least half full into single huge items, yielding an instance $\cI'$ with $OPT(\cI')\leq (1+O(\eps))\cdot OPT(\cI)+O(f(\eps^{-1}))$, and which by \Cref{obs:nice-opt} has at most $2\eps\cdot OPT(\cI)+O(f(\eps^{-1}))$ non-huge items. We pack these items in individual bins.
	We pack $\cI'$ greedily on top of the curve, by packing the huge items in order of increasing huge item size, according to \FF, with the bins sorted in increasing order of free space
	(see \Cref{fig:upperbound-optimal-packing}).
	As in the linear-time algorithm of \Cref{thm:only-large-ub-amortized}, rounding the huge items' sizes to integer
	products of $\epsilon$ does not increase the number of bins used to pack $\cI'$. For \Cref{thm:only-large-ub-amortized} this allowed us to pack the huge items in 
	linear time. This rounding also proves useful in obtaining low worst-case recourse, as follows. 
	
    First, consider only the small items and the large items of size greater
    than $1/4$. We dynamically pack the former into some
    $\alpha_\epsilon$-feasible packing, and the large items using the
    fully-dynamic $(1+\epsilon)$-\acr algorithm of the theorem's statement.
    Whenever a bin is changed in the packing of the large items of size greater
    than $1/4$ (an item added/removed by the dynamic packing algorithm), we
    either changed a bin which was less than half full (in which case we move
    its $O(1)$ items in our packing of $\cI_t$), or we create a new item in the
    dynamic instance $\cI'_t$. Now, we can dynamically keep the huge items of
    $\cI_t'$ packed as though inserted in sorted \FF order as above as
    follows: whenever a huge item is added, we add it in the bin ``after'' all
    items of the same size, possibly removing an item of larger size from this
    bin, which we then re-insert (deletion is symmetric). As the number of
    different huge item sizes is $O(\eps^{-1})$, we move at most $O(\eps^{-1})$
    huge items, and so at most $O(\eps^{-1})$ large items of $\cI_t$ (as these
    large items have size greater than $1/4$, and so the huge items correspond
    to at most three such items). Finally, we now discuss how to pack medium
    items (items of size in the range $(\epsilon,1/4]$). 
	
    We strive to keep the medium-sized items “on top of” our packing for the
    large and small items so as to guarantee that the bins in the prefix of all
    but the last bin (sorted by time of opening) which contain a medium-sized
    item are all at least 3/4 full if we ignore the fact that small item groups
    and huge items have their sizes rounded to products of $\epsilon$.  By a
    volume argument (accounting for above-mentioned rounding), if we open a new
    bin, our packing has \acr $4/3+O(\eps)<\alpha+(\eps)$. If no new bin is
    opened, we can safely ignore the medium-sized items and compare the number
    of bins we open against an easier instance which does not contain these
    medium-sized items, which as we shall see later yields an \acr of
    $\alpha+O(\epsilon)$. We now discuss details of dynamically packing
    medium-sized items in this way.

    Insertion of a medium-sized item is trivial using no recourse, by adding it
    to the first bin which is less than $3/4$ full and accommodates this new item
    (or opening a new bin if necessary).  Removal of a medium-sized item is not
    much harder. Upon the removal of some or all medium-sized items from some
    bin $B$, we take the last medium items (according to the bins' order) and
    reinsert them into $B$ until $B$ is at least $3/4$ full or until it is the
    last bin with medium items. As medium items have size in the range
    $(\eps,1/4]$, this requires $O(\epsilon^{-1})$ worst-case recourse.
    Similarly, a change in the packing of small items can only increase or
    decrease the volume used in a bin by $O(\eps)$. Such an insertion can be
    addressed by removing at most $O(1)$ items from the bin until it does not
    overflow, and then reinserting them (in the case of an insertion of small
    items into a bin), or by inserting some last $O(1)$ medium items into this
    bin in the case of deletion. 	It remains to address changes due to
    updates in the packing of large items (which are grouped into huge items). 
	
    Recall that all the huge items have their sizes rounded up to products of
    $\epsilon$. This rounding allows us to obtain a packing as in
    \Cref{fig:upperbound-optimal-packing} by only repacking $O(\epsilon^{-1})$
    such huge items following an update to the packing of the large items, if
    we ignore medium items.  Now, consider the medium items that are displaced
    due to such a move. Following a removal of a huge item, a huge item of
    larger size may replace the removed item, and this huge item may be
    replaced in turn by a larger huge item, and so on for the $\epsilon^{-1}$
    different huge item sizes. We address potential overflowing of these bins
    by removing the  minimum number of medium items to guarantee these bins do
    not overflow and repacking them. As each medium item has size greater than
    $\epsilon$, we remove at most one such item per huge item size class; that
    is, $O(\epsilon^{-1})$ such items. Similarly, the removal of a huge item
    potentially causes a larger huge item to take its place, and so
    on for $O(\epsilon^{-1})$ size classes, until no next-sized huge item
    exists or the next huge item does not fit. To avoid overfilling of these
    bins, we move at most one medium item per size class (as the medium items
    have size greater than $\epsilon$). These $O(\epsilon^{-1})$ are repacked
    as above. The last bin affected may find that it has too few medium items
    and is less than $3/4$ full; we address this using $O(\epsilon^{-1})$
    worst-case recourse as in our solution for deletion of medium items in a
    bin.

    To summarize, our worst-case recourse is $O(\eps^{-1})$ per item move in
    the $(1+\eps)$-\acr dynamic packing of the large items, $O(\epsilon^{-1})$
    per addition or deletion of a medium item and $O(1)$ per item move in the
    packing of small items. It remains to bound the obtained \acr of our
    packing obtained by extending the $\alpha_\eps$-feasible packing of the
    small items.
	
	Similarly to the proof of \Cref{thm:only-large-ub-amortized}, the \acr of this algorithm is at most $\alpha+O(\epsilon)$ if no bins are opened due to the medium items. On the other hand, as argued before, this algorithm's \acr is at most $4/3<\alpha+O(\epsilon)$ if such bins are opened. Finally, we observe that the additive term is at most $O(f(\epsilon^{-1}))$, incurred by the packing of the large items and the number of non-huge items stored in designated bins.
\end{proof}

\subsubsection{Dealing With Small Items: ``Fitting a Curve''}

We now consider the problem of packing $\eps$-small items according to
an approximately-optimal solution of \eqref{lpeps}, which we abstract thus:

\BinCurveFit*

If we have $K=0$ with $b_0 = 1$ and $f_0 = 1$, we get standard \BP. We
want to solve the problem only for (most of the) small items, in the fully-\,dynamic
setting. We consider the special case with relative frequencies $f_x$
being multiples of $1/T$, for $T \in \mathbb{Z}$; e.g., $T =
O(\eps^{-1})$. Our algorithm maintains bins in
increasing sorted order of item sizes. The number of bins is
always an integer product of $T$. Consecutive bins are aggregated into
\emph{clumps} of exactly $T$ bins each, and clumps aggregated into
$\Theta(1/\eps)$ \emph{buckets} each.  Formally, each clump has $T$
bins, with $f_x \cdot T\in \mathbb{N}$ bins of size $\approx b_x$ for
$x=0, \ldots, K$. The bins in a clump are ordered according to their
target $b_x$, so each clump looks like a curve. Each bucket except the last 
consists of some $s\in[1/\eps, 3/\eps]$ consecutive clumps (the last bucket may have fewer than $1/\epsilon$ clumps).  See
Figure~\ref{fig:target-curve-app}.  For each bucket, all bins except those
in the last clump are full to within additive $\eps$ of their target
size. 

\begin{figure}[h]
	\begin{center}
		\includegraphics[scale=0.625]{figs/clumps}
		\caption{\small Buckets have $O(1/\eps)$ clumps, clumps have $T$ bins.}
		\label{fig:target-curve-app}
	\end{center}
\end{figure}

\vspace{-0.5cm}

Inserting an item adds it to the correct bin according to its size. If
the bin size becomes larger than the target size for the bin, the
largest item overflows into the next bin, and so on. Clearly this
maintains the invariant that we are within an additive $\eps$ of the
target size.  We perform $O(T/\epsilon)$ moves in the same bucket; if we
overflow from the last bin in the last clump of the bucket, we add a new
clump of $T$ new bins to this bucket, etc. If a bucket contains too many
clumps, it splits into two buckets, at no movement cost. An analogous
(reverse) process happens for deletes. Loosely, the process maintains
that on average the bins are full to within $O(\eps)$ of the target
fullness -- one loss of $\eps$ because each bin may have $\eps$ space,
and another because an $O(\eps)$ fraction of bins have no guarantees
whatsoever.

As observed in \S\ref{sec:unit}, the above approach approach solves bin curve-fitting using $O(T/\eps)=O(\eps^{-2})$ worst-case recourse, provided all items have size at most $\eps$ (as in our case).  provided all freq
We now relate this process to the value of
\ref{lpeps}. We first
show that setting $T = O(1/\eps)$ and restricting to frequencies to
multiples of $\eps$ does not hurt us. Indeed, for us, $b_0 = 1$, and
$b_x = (1-x)$ for $x \in \mathcal{S}_{\eps}$. Since \eqref{lpeps} depends on the
total volume $B$ of small items, and $f_x$ may change if $B$ changes, it
is convenient to work with \fullshort{the normalized LP \eqref{lpepsnew}}{a normalized
	version, referred to as~\eqref{lpepsnew} (see Figure~\ref{fig:newlp})}. Now $n_x$ can be interpreted as
just being proportional to number of bins of size $b_x$, and we can
define $f_x = {n_x}/{\sum_x n_x}$. However, we also need each $f_x$
to be integer multiples of $1/T$ for some integer $T = O(1/\eps)$. We
achieve this by slightly modifying the LP solution (which requires a
somewhat careful proof).

\UnitTildeN*

\begin{proof}
	For sake of brevity, let $\mathcal{S}_{\eps}' := \mathcal{S}_{\eps} \cup \{0\}$. Consider
	the indices $x \in \mathcal{S}'_\eps$ in increasing order, and modify $n_x$ in
	this order. Let $\Delta_x := \sum_{x' \in \mathcal{S}'_\eps : x' < x} n_{x'}$,
	and let $\tilde{\Delta}_x$ be the analogous expression for
	$\tilde{n}_x$. We maintain the invariant that $|\Delta_x -
	\tilde{\Delta}_x| \leq \eps$, which is trivially true for the base
	case $x = 0$. Inductively, suppose it is true for $x$. If $\Delta_x >
	\tilde{\Delta}_x$, define $\tilde{n}_x$ be $n_x$ rounded up to the
	nearest multiple of $\eps$, otherwise it is rounded down; this
	maintains the invariant. If we add $O(\eps)$ to the old $\alpha_\eps'$
	value, this easily satisfies the second and third set of constraints,
	since our rounding procedure ensures that the prefix sums are
	maintained up to additive $\eps$, and $t \in [0, \frac12]$. Checking
	this for the first constraint turns out to be more subtle. 
	We claim
	that 
	$$ \tilde{n}_0 + \sum_{x \in \mathcal{S}_{\eps}} (1-x) \tilde{n}_x \geq 1 -
	O(\eps). $$
	
	For an element $x \in \mathcal{S}_{\eps}'$, let $\Delta n_x$ denote $n_x - \tilde{n}_x$. It is enough to show that 
	$|\sum_{x \in \mathcal{S}_{\eps}'} x \cdot \Delta n_x| \leq O(\eps)$. Indeed, 
	\begin{align*}
	\tilde{n}_0 + \sum_{x \in \mathcal{S}_{\eps}} (1-x) \tilde{n}_x & = n_0 + \sum_{x \in \mathcal{S}_{\eps}} (1-x) n_x - \Delta n_0 - \sum_{x \in \mathcal{S}_{\eps}} (1-x) \Delta n_x \\
	& \geq 1  -  \sum_{x \in \mathcal{S}_{\eps}'} \Delta n_x  + \sum_{x \in \mathcal{S}_{\eps}'} x \cdot \Delta n_x \geq 1 - O(\eps) +  \sum_{x \in \mathcal{S}_{\eps}'} x \cdot \Delta n_x . 
	\end{align*}
	
	We proceed to bound $\sum_{x \in \mathcal{S}_{\eps}'} x \cdot \Delta
	n_x$. Define $\Delta n_{\leq x}$ as $\sum_{x' \in \mathcal{S}_{\eps}': x' \leq
		x} \Delta n_{x'}$.  Define $\Delta n_{< x}$ analogously.  Note that
	$\Delta n_{\leq x}$ stays bounded between $[-\eps, +\eps]$.  Let $I$
	denote the set of $x \in \mathcal{S}_{\eps}'$ such that $\Delta n_{\leq x}$
	changes sign, i.e., $\Delta n_{< x}$ and $\Delta n_{\leq x}$ have
	different signs. We assume w.l.o.g.\ that $\Delta n_{\leq x} = 0$ for any
	$x \in I$---we can do so by splitting $\Delta n_x$ into two parts
	(and so, having two copies of $x$ in $\mathcal{S}_{\eps}'$).  Observe that for
	any two consecutive $x_1, x_2 \in I$, the function $\Delta n_{\leq x}$ is
	unimodal as $x$ varies from $x_1$ to $x_2$, i.e., it has only one local
	maxima or minima. This is because $\Delta n_x$ is negative if $\Delta
	n_{< x}$ is positive, and \emph{vice versa}.
	
	Let the elements in $I$ (sorted in ascending order) be $x_1, x_2,
	\ldots, x_q$. Let $S_i$ denote the elements in $\mathcal{S}_{\eps}'$ which lie
	between $x_i$ and $x_{i+1}$, where we include $x_{i+1}$ but exclude
	$x_i$. Note that $\sum_{x \in S_i} \Delta n_x = \Delta n_{\leq
		x_{i+1}} - \Delta n_{\leq x_i} = 0. $ Let $x_i' = x_i + \eps$ be the
	smallest element in $S_i$. Now observe that
	$$ \sum_{x \in S_i} x \cdot \Delta n_x = x_i' \cdot \sum_{x \in S_i} \Delta n_x  + \sum_{x \in S_i} (x - x_i') \cdot \Delta n_x = 
	\sum_{x \in S_i} (x - x_i') \cdot \Delta n_x.$$ Because of the
	unimodal property mentioned above, we get $\sum_{x \in S_i} |\Delta
	n_x| \leq 2 \eps$. Therefore, the absolute value of the above sum is
	at most $2 \eps \cdot (x_{i+1} - x_i') \leq 2 \eps^2 |S_i|$, using
	that $x_{i+1} - x_i' = (|S_i|-1)\eps$.  Now summing over all $S_i$ we
	see that $\sum_x x \cdot \Delta n_x \leq O(\eps)$ because $|\mathcal{S}_{\eps}|$
	is $O(1/\eps)$. This proves the desired claim.

	
	So to satisfy the first constraint we can increase $n_0$ by
	$O(\eps)$. And then, increasing $\alpha_\eps'$ by a further $O(\eps)$
	satisfies the remaining constraints, and proves the lemma.
\end{proof}

Let ${\tilde n}_x$ be $i_x \cdot \eps$ where $i_x$ is an integer.  Note
that $\sum_x {\tilde n}_x \leq \alpha + O(\eps) \leq 2$, so dividing
through by $\eps$, $\sum_x i_x \leq 2/\eps$. Now for any index $x \in
\{0\} \cup \mathcal{S}_{\eps}$, we define $f_x := \frac{{\tilde n}_x}{\sum_{x'}
	{\tilde n_{x'}}} = \frac{i_x}{\sum_{x'} i_{x'}}$. If we set $T :=
\sum_x i_x \leq 2/\eps$, then $T$ is an integer at most $2/\eps$, and
$f_x$ are integral multiplies of $1/T$, which satisfies the requirements
of our algorithm.
Next, we show that the \,dynamic solution maintained by our algorithm
corresponds to a near-optimal
solution to~\ref{lpeps}.

\UnitSmallCurveFitting*

\begin{proof}
	The recourse bound is immediate, as each insertion or deletion causes
	a single item to move from at most $T\cdot 3/\eps$ bins and $T=O(\eps^{-1})$. 
	For the rest of the argument, ignore the last bucket, contributing
	$O(\eps^{-2})$ bins to our additive term.  Let the total volume of items in the other bins
	be $B$.  Since $\eta = \sum f_x b_x$ is the average bin-size, we
	expect to use $\approx B/\eta$ bins for these items. We now show that
	we use at most $(1+O(\eps)) \cdot \frac{f_x B}{\eta}$ and at least
	$(1-O(\eps)) \cdot \frac{f_x B}{\eta}$ bins of size $b_x$ for each $x$.  
	
	Indeed, each (non-last) bucket satisfies the property that all bins in
	it, except perhaps for those in the last clump, are at least
	$\eps$-close to the target value. Since each bucket has at least
	$1/\eps$ clumps, it follows that if there are $N$ clumps and the
	target average bin-size is $\eta$, then $(1-\eps)N$ clumps are at
	least $(\eta-\eps)$ full on average. The total volume of a clump is
	$\eta \cdot T$, so $N \leq \frac{B}{(1-\eps) (\eta - \eps) \cdot T} =
	\frac{B}{\eta T}(1+O(\eps))$, where we use that $\eta \geq
	1/4$. Therefore, the total number of bins of size $b_x$ used is $f_x T
	\cdot N \leq (1+O(\eps)) \cdot \frac{B f_x}{\eta}$. The lower bound
	for the number of bins of size $b_x$ follows from a similar argument
	and the observation that if we scale the volume of small items up by a
	factor of $(1+O(\eps))$, this volume would cause each bin to be filled
	to its target value. This implies that we use at least $(1-O(\eps))
	\cdot \frac{B f_x}{\eta}$ bins with size $b_x$.
	
	We now show that the $\bar{N_x}$ satisfy \ref{lpepsnew} with $\alphas_\epsilon+O(\epsilon)$. Recall that we started with an optimal solution to~\ref{lpepsnew} of value
	$\alphas_\eps + O(\eps)$, used Lemma~\ref{lem:tildeN} to get $f_x =
	{\tilde{n_x}}/{\sum_{x'} \tilde{n_{x'}}}$ and ran the algorithm above.
	By the computations above, ${\bar N}_x$,
	the number of bins of size $b_x$ used by our algorithm, is
	$$ (1+O(\eps)) \cdot \frac{f_x B }{\sum_{x'} f_{x'} b_{x'}} =
	(1+O(\eps)) \cdot \frac{{\tilde n}_x B}{\sum_{x'} \tilde n_{x'}b_{x'}}
	\leq (1 + O(\eps))\cdot  {\tilde n}_x B, $$ where the last inequality
	follows from the fact that $\sum_{x'} \tilde n_{x'}b_{x'} \geq 1$ (by
	the first constraint of~\ref{lpepsnew}). 
	Likewise, by the same argument, we find that these $\bar{N_x}$ satisfy \ref{CR-t} with $\alpha_\epsilon = \alphas_\eps + O(\epsilon)$.
	Finally, since $\tilde{n}_x$
	satisfies the constraints of~\ref{lpepsnew} (up to additive $O(\eps)$
	changes in $\alpha_\eps$), we can verify that the quantities ${\bar
		N}_x$ satisfy the last two constraints of~\ref{lpeps} (again up to
	additive $O(\eps)$ changes in $\alpha_\eps$). To see that they also
	satisfy~\ref{Vol-eps}, we use the following calculation: 
	$$ \sum_x {\bar N}_x \geq (1-3 \eps) \cdot \sum_x \frac{f_x B}{\sum_x b_x
		f_x } = (1-O(\eps))\cdot  \frac{B}{\sum_x b_x f_x} \geq (1-O(\eps))\cdot  B, 
	$$
	because $\sum_x f_x = 1$ and $b_x \leq 1$ for all $x$. Therefore,
	scaling all variables with $(1-O(\eps))$ will satisfy
	constraint~\ref{Vol-eps} as well.  It follows that ${\bar N}_x$
	satisfy~\ref{lpeps} with $\alpha_\eps = \alphas_\eps + O(\eps)$.
\end{proof}

\subsubsection{Our Algorithm}
\label{sec:mainalgo-app}
Here we provide missing proofs of some of the claims made in our algorithm's analysis in \S\ref{sec:mainalgo}.

\mysubsubsection{Amortized Algorithm} 
In \S\ref{sec:mainalgo} we claimed that if we start an epoch with a packing using $N\leq (\alpha + 
O(\eps))\cdot OPT(\cI_t) + O(\eps^{-2})$ bins, as we do (here 
$\cI_t$ is the instance at the beginning of the epoch), then if for 
$\eps\cdot N$ updates we address updates na\"ively, our solution at 
any given time $t'$ during the epoch uses at most $N_{t'}\leq (\alpha 
+ O(\eps))\cdot OPT(\cI_{t'}) + O(\eps^{-2})$ bins. 
To see this, note that as each update can only change 
$OPT(\cI_{t'})$ and the number of bins we use, $N_{t'}$, by one. Therefore,  $N\cdot(1-\eps) \leq N_{t'}\leq N\cdot (1+\eps)$ and $OPT(\cI_t')\geq OPT(\cI)+\eps\cdot N$. By our upper bound on $N$, we obtain
\begin{align*}
N_{t'} &\leq (1+\eps)\cdot N \\
& \leq (\alpha+O(\eps))\cdot OPT(\cI_t) + \eps\cdot N +
O(\eps^{-2})\\
& \leq (\alpha+O(\eps))\cdot (OPT(\cI_{t'})+\eps\cdot N) + \eps\cdot N + O(\eps^{-2}) \\
& = (\alpha+O(\eps))\cdot OPT(\cI_{t'}) + O(\eps\cdot N) + O(\eps^{-2}) \\
& \leq (\alpha+O(\eps))\cdot OPT(\cI_{t'}) + O(\eps\cdot N_{t'}) + O(\eps^{-2}),
 \end{align*}
where we used in the last step the fact that $N\leq N_{t'}/(1-\eps)$. Subtracting the $O(\eps\cdot N_{t'})$ term from both sides of the above and dividing through by $(1-O(\eps))$, we obtain the claimed bound.

\section{Omitted Proofs of Section \ref{sec:general} (General Movement Costs)}\label{sec:general-appendix}

Here we provide proofs for our general amortized recourse upper and lower bounds, and discuss cases for which the algorithmic bounds can be made worst case (in \S\ref{sec:general-wc}).

\subsection{Matching the Lower Bounds for Online Algorithms}
\label{sec:general-lower-bounds-appendix}

\LBGeneralCost*


\begin{proof}[Proof of \Cref{thm:lb-gen-recourse}]
  Take the adversary process $\calB$, and use it to generate an instance
  for the fully-dynamic algorithm $\calA$ as follows. (When there are
  $k$ items in the system, let them be labeled $e_1, e_2, \ldots, e_k$.)
  \squishlist
  \item[I:] Given system in a state with $k$ elements, use
    $\calB$ to generate the next element $e_{k+1}$, having movement cost
    $(2\beta)^{-(k+1)}$. 

  \item[II:] If $\calA$ places $e_{k+1}$ into some bin and does not
    move any other item, go back to Step~I to generate the next element.

  \item[III:] However, if $\calA$ moves some items, and $e_j$ is
    the item with the smallest index that is repacked, delete items
    $e_{j+1}, \ldots, e_{k+1}$.  This may in turn cause elements to be
    repacked, so delete all items with indices strictly higher than than
    the smallest index item that is repacked. Eventually we stop at a
    state with $k' \geq 1$ elements such that only element $e_{k'}$ has
    been repacked. Now go back to Step I. (Also, $e_{k'+1}, \ldots,
    e_{k+1}$ are deemed undefined.)
  \squishend

  Since the location of each item $e_i$ is based only on the knowledge
  of prior elements in the sequence $e_1, e_2, \ldots, e_{i-1}$ and
  their bins, the resulting algorithm is another online algorithm. So if
  the length of the sequence eventually goes to infinity, we are
  guaranteed to reach an instance for which the \acr of this algorithm
  will be at least $c$.
  Hence we want to show that for any $n$, the length of the sequence
  eventually reaches $n$ (or the adversary process stops, having
  showed a lower bound of $c$). Consider a potential function $\Phi$
  which is zero when the system has no elements. When a new element is
  added by $\calB$, we increase $\Phi$ by $\beta$ times the movement
  cost for this element. Moreover, when $\calA$ moves elements, we
  subtract the movement costs of these elements from $\Phi$. Since
  $\calA$ ensures an amortized recourse bound of $\beta$, the potential
  must remain non-negative. 

  For a contradiction, suppose the length of the sequence remains
  bounded by $n$. Hence, there is some length $k < n$ such that Step~III
  causes the sequence to become of length $k$ arbitrarily often. Note
  that the total increase in $\Phi$ between two such events is at most
  $\beta \sum_{i = k+1}^n (2\beta)^{-i} \leq
  \frac{(2\beta)^{-(k+1)}}{2\beta-1}$. Since $\beta \geq 2$, this 
  increase is strictly less than
  $(2\beta)^{-k}$, the total decrease in $\Phi$ due to the movement of
  element $k$ alone. Since the potential decreases between two such
  events, there can only be finitely many such events, so the length of
  the sequence, i.e., the number of items in the system increases over
  time, eventually giving us the claimed lower bound. 
\end{proof}
\subsection{(Nearly) Matching the Upper Bounds for Online Algorithms} 

We start by proving our lemma for packing similarly-sized items.

\SimilarlySized*

\begin{proof}
	We round down movement costs of each item to the next-lower power of
	two. We maintain all items sorted by movement cost, with the costliest
	items in the first bin. All bins (except perhaps the last bin) contain
	$k-1$ items; if all items have size $1/k$ then bins contain $k$ items.
	Insertion and deletion of an item of cost $c_i=2^{\ell}$ can be assumed
	to be performed at the last bin containing an item of this cost
	(possibly incurring an extra movement cost of $c_i$, by replacing item
	$i$ with another item of cost $c_i$). If addition or deletion leaves a
	bin with one item too many or too few, we move a single item to/from
	the last bin containing an item of the next lower cost, and so
	on. Since items are sorted and the costs are powers of $2$, the total
	movement cost is at most $c_i + c_i \cdot \left(1+1/2+1/4+\dots\right)
	= 3\cdot c_i$; the loss due to rounding means the (worst case) recourse is 
	$\leq 6$. The lemma follows.
\end{proof}

\subsubsection{A Simple $2$-approximate Solution}
\label{sec:general-simple-two-apx}

Suppose we round all item sizes to powers of 2 
and use Lemma
\ref{lem:c-items-per-bin} on items of size at least $1/n$, where $n$ is the
maximum number of items in the instance $\mathcal{I}_t$ over all times $t$,
while packing all items of size less than $1/n$ into a single bin.{\footnote{We
assume we know $n$. If $n$ is unknown, we can obtain the same amortized
recourse by a simple ``guess and double/halve'' approach.}  Then, we
ensure that all but $1+\log_2n$ bins are full. This approach can be applied to general instances by rounding up item sizes to powers of two, yields the
following fact.

\begin{fact}\label{fact:simple-2}
	There exists a fully-dynamic \BP algorithm with \acr~$2$ and additive
	term $1+\log_2 n$, using constant worst case recourse.
\end{fact}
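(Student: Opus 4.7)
The proof will formally analyze the algorithmic recipe sketched just before the statement. First, I would round every item's size up to the nearest power of two: every size becomes $2^{-j}$ for some integer $j \geq 0$, and each rounded size is at most twice the original, so the total rounded volume at every time $t$ is at most $2\cdot OPT(\mathcal{I}_t)$. Next, for each rounded size class $2^{-j}\in[1/n,1]$, I would invoke Lemma~\ref{lem:c-items-per-bin} to maintain a packing of that class, and I would maintain a single dedicated ``tiny-items bin'' for all items with rounded size strictly less than $1/n$. Since the instance has at most $n$ items at any time and each tiny item has size less than $1/n$, that one tiny bin has total volume strictly less than~$1$, so it genuinely suffices.

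Then I would count bins. The rounded size classes in $[1/n,1]$ are exactly $\{1,\,1/2,\,1/4,\ldots,1/2^{\lfloor\log_2 n\rfloor}\}$, at most $\lfloor\log_2 n\rfloor+1$ classes. Because all items in a given class have identical size $1/k=2^{-j}$, Lemma~\ref{lem:c-items-per-bin} guarantees that all but at most one bin in that class contains exactly $k$ items and is hence \emph{completely} full. The size-$1$ class (with $k=1$) contributes no under-filled bin at all, so across all classes plus the tiny bin the total number of bins that are not completely full is at most $\log_2 n+1$. Since completely full bins contribute exactly $1$ each to the total rounded volume, their count is at most $2\cdot OPT(\mathcal{I}_t)$, giving a total of at most $2\cdot OPT(\mathcal{I}_t)+\log_2 n+1$ bins, matching the stated guarantee.

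For the worst-case recourse, I would observe that every insertion or deletion touches a single size class: either the tiny-items bin (handled trivially with at most one movement) or a rounded-size class, for which Lemma~\ref{lem:c-items-per-bin} already provides constant worst-case recourse. The only subtlety---flagged by the footnote in the statement---is that we assume $n$ is known in advance so that the threshold $1/n$ is fixed; otherwise a guess-and-double/halve scheme on the threshold yields the same bound only in amortized form. The main step to get right is not a conceptual obstacle but the precise bookkeeping of the $+1$'s across the size classes and the tiny bin so that the additive term comes out to $1+\log_2 n$ exactly.
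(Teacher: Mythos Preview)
Your proposal is correct and follows exactly the approach the paper sketches: round sizes up to powers of two, apply Lemma~\ref{lem:c-items-per-bin} per size class in $[1/n,1]$, and keep the sub-$1/n$ items in one dedicated bin. Your bookkeeping is in fact more careful than the paper's one-line justification---in particular the observation that the size-$1$ class contributes no under-filled bin is what makes the additive term come out to $1+\log_2 n$ rather than $2+\log_2 n$.
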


However, this simple result is highly unsatisfactory for two reasons.
Firstly, as we shall show, its \acr is suboptimal.  
The second reason concerns its additive term, which can be
blown up to be arbitrarily large without effecting the optimal solution
in any significant way, by adding $N$ items of size $1/N$ for
arbitrarily large $N$. In what follows, we will aim to design algorithms
with better \acr and both additive term and
recourse independent of $n$. 

\subsubsection{The Super Harmonic family of online algorithms}
\label{sec:SH-family}

A \emph{Super-harmonic} (abbreviated as SH) algorithm
consists of a partition of the unit interval $[0,1]$ into $K+1$
intervals, $[0,\eps],(t_0=\eps,t_1](t_1,t_2],\dots,(t_{K-1},t_K=1]$. 
Small items (i.e., items of size at most $\eps$) are packed
using \FF into dedicated bins; because the items are small, all but one
of these are at least $1-\eps$ full. For larger items, each arriving
item is of type $i$ if its size is in the range $(t_{i-1},t_i]$. Items
are colored either blue or red by the algorithm, with each bin
containing items of at most two distinct item types $i$ and $j$. If a
bin contains only one item type, its items are all colored the same, and
if a bin contains two item types $i$ and $j$, then all items of type $i$
are colored blue and items of type $j$ are colored red (or vice
versa).
The SH algorithm has associated with it three number sequences $(\alpha_i)_{i=1}^K,
(\beta_i)_{i=1}^K, (\gamma_i)_{i=1}^K$, and an underlying bipartite \emph{compatibility graph} $\mathcal{G}
= (V,E)$ whose role will be made clear
shortly.  A bin with blue (resp., red) type $i$ items
contains at most $\beta_i$ (resp., $\gamma_i$) items of type $i$, and is
\emph{open} if it contains less than $\beta_i$ type $i$ (resp., less
than $\gamma_j$ type $j$ items). The compatibility graph determines 
which pair of (colored) item types can share a bin.  The compatibility graph
$\mathcal{G} = (V,E)$ is defined on the vertex set 
$V=\{b_i \mid i\in [K]\}\cup \{r_i \mid j\in [K]\}$, with an edge $(b_i,r_j)\in E$ indicating blue items of type $i$
and red items of type $j$ are \emph{compatible}; they are
allowed to be placed in a common bin. Apart from the
above properties, an SH algorithm must satisfy the following invariants.

\begin{enumerate}[(P1)]
	\item\label{prop:num-open-bins} The number of open bins is $O(1)$.
	\item\label{prop:frac-red} If $n_i$ is the number of type-$i$ items, the number of red
	type-$i$ items is $\lfloor \alpha_i\cdot n_i\rfloor$.
	\item\label{prop:no-unmatched-compatible} If $(b_i,r_j)\in E$ (blue type $i$ items and red type $j$ items
	are compatible), there is no pair of bins with one containing
	nothing but blue type $i$ items and one containing nothing but red type $j$ items.
\end{enumerate}

The above invariants allow one to bound the asymptotic competitive ratio
of an SH algorithm, depending on the choices of $(\alpha_i)_{i=1}^K,
(\beta_i)_{i=1}^K, (\gamma_i)_{i=1}^K$ and the compatibility graph
$\mathcal{G}$. In particular, Seiden~\cite{seiden2002online} showed the
following.
\begin{lem}[Seiden \cite{seiden2002online}]\label{lem:good-sh}
	There exists an SH algorithm with \acr $\binpackingub$.
\end{lem}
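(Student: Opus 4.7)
The plan is to instantiate Seiden's explicit construction \cite{seiden2002online} from the Super-Harmonic family and carry out the associated weighting-function analysis. The proof naturally splits into two parts: (a) specifying the partition $\{t_i\}$, the sequences $(\alpha_i),(\beta_i),(\gamma_i)$, and the compatibility graph $\mathcal{G}$, and (b) bounding the resulting \acr by $\binpackingub$.

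For part (a), I would choose a finite set of breakpoints $\eps = t_0 < t_1 < \cdots < t_K = 1$ containing all points of the form $1/k$ for small integers $k$, together with auxiliary breakpoints (of the form $1-1/k$ and finer refinements) whose exact positions are tuning parameters. The natural choices are $\beta_i = \lfloor 1/t_{i-1}\rfloor$ (the maximum number of type-$i$ items fitting alone in a bin) and $\gamma_i \leq \beta_i$ chosen so that a bin packed with $\gamma_i$ type-$i$ items still leaves enough space for compatible blue items of other types. The mixing fractions $\alpha_i \in [0,1]$ (the fraction of type-$i$ items colored red) and the edge set of $\mathcal{G}$ are free parameters to be pinned down in part (b). The invariants P\ref{prop:num-open-bins}--P\ref{prop:no-unmatched-compatible} are structural and can be maintained with $O(1)$ open bins by any implementation that, upon a new arrival, either extends a matching bin or starts a new one consistent with $\mathcal{G}$.

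For part (b), I would use the standard weighting-function technique. Define a weight $w(s)\geq 0$ on each item size $s$, piecewise constant on $(t_{i-1},t_i]$ and taking two possible values per interval depending on whether the item is colored blue or red. Then one verifies two inequalities: (i) every bin in the algorithm's packing carries weight at most a constant $c_1$, which is immediate from the bounds $\beta_i,\gamma_i$ on items per bin together with the structure of $\mathcal{G}$; and (ii) every multiset of items fitting in a single unit bin carries weight at least $1/c_2$. The asymptotic competitive ratio is then at most $c_1 \cdot c_2$. Using P\ref{prop:frac-red} to compute the expected weight of a type-$i$ item as $\alpha_i w_{\text{red},i} + (1-\alpha_i) w_{\text{blue},i}$, (ii) reduces to a \emph{knapsack LP}: maximize the weight-to-capacity ratio over all feasible fillings of a unit bin.

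The main obstacle — and indeed the bulk of Seiden's contribution — is the joint optimization of the breakpoints $\{t_i\}$, the mixing fractions $\{\alpha_i\}$, and the graph $\mathcal{G}$ so that the worst-case knapsack LP value equals $\binpackingub$. The optimization is non-convex and combinatorial: for fixed breakpoints and $\mathcal{G}$, the adversary's best configuration is found by enumerating ``patterns'' (multisets of item types that fit in a bin), but sweeping over breakpoints requires a computer-assisted search. Rather than redo this search, I would invoke the parameters reported in \cite{seiden2002online} and verify that they both satisfy the invariants and deliver the stated $\binpackingub$ bound via the weighting argument above.
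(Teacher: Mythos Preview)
The paper does not prove this lemma at all: it is stated with a direct attribution to Seiden~\cite{seiden2002online} and used as a black box. Immediately after the statement, the paper simply remarks on a property of the resulting parameters (that $\eps^{-1}$, $K$, and the $\beta_i,\gamma_i$ are all $O(1)$) and moves on. So there is no ``paper's own proof'' to compare your attempt against.

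Your outline is a reasonable high-level sketch of how Seiden's argument goes---the weighting-function framework, the role of the compatibility graph, and the computer-assisted parameter search are all accurate ingredients. But as you yourself acknowledge in the final paragraph, the actual content (the specific breakpoints, mixing fractions, and the verification that the worst-case pattern gives $\binpackingub$) is deferred back to~\cite{seiden2002online}. That is exactly what the paper does too: it cites the result rather than reproving it. So your proposal is not so much a proof as a description of what a proof would contain, together with a pointer to where it already exists---which is appropriate here, since the lemma is explicitly a citation rather than a claim of the present paper.
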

A particular property the SH algorithm implied by Lemma \ref{lem:good-sh} which we will make use of 
later is that its parameters or inverses are all at most a constant; in particular, 
$\eps^{-1}=O(1)$ and $K=O(1)$, and similarly $\beta_i,\gamma_i=O(1)$ for all $i\in [K]$.

In the following sections we proceed to describe how to maintain the
above invariants that suffice to bound the competitive ratio of an SH
algorithm. We start by addressing the problem of
maintaining the SH invariants for the large items, in Section \ref{sec:sh-large-items}. 
We then show how
to pack small items (i.e., items of size at most $\eps$) into bins
such that all but a constant of these bins are at least $1-\eps$
full, in Section \ref{sec:gen-small-items}.\footnote{Strictly speaking, we will only pack small items into bins which are $1-\epsilon$ full \emph{on average}. However, redistributing these small items or portions of these items within these bins will make these bins $1-\epsilon$ full without changing the number of bins used by the solution. The bounds on SH algorithms therefore carries through.}
Finally we
conclude with our upper bound in Section \ref{sec:sh-summary}.

\subsubsection{SH Algorithms: Dealing with Large Items}
\label{sec:sh-large-items}

First, we round all movement costs to powers of $2$, increasing our recourse cost by at most a factor of $2$. Now, our algorithm will have recourse cost which will be some function of $K,(\beta_i)_{i=1}^K, (\gamma_i)_{i=1}^K$; as for our usage, we have
$K=O(1)$, and $\beta_i,\gamma_i=O(1)$ for all $i\in [K]$, we will simplify notation and assume that these values are indeed all bounded from above by a constant. Consequently, when moving around groups of up to $\beta_i$ blue (resp. $\gamma_i$ red) type $i$ items whose highest movement cost is $c$, the overall movement cost will be $O(c)$. The stipulation that $K=O(1)$ will prove useful shortly. We now explain how we maintain the invariants of SH algorithms.

\paragraph{Satisfying Property \ref{prop:num-open-bins}.}
We keep all blue items (resp. red items) of type $i$ in bins containing up to $\beta_i$ items (resp.  $\gamma_i$ items).
We sort all bins containing type-$i$ items by the cost of the costliest type-$i$ item in the bin,
where only the last bin containing type-$i$ items contains less than $\beta_i$ blue (alternatively,  $\gamma_i$ red) type-$i$ items. 
Therefore, if we succeed in maintaining the above, the number of open bins is $O(1)$; i.e., Property \ref{prop:num-open-bins} is satisfied. We explain below how to maintain this property together with Properties \ref{prop:frac-red} and \ref{prop:no-unmatched-compatible}.

\paragraph{Satisfying Property \ref{prop:frac-red}.} We will only satisfy Property \ref{prop:frac-red} \emph{approximately}, 
such that the number of red type-$i$ items is in the range $[\lfloor\alpha_i \cdot n_i\rfloor, \lfloor\alpha_i \cdot n_i\rfloor + \delta_i]$, 
where $\delta_i=\max\{\alpha_i(\beta_i-\gamma_i)+1+\gamma_i,0\}$. 
Removing these at most $\delta_i$ red type-$i$ items results in a 
smaller bin packing instance, and a 
solution requiring the same number of bins, satisfying the invariants of SH algorithms.
Notice that this change to the solution can change the number of open bins by at most $\sum_i \delta_i \leq \sum_i \alpha_i\beta_i + K=O(1)$.
Therefore, satisfying Property \ref{prop:frac-red} approximately suffices to obtain our sought-after \acr. 
In fact, we satisfy a stronger property:
each prefix of bins containing type-$i$ items has a number of red type-$i$ items in the range 
$[\lfloor\alpha_i \cdot n'_i\rfloor, \lfloor\alpha_i \cdot n'_i\rfloor + \delta_i]$, where $n'_i$ is the number of type-$i$ items in the prefix.

Maintaining the above prefix invariant on deletion is simple enough: when
removing some type-$i$ item of cost $2^k$, we move the last type-$i$ item of
the next movement cost to this item's place in the packing, continuing until we
reach a type-$i$ item with no cheaper type-$i$ items. The movement cost here is
at most $2^k+2^{k-1}+2^{k-2}+\dots=O(2^k)$, so the (worst case) recourse is
constant. As the prefix invariant was satisfied before deletion, it is also
satisfied after deletion, as we effectively only remove an item from the last
bin. When inserting a type-$i$ item, we insert the item into the last
appropriate bin according to the item's movement cost. This might cause this
bin to overflow, in which case we take the cheapest item in this bin and move
it into the last appropriate bin according to this item's cost, continuing in
this fashion until we reach an open bin, or are forced to open a new bin. (The
recourse here is a constant, too). The choice of color for type-$i$
items in a newly-opened bin depends on the number of type-$i$ items before this
insertion, $n_i$, and the number of red type-$i$ items before this insertion,
$m_i$. If $m_i+\gamma_i \leq \lfloor \alpha_i(n_i + \gamma_i)\rfloor +
\delta_i$, the new bin's red items are colored red. By this condition, the
number of red items for the following $\gamma_i$ insertions into this bin will
satisfy our prefix property. If the condition is not satisfied, the bin is
colored blue. Now, as 
\begin{align*}
m_i+\gamma_i & > \lfloor \alpha_i(n_i + \gamma_i)\rfloor + \delta_i \\
& \geq \lfloor \alpha_i(n_i + \beta_i)\rfloor + \alpha_i(\gamma_i-\beta_i) - 1 + \delta_i \\
& \geq \lfloor \alpha_i(n_i + \beta_i)\rfloor - 1 + \gamma_i
\end{align*} 
we find that after this new bin contains $\beta_i$ type $i$ items, the number
of red items in the prefix is $m_i$ while the number of type-$i$ items is
$n'_i=n_i + \beta_i$, and so this prefix too satisfies the prefix condition.

We conclude that the above methods approximately maintain Property
\ref{prop:frac-red} (as well as Property \ref{prop:num-open-bins}) while only
incurring constant worst case recourse.

\paragraph{Satisfying Property \ref{prop:no-unmatched-compatible}.} Finally, in order to satisfy Property \ref{prop:no-unmatched-compatible}, we consider the groups of up to $\beta_i$ and $\gamma_j$ blue
and red items of type $i$ and $j$ packed in the same bin as nodes in a bipartite graph.
A blue type $i$ (resp. red type $j$) group is a copy of node $b_i$ (resp. $r_j$) in the compatibility graph $\mathcal{G}$, 
and copies of nodes $b_i$ and $r_j$ are connected if they are connected in $\mathcal{G}$. If a particular node
$b_i$ and $r_j$ are placed in the same bin, then we treat that edge as matched. 
Each node has a cost which is simply the maximum movement cost of an item in the group which the node represents.
All nodes have a preference order over their neighbors, preferring a costlier neighbor, while breaking ties consistently.
We will maintain a stable matching in this subgraph, where two nodes are matched if the items they represent. 
This stability clearly implies Property \ref{prop:no-unmatched-compatible}. We now proceed
to describe how to maintain this bipartite graph along with a stable matching in it.

The underlying operations we will have is addition and removal of a node of red or blue type $i$ items of cost $2^k$; 
i.e., with costliest item having movement cost $2^k$. 
As argued before, as each group contains $O(1)$ items, the
movement cost of moving items of such a group is $O(2^k)$.
Using this operation we can implement insertion and deletion of single items, 
by changing the movement cost of groups with items moved, implemented by removal of a group and 
re-insertion with a higher/lower cost if the cost changes (or simple insertion/removal for a new bin opened/bin closed). 
As argued above, the cost of items moved is during updates in order to satisfy Property \ref{prop:frac-red} is $O(2^k)$, 
where $2^k$ is the cost of the item added/removed; consequently, the costs of removals and insertions of groups is $O(2^k)$. We therefore need to show that the cost of insertion/removal of a group of cost $2^k$ is $O(2^k)$.

Insertions and deletions of blue and red nodes is symmetric, so we consider insertions and deletions of a blue node $b$ only. 
Upon insertion of some node $b$ of cost $2^k$, we insert $b$ into its place in the ordering, and scan its neighbors for the first neighbor $r$ which strictly prefers $b$ to its current match, $b'$. 
If no such $r$ exists, we are done. 
If such an $r$ exists, $b'$ is unmatched from $r$ (its items are removed from its bin) and $b$ is matched to $r$ (the items of $b$ are placed in the same bin as $r$'s bin). 
As $b'$ has strictly less than $b$, its cost is at most $2^{k-1}$. 
We now proceed similarly for $b'$ as though we inserted $b'$ into the graph. 
The overall movement cost is at most $2^k+2^{k-1}+2^{k-2}+\dots=O(2^k)$. 

Upon deletion of a blue node $b$ of cost $2^k$, if it had no previous match, we are done. If $b$ did have a match $r$, this match scans its neighbors,
starting at $b$, for its first neighbor $b'$ of cost at most $2^k$ which prefers $r$ to its current match. 
If the cost of $b'$ is $2^k$, we match $r$ to the last blue node of the same type as $b'$, denoted by $b''$. 
If $b''$ was previously matched, we proceed to match its match as if $b''$ were removed (i.e., as above). As every movement decreases the number of types of blue nodes of a given cost to consider, the overall movement cost is at most $K\cdot(2^k+2^{k-1}+2^{k-2}+\dots)=O(K\cdot 2^k)=O(2^k)$, where here we rely the number of types being $K=O(1)$. 

We conclude that Property \ref{prop:no-unmatched-compatible} can be maintained using constant worst case recourse.

\begin{lem}\label{lem:sh-properties}
	Properties \ref{prop:num-open-bins}, \ref{prop:frac-red} and \ref{prop:no-unmatched-compatible} can be maintained using constant worst-case recourse.
\end{lem}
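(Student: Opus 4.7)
The plan is to handle each of the three properties by a single cascade argument, relying on the preliminary step (which I would do first) of rounding every movement cost down to the nearest power of two; this costs only a factor of two in recourse, and thereafter it suffices to show that inserting or deleting an item of rounded cost $2^k$ triggers moves whose total cost is $O(2^k)$. Because each group of blue or red same-type items packed in one bin has size $O(1)$, moving an entire group costs $O(2^j)$ when the costliest item in the group has cost $2^j$, so I can work at the level of groups rather than individual items.

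For Property \ref{prop:num-open-bins} I would maintain, for each type $i$ and color, a list of bins sorted in decreasing order of the cost of the costliest type-$i$ item they contain, with every bin except the last in its list holding exactly $\beta_i$ (blue) or $\gamma_i$ (red) type-$i$ items. This immediately yields $O(K) = O(1)$ open bins. An insertion or deletion is repaired by a single cascade that pushes the displaced item to the last position in its own cost class; since each successive displacement strictly decreases the cost by at least a factor of two, the geometric series bounds the cost by $O(2^k)$. For Property \ref{prop:frac-red} I would maintain the stronger prefix invariant that in every prefix of the bin list for type $i$ the number of red items lies in the allowed window $[\lfloor \alpha_i n_i'\rfloor,\, \lfloor \alpha_i n_i'\rfloor + \delta_i]$. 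Deletions preserve this invariant because the cascade only shifts items forward in the list, which does not change any prefix count; ordinary insertions behave symmetrically; and when a new bin has to be opened, the algebraic check $m_i+\gamma_i \leq \lfloor \alpha_i(n_i+\gamma_i)\rfloor + \delta_i$ dictates whether to color it red or blue, and a direct calculation (using $\delta_i \geq \alpha_i(\beta_i-\gamma_i)+1+\gamma_i$) shows that either choice leaves the invariant feasible throughout the next $\beta_i$ or $\gamma_i$ insertions into that bin.

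For Property \ref{prop:no-unmatched-compatible} I would view each group as a vertex in a bipartite graph whose edges come from the compatibility graph $\mathcal{G}$, equip every vertex with a strict preference order based on the cost of its costliest item (breaking ties consistently), and maintain a stable matching in this graph where ``matched'' means ``packed in the same bin''. Stability is precisely the negation of the forbidden configuration in Property \ref{prop:no-unmatched-compatible}. Insertion runs a Gale--Shapley-style repair: the new vertex of cost $2^k$ proposes to its most-preferred neighbor that would prefer it, evicts that neighbor's current partner (whose cost must be $\leq 2^{k-1}$, since it lost to something of cost $2^k$), and recurses on the evicted partner. The geometric drop in cost at each step gives a total of $O(2^k)$.

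The main obstacle---and the step that needs the most care---is the \emph{deletion} case for stability, because when a matched vertex $r$ loses its partner, $r$ may re-match with a neighbor of the \emph{same} cost level as the one just deleted, and that neighbor may in turn have been matched, triggering another eviction at the same cost. The crucial observation is that at any fixed cost level $2^j$ there are only $K$ possible types of blue (respectively red) vertices, and my re-matching rule (match $r$ with the \emph{last} blue vertex of $b'$'s type so that further cascades start from that bin) ensures that at each cost level at most $K$ evictions occur before the cascade must drop to a strictly smaller cost level. Since $K = O(1)$, this yields a total cost of $O(K) \cdot (2^k + 2^{k-1} + \cdots) = O(2^k)$. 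Summing the three $O(2^k)$ bounds against the $2^k$ cost of the triggering update gives constant worst-case recourse and completes the proof.
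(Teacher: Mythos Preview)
Your proposal is correct and follows essentially the same approach as the paper: rounding costs to powers of two, maintaining sorted per-type bin lists for Property~\ref{prop:num-open-bins}, the prefix-window invariant with slack $\delta_i$ (and the same algebraic check for coloring a newly opened bin) for Property~\ref{prop:frac-red}, and a stable matching maintained via Gale--Shapley-style repair for Property~\ref{prop:no-unmatched-compatible}. In particular, your handling of the deletion cascade---matching $r$ to the \emph{last} vertex of $b'$'s type so that at each cost level at most $K$ evictions occur before the cost strictly drops, giving the $O(K)\cdot(2^k+2^{k-1}+\cdots)=O(2^k)$ bound---is exactly the key observation the paper uses.
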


\subsubsection{SH Algorithms: Dealing with Small Items}
\label{sec:gen-small-items}

Here we address the problem of packing small items into bins so that all
but a constant number of bins are kept at least $1-\eps$ full. Lemma \ref{lem:c-items-per-bin} allows us to do just this for items of size in the range $[\eps',\eps]$, for
$\eps'=\Omega(\eps)$. Specifically, considering all integer values $c$
in the range $[\lceil \frac{1}{\eps} \rceil, \lceil \frac{1}{\eps'}
\rceil]$, then, as $\eps^{-1}=O(1)$ and consequently $\lceil
\frac{1}{\eps'} \rceil - \lceil \frac{1}{\eps} \rceil = O(1)$, we obtain
the following.

\begin{cor}\label{cor:smallish-items}
	All items in the range $[\eps',\eps]$ for any $\eps'=\Omega(\eps)$ can
	be packed into bins which are all (barring perhaps $O(\epsilon^{-1})=O(1)$ bins) at least
	$1-\eps$ full.
\end{cor}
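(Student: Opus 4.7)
The plan is to partition the items in the range $[\eps',\eps]$ into size classes and apply Lemma~\ref{lem:c-items-per-bin} separately to each class, then argue that the total number of "bad" bins across classes is $O(1)$.

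More specifically, for each integer $k$ in the range $[\lceil 1/\eps\rceil,\lceil 1/\eps'\rceil]$, I would collect all items whose size lies in the half-open interval $[1/k,1/(k-1))$ and pack them as a separate sub-instance using the fully-dynamic algorithm guaranteed by Lemma~\ref{lem:c-items-per-bin}. Since items in that sub-instance have size $s_i\in [1/k,1/(k-1))$, the lemma yields a packing in which all bins except possibly one contain $k-1$ items, and are therefore at least $1-1/k \geq 1-\eps$ full (because $k\geq \lceil 1/\eps\rceil$). Thus, within each class, only a single bin may fail to be $(1-\eps)$-full.

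The only thing left to verify is that the total number of such "underfull" bins across all classes is $O(\eps^{-1})=O(1)$. This follows from the hypothesis $\eps'=\Omega(\eps)$, which yields
\[
\lceil 1/\eps'\rceil - \lceil 1/\eps\rceil + 1 = O(\eps^{-1}) = O(1),
\]
so the number of size classes, and hence the total number of underfull bins, is $O(1)$. The worst-case recourse bound is inherited directly from Lemma~\ref{lem:c-items-per-bin} applied class-by-class, since insertions and deletions each affect only a single class.

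There is no real obstacle here: the only substantive step is checking that the number of size classes is $O(1)$, which is immediate from $\eps^{-1}=O(1)$ and $\eps'=\Omega(\eps)$. The rest is a direct invocation of Lemma~\ref{lem:c-items-per-bin}.
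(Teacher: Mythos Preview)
Your proposal is correct and matches the paper's approach essentially verbatim: the paper also partitions by integer $k$ in the range $[\lceil 1/\eps\rceil,\lceil 1/\eps'\rceil]$, applies Lemma~\ref{lem:c-items-per-bin} to each class, and bounds the number of classes by $O(\eps^{-1})=O(1)$ using $\eps'=\Omega(\eps)$ and $\eps^{-1}=O(1)$. Your added remark that each class contributes at most one underfull bin (with fullness at least $1-1/k\ge 1-\eps$) and that recourse is inherited class-by-class makes explicit what the paper leaves implicit.
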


It now remains to address the problem of efficiently maintaining a
packing of items of size at most $\eps'$ into bins which are at least
$1-\eps$ full (again, up to some $O(1)$ possible additive term). As
$\eps'=\Omega(\eps)$, we will attempt to pack these items into bins
which are at least $1-O(\eps')$ full on average.
For notational
simplicity from here on, we will abuse notation and denote $\eps'$ by
$\eps$, contending ourselves with a packing which is
$1+O(\eps)$-competitive, and is therefore keeps bins $1-O(\eps)$ full on
average (as $OPT$ is close to the volume bound for instance made of only
small items).

Let us first give the high-level idea of the algorithm before presenting
the formal details. 
Define the {\em density} of an item as $c_j/s_j$,
the ratio of its movement cost to its size. We arrange the bins in some
fixed order, and the items in each bin will also be arranged in the
order of decreasing density. This means the total order on the items
(consider items in the order dictated by the ordering of bins, and then
by the ordering within each bin) is in decreasing order of density as
well. Besides this, we want all bins, except perhaps the last bin, to
be approximately full (say, at least $1-O(\eps)$ full). The latter
property will trivially guarantee $(1+O(\eps))$-competitive ratio. When
we insert an item $j$, we place it in the correct bin according to its
density. If this bin overflows (i.e., items in it have total size more
than 1), then we remove some items from this bin (the ones with least
density) and transfer them to the next bin -- these items will have only
smaller density than $j$, and so, their movement cost will be comparable
to that of $j$.  If the next bin can accommodate these items, then we
can stop the process, otherwise this could lead to a long cascade. To
prevent such long cascades, we arrange the bins in {\em buckets} -- each
bucket consists of about $O(\eps^{-1})$ consecutive bins, and all these
buckets are approximately full except for the last bin in the
bucket. Again, it is easy to see that this property will ensure
$(1+O(\eps))$-competitive ratio. Note that this extends the idea of Berndt et al. Once we have these buckets, the
above-mentioned cascade stops when we reach the last bin of a
bucket. Consequently we have cascades of length at most $O(\eps^{-1})$. If
the last bin also overflows, we will add another bin at the end of this
bucket, and if the bucket now gets too many bins, we will split it into
two smaller buckets. One proceeds similarly for the case of deletes -- if
an item is deleted, we {\em borrow} some items from the next bin in the
bucket, and again this could cascade only until the last bin in the
bucket. (If the bucket ever has too few bins, we merge it with the next
bucket).

However, this cascade is not the only issue. Because items are atomic
and have varying sizes, it is possible that insertion of a tiny item
(say of size $O(\eps^2)$) could lead us to move items of size
$O(\eps)$. In this case, even though the density of the latter item is
smaller than the inserted item, its total movement cost could be much
higher. To prevent this, we ensure that whenever a bin overflows, we
move out enough items from it so that it has $\Omega(\eps)$ empty
space. Now, the above situation will not happen unless we see tiny items
amounting to a total size of $\Omega(\eps)$. In such a case, we can
charge the movement cost of the larger item to the movement cost of all
such tiny items.

The situation with item deletes is similar. When a tiny item is deleted,
it is possible that the corresponding bin underflows, and the item
borrowed from the next bin is large (i.e., has size about
$\eps$). Again, we cannot bound the movement cost of this large item in
terms of that of the item being deleted. To take care of such issues, we
do not immediately remove such tiny items from the bin. We call such
items {\em ghost} items -- they have been deleted, but we have not
removed them from the bins containing them. When a bin accumulates ghost
items of total size about $\Omega(\eps)$, we can afford to remove all
these from the bin, and the total movement cost of such items can pay
for borrowing items (whose total size would be $O(\eps)$) from the next
bin. 

The analysis of the movement
cost is done via by a potential function argument, to show the following
result (whose proof appears in Section~\ref{sec:small-SH-analysis}):
\SmallGeneralRecourse*

We first describe the algorithm formally.  Let $B_i$ denote the items
stored in a bin $i$. As mentioned above, our algorithm maintains a
solution in which items are stored in decreasing order of density. I.e.,
for all $i<i'$, for every pair of jobs $j\in B_i$ and $j'\in B_{i'}$ we
will have $c_j/v_j\geq c_{j'}/v_{j'}$. Recall that $B_i$ could contain
ghost jobs. Let $A_i$ denote the jobs in $B_i$ which have not been
deleted yet (i.e., are not ghost jobs), and $G_i$ denote the ghost jobs.
We shall use $s(B_i)$ to denote the total size of items in $B_i$ (define
$s(A_i)$ similarly).  We maintain the following invariants, satisfied by
all bins $B_i$ that are not the last bin in their bucket:
\begin{align}
\p_0:\ & 1-3\eps\leq s(B_i)\leq 1. \label{eq:p0p1}\\
\p_1:\ & s(A_i)\geq 1-4\eps. \notag
\end{align}
Finally, a bin $B_i$ which is the last bin in its bucket has no ghost
jobs. That is, $s(G_i) = 0$.  Each bucket has at most $3/\eps$
bins. Furthermore, each bucket, except perhaps for the last bucket, has
at least $\eps^{-1}$ bins.

Our algorithm is given below. We use two functions {\sc GrowBucket($U$)}
and {\sc {SplitBucket($U$)}} in these procedures, where $U$ is a
bucket. The first function is called when the bucket $U$ {\em
	underflows}, i.e., when $U$ has less than $\eps^{-1}$ bins. If $U$ is the
last bucket, then we need not do anything. Otherwise, let $U'$ be the
bucket following $U$. We merge $U$ and $U'$ into one bucket (note that
the last bin of $U$ need not satisfy the invariant conditions above, and
so, we will need to do additional processing to ensure that the
conditions are satisfied for this bin). The function {\sc
	{SplitBucket($U$)}} is called when $U$ contains more than $3/\eps$
bins. In this case, we split it into two buckets, each of size more than
$\eps^{-1}$.

\begin{center}
	\begin{minipage}[t]{.5\textwidth}

		\null
		\begin{algorithm}[H]
			\caption{\textsc{insert($j$)}}
			\begin{algorithmic}[1]
				\medskip
				\STATE{Add job $j$ into appropriate bin $i$}
				\IF{$s(B_i)>1$}
				\STATE {\sc EraseGhost($i, s_j$)} 
				\label{line:1}
				\ENDIF
				\IF{$s(B_i)>1$ }
				\STATE \textsc{overflow}($i$, $s(B_i)-1+2\eps$)
				\ENDIF		
			\end{algorithmic}
		\end{algorithm}
	\end{minipage}%
	\begin{minipage}[t]{.5\textwidth}
		\null
		\begin{algorithm}[H]
			\caption{\textsc{delete($j$)}}
			\begin{algorithmic}[1]
				\medskip
				\STATE{Let $i$ be the bin containing $j$}
				\IF{$i$ is the last bin in its bucket}
				\STATE{Erase $j$ from $i$}
				\ELSE
				\STATE{Mark $j$ as a ghost job.}
				\IF{$s(A_i)<1-4\eps$ }
				\STATE Erase all ghost jobs  from bin $i$
				\STATE{\textsc{borrow}($i$, $1-3\eps-s(A_i)$)}
				\ENDIF
				\ENDIF		
			\end{algorithmic}
		\end{algorithm}
	\end{minipage}
\end{center}
	
\begin{center}	
	\begin{minipage}[t]{.5\textwidth}
		\null
		\begin{algorithm}[H]
			\caption{\textsc{overflow($i$, $v$)}}
			\begin{algorithmic}[1]
				\medskip
				\STATE{Let $X$ be the minimum density jobs in}
				\STATE{ \ \ \  $B_i$ s.t. $s(X)\geq v$}
				\IF{$i$ is the last bin its bucket or \\ \ \ \ \ \ \ $v \geq 1-3 \eps$}
				\STATE{Add a new bin $i'$ after $i$ in this bucket}
				\STATE{Move $X$ from $i$ to $i'$.}
				\STATE{ Let $U$ be the bucket containing $i$.}
				\IF{ $U$ has more than $3/\eps$ bins}
				\STATE{{\sc {SplitBucket($U$)}}}
				\ENDIF
				\ELSE
				\STATE{Move $X$ from bin $i$ to bin $i+1$ }
				\IF{$s(B_{i+1})>1-\eps$}
				\STATE {\sc EraseGhost($i+1, s(B_{i+1})-1 + 2\eps$)}
				\ENDIF
				\IF{$s(B_{i+1})>1-\eps$}
				\STATE \textsc{overflow}$(i+1, s(B_{i+1})-1+2\eps$) \label{line:over}
				\ENDIF	
				\ENDIF	
			\end{algorithmic}
		\end{algorithm}
	\end{minipage}%
	\begin{minipage}[t]{.5\textwidth}
		\null
		\begin{algorithm}[H]
			\caption{\textsc{borrow($i$, $v$)}}
			\begin{algorithmic}[1]
				\medskip
				\STATE{Let $X$ be the minimum density jobs in $B_{i+1}$ s.t. $s(X\cap A_{i+1})\geq \min(v, s(A_{i+1})$}
				\STATE{Remove $X$ from $B_{i+1}$ (erase ghosts in $X$)}
				\STATE{Move $A\triangleq X\cap A_{i+1}$ to $B_{i}$}
				\IF{bin  $i+1$ is empty}
				\STATE{Remove this bin from its bucket $U$}
				\IF{$U$  has less than $\eps^{-1}$ bins and \\ \ \ \ \ \ \ \ \ \ \ \ \ it is not the last bucket}
				\STATE {\sc GrowBucket($U$)}
				\IF{$U$ has more than $3/\eps$ bins}
				\STATE {\sc SplitBucket($U$)}
				\ENDIF
				\ENDIF
				\IF { $s(A_i) < 1 - 3 \eps$}
				\STATE {\textsc{borrow($i$, $1-s(A_i)-3\eps$)}}
				\ENDIF
				\ELSE
				\IF{$s(A_{i+1})<1-3\eps$ }
				\STATE Erase all ghost jobs from bin ${i+1}$
				\STATE \textsc{borrow}(${i+1}$, $1-3\eps-s(A_{i+1})$)
				\ENDIF	
				\ENDIF	
			\end{algorithmic}
		\end{algorithm}
	\end{minipage}
	
\end{center}

We first describe the function {\textsc{insert($j$)}} for an item
$j$. We insert job~$j$ into the appropriate bin~$i$ (recall that the
items are ordered by their densities). If this bin overflows, then we
call the procedure {\sc EraseGhost($i, s_j$)}. The procedure {\sc
	EraseGHost($i, s$)}, where $i$ is a bin and $s$ is a positive
quantity, starts erasing ghost jobs from $B_i$ till one of the following
events happen: (i) $B_i$ has no ghost jobs, or (ii) total size of ghost
jobs removed exceeds $s$. Since all jobs are of size at most $\eps$,
this implies that the total size of ghost jobs removed is at most
$\min(s(G_i), s+\eps)$. Now its possible that even after removing these
jobs, the bin overflows (this will happen only if $s(G_i)$ was at most
$s_j$). In this case, we offload some of the items (of lowest density)
to the next bin in the bucket. Recall that when we do this, we would
like to create $O(\eps)$ empty space in bin $i$. So we transfer jobs of
least density in $B_i$ of total size at least $s(B_i) -1 + 2 \eps$ to
the next bin (since all jobs are small, the empty space in bin $i$ will
be in the range $[2\eps, 3 \eps]$) . This is done by calling the
procedure \textsc{overflow($i$, $s(B_i)-1+2 \eps$)}. The procedure
\textsc{overflow($i$, $v$)}, where $i$ is a bin and $v$ is a positive
quantity, first builds a set $X$ of items as follows -- consider items
in $B_i$ in increasing order of density, and keep adding them to $X$
till the total size of $X$, denoted by $s(X)$, exceeds $v$ (so, the
total size of $X$ is at most $v + \eps$). We now transfer $X$ to the
next bin in the bucket (note that by construction, $X$ does not have any
ghost jobs).  The same process repeats at this bin (although we will say
that overflow occurs at this bin if $s(B_i)$ exceeds $1-\eps$). This
cascade can end in several ways -- it is not difficult to show that
between two consecutive calls to \textsc{overflow}, the parameter $v$
grows by at most $\eps$. If $v$ becomes larger than $1-3\eps$, we just
create a new bin and assign all of $X$ to this new bin.  If we reach the
last bin, we again create a new bin and add $X$ to it. In both these
cases, the size of the bucket increases by 1, and so we may need to
split this bucket. Finally, it is possible that even after transfer of
$X$, $s(A_i)$ does not exceed $1-\eps$ -- we can stop the cascade at
this point.

Next we consider the case of deletion of an item $j$. Let $i$ be the bin
containing $j$. If $i$ is the last bin in its bucket, then we simply
remove $j$ (recall that the last bin in a bucket cannot contain ghost
jobs). Otherwise, we mark $j$ as a ghost job. This does not change
$s(B_i)$, but could decrease $s(A_i)$. If it violates property~$\p_1$,
we borrow enough items from the next bin such that $i$ has free space of
about $2\eps$ only. The function {\textsc{borrow($i$, $v$)}}, where $i$
is a bin and $v$ is a positive quantity, borrows densest (non-ghost)
items from the next bin $i+1$ of total size at least $v$. So it orders
the (non-ghost) items in $i+1$ in decreasing density, and picks them
till the total size accumulated is at least $v$.  This process may
cascade (and will stop before we reach the last bin). However there is
one subtlety -- between two consecutive calls to {\textsc{borrow}}, the
value of the parameter $v$ may grow (by up to $\eps$), and so, it is
possible that bin $i+1$ becomes empty, and we are not able to transfer
enough items from $i+1$ to $i$. In this case, we first remove $i+1$, and
continue the process (if the size of the bucket becomes too small, we
handle this case by merging it with the next bin and splitting the
resulting bucket if needed). Since we did not move enough items to $i$,
we may need to call {\textsc{borrow}($i, v'$)} again with suitable value
of $v'$.  There is a worry that the function {\textsc{borrow($i$, $v$)}}
may call {\textsc{borrow($i$, $v'$)}} with the same bin $i$, and so,
whether this will terminate. But note that, whenever this case happens,
we delete one bin, and so, this process will eventually terminate.

\subsubsection{Analysis}
\label{sec:small-SH-analysis}

We begin by showing properties of the
\textsc{overflow} and the \textsc{borrow} functions.
\begin{lem}
	\label{lem:overflow}
	Whenever \textsc{overflow}($i$, $v$) is called, bin $i$ has no ghost
	jobs. Furthermore, $s(B_i) = v + 1 - 2\eps$. Finally, when
	\textsc{overflow}($i$, $v$) ends, $1 - 3 \eps \leq s(B_i) \leq 1 - 2
	\eps$, and $s(G_i) = 0$. 
	
	Similarly, whenever {\textsc{borrow}}($i$, $v$) is called, bin $i$ has
	no ghost jobs. Furthermore, $s(B_i) = v + 1 - 3\eps$. Finally, when
	\textsc{borrow}($i$, $v$) ends, either $i$ is the last bin in its
	bucket or $1 - 3 \eps \leq s(B_i) \leq 1 - 2 \eps$, and $s(G_i) = 0$.
\end{lem}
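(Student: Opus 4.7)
The plan is a case analysis on where each function is invoked, combined with an elementary examination of the function bodies. The key observation about \textsc{EraseGhost}($i, s$) is that it terminates either when (i) bin $i$ has no more ghost jobs, or (ii) it has removed strictly more than $s$ volume of ghosts. In each place where \textsc{overflow}($i, v$) is called, an appropriately parameterized call to \textsc{EraseGhost} on the same bin immediately precedes it, and the call is guarded by a condition stating that $s(B_i)$ still exceeds the relevant threshold; hence case (ii) cannot have occurred, forcing $s(G_i) = 0$ on entry. The identity $s(B_i) = v + 1 - 2\eps$ on entry is then immediate from how the parameter $v$ is chosen (as $s(B_i) - 1 + 2\eps$) in both the \textsc{insert} call site and the recursive call site within \textsc{overflow}. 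For \textsc{borrow}, each of its three call sites (from \textsc{delete}, from the recursive call within \textsc{borrow} at bin $i+1$, and from the ``self'' recursive call after bin $i+1$ is removed) explicitly erases all ghosts of bin $i$ immediately before the call, and the appropriate size identity follows by inspection of how the parameter $v$ is set.

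For the exit condition of \textsc{overflow}($i, v$), I will use that the constructed set $X$ of minimum-density items satisfies $v \le s(X) \le v + \eps$, since all items have size at most $\eps$. Removing $X$ from bin $i$ (to a new bin in the first branch, to bin $i+1$ in the second branch) leaves $s(B_i) = (v + 1 - 2\eps) - s(X) \in [1 - 3\eps, 1 - 2\eps]$, and keeps $s(G_i) = 0$ because $X$ contains no ghosts. No subsequent action modifies bin $i$: recursion only touches later bins, and any bucket growth or split is a relabeling that preserves bin contents.

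The analysis for the exit condition of \textsc{borrow}($i, v$) is symmetric, with one added case: if bin $i+1$ runs out of non-ghost items before enough volume can be transferred, bin $i+1$ is removed outright. If bin $i$ thereby becomes the last bin of its bucket, the lemma imposes no size requirement on $s(B_i)$ and we are done; otherwise the algorithm makes a self-recursive call with the updated deficit. Termination is guaranteed because each such self-recursive call strictly reduces the number of bins in the bucket, and a short bookkeeping argument confirms that $s(G_i)$ remains zero and that $s(B_i)$ ends in $[1 - 3\eps, 1 - 2\eps]$ whenever the size requirement applies.

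The step I expect to take the most care is verifying that the size identity stated in the lemma for \textsc{borrow} is preserved across the self-recursive call when the next bin is emptied; this requires carefully tracking how much of the intended borrow volume was actually transferred versus how much still remains to be borrowed. Everything else reduces to direct inspection of the pseudocode.
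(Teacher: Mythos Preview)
Your proposal is correct and follows essentially the same approach as the paper: a case analysis over the call sites of \textsc{overflow} and \textsc{borrow}, using the two-way termination behavior of \textsc{EraseGhost} to argue that ghosts are fully cleared before any \textsc{overflow} call, and the bound $v \le s(X) \le v + \eps$ (from items being $\eps$-small) to establish the exit range $[1-3\eps, 1-2\eps]$. In fact, the paper's proof is terser than yours---for \textsc{borrow} it simply asserts that the argument ``follows similarly''---whereas you explicitly handle the self-recursive call when bin $i+1$ empties, which is the one genuinely delicate point.
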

\begin{proof}
	When we insert an item $j$ in a bin $i$, we erase ghost items from $i$
	till either (i) we erase all ghost items, or (ii) we erase ghost items
	of total size at least $s_j$. If the second case happens, then the
	fact that $s(B_i) \leq 1$ before insertion of item $j$ implies that we
	will not call \textsc{overflow} in Line~\ref{line:1} of
	{\textsc{insert($j$)}}. Therefore, if we do call \textsc{overflow}, it
	must be the case that we ended up deleting all ghost jobs in
	$i$. Further we call \textsc{overflow} with $v = s(B_i) - 1+
	2\eps$. Similarly, before we call \textsc{overflow}($i+1, v$) in
	Line~\ref{line:over}, we try to ensure ghost jobs of the same volume
	$v$ from bin $(i+1)$. If we indeed manage to remove ghost jobs of
	total size at least $v$, we will not make a recursive call to
	{\textsc{overflow}}. This proves the first part of the lemma.
	
	When {\textsc{overflow}}($i,v$), terminates, we make sure that we have
	transferred the densest $v$ volume out of $i$ to the next
	bin. Since job sizes are at most $\eps$, we will transfer at most
	items of total size in the range $[v, v+\eps]$ out of $i$. Since
	$s(B_i) = v + 1 - 2 \eps$ before calling this function, it follows
	that $s(B_i)$ after end of this function lies in the range $[1-3\eps,
	1-2\eps]$. The claim for the {\textsc{borrow}} borrow function follows
	similarly.
\end{proof}
The following corollary follows immediately from the above lemma. 
\begin{cor}\label{cor:full-bins}
	Properties \p$_0$ and \p$_1$ from (\ref{eq:p0p1}) are satisfied
	throughout by all bins which are not the last bins in their
	bucket. All bins $B_i$ which are last in their bucket satisfy
	$s(G_i)=0$.
\end{cor}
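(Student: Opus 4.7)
The plan is to proceed by induction on the number of insert and delete operations. The base case is trivial since initially there are no bins. For the inductive step, I would assume that $P_0$, $P_1$, and the ghost-free condition hold on the appropriate bins before an operation, and verify that they hold afterwards, breaking into two main cases based on the type of operation.

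For an insertion of item $j$ of size $s_j \leq \eps$ into bin $i$: if $s(B_i) \leq 1$ after the add (no cascade), then $P_0$'s upper bound holds, and since both $s(B_i)$ and $s(A_i)$ only grew, the lower bounds are preserved. If \textsc{EraseGhost} is invoked but suffices (so no \textsc{overflow} call), I would compute directly: either all ghosts are removed (yielding $s(B_i) = s(A_i) \geq 1 - 4\eps + s_j$, and moreover $s(B_i) > 1 - s_j \geq 1 - \eps$ because we only reach this branch when $s(B_i) > 1$ pre-erasure and $s(G_i) < s_j$), or enough ghosts are removed (yielding $s(B_i) \geq 1 - s_j - \eps \geq 1 - 2\eps$). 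In either case $P_0$ and $P_1$ hold. If \textsc{overflow}$(i, v)$ is called, then by Lemma~\ref{lem:overflow} we get $s(B_i) \in [1-3\eps, 1-2\eps]$ and $s(G_i) = 0$, so $s(A_i) = s(B_i) \geq 1-3\eps$, immediately giving $P_0$ and $P_1$. For cascading overflows, I would apply Lemma~\ref{lem:overflow} inductively to each recursive call; when a new bin is appended at the bucket's end it becomes the new last bin and receives only the non-ghost set $X$, so $s(G) = 0$ for it, while the previously-last bin (now non-last) was the argument of the terminating \textsc{overflow} call and thus satisfies $P_0, P_1$ and $s(G) = 0$ by the lemma.

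For a deletion of item $j$ from bin $i$: if $i$ is the last bin of its bucket, removing $j$ directly preserves the ghost-free condition on the last bin and does not touch any non-last bin. Otherwise we mark $j$ as a ghost, leaving $s(B_i)$ unchanged (so $P_0$ still holds) and only decreasing $s(A_i)$ by $s_j \leq \eps$; as long as $s(A_i) \geq 1-4\eps$, $P_1$ is preserved. Otherwise \textsc{borrow}$(i, \cdot)$ is invoked after the ghosts are erased from $i$, and Lemma~\ref{lem:overflow} again gives $s(B_i) \in [1-3\eps, 1-2\eps]$ and $s(G_i) = 0$ at the end of the call, establishing both invariants. The cascade terminates either at the last bin of the bucket (covered by the lemma) or by emptying bin $i+1$, in which case the bin is removed and the process repeats with the same $i$ but a strictly smaller bucket, guaranteeing termination.

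The main obstacle I anticipate is handling the bookkeeping operations \textsc{GrowBucket} and \textsc{SplitBucket}, which change which bin is ``last'' in a bucket without moving any items. When \textsc{SplitBucket} designates a previously non-last bin as the new last bin of the first resulting bucket, the non-last invariants $P_0, P_1$ it already satisfied are fine, but its possibly nonzero $s(G)$ seemingly threatens the ghost-free condition on last bins; symmetrically for \textsc{GrowBucket}, the old last bin of the first bucket becomes non-last and must now satisfy $P_0, P_1$ which were only guaranteed on the new aggregate last bin. I would handle these by arguing that each \textsc{SplitBucket} and \textsc{GrowBucket} call in the algorithm is paired with an immediately preceding \textsc{overflow}/\textsc{borrow} cascade that cleans up the relevant bin, and by inserting (if the algorithm does not do so already) a final pass that erases ghosts from, or tops up via a short \textsc{borrow}, the newly-designated last bin — invoking Lemma~\ref{lem:overflow} one more time to confirm that the invariants are restored without additional amortized cost.
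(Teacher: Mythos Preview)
The paper gives no proof of this corollary at all; it simply asserts that it ``follows immediately from the above lemma'' (Lemma~\ref{lem:overflow}). Your inductive case analysis is therefore considerably more careful than what the paper offers, and the per-operation checks you sketch for \textsc{insert}, \textsc{delete}, and the \textsc{overflow}/\textsc{borrow} cascades are sound and line up with what Lemma~\ref{lem:overflow} provides.

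You have, however, correctly put your finger on a genuine loose end that the paper glosses over: \textsc{SplitBucket} can promote a mid-bucket bin (which may carry ghosts) to ``last bin'' status, and \textsc{GrowBucket} can demote a former last bin (which carried no $\p_0$/$\p_1$ guarantee) to non-last status. Your proposed resolution is where the proposal weakens. Suggesting to \emph{insert} a cleanup pass into the algorithm is not a proof of the corollary for the algorithm as written; it is a proposal to change the algorithm. If you go that route you are no longer proving Corollary~\ref{cor:full-bins} but rather patching the construction. The cleaner options are: (i) trace exactly when \textsc{SplitBucket}/\textsc{GrowBucket} are invoked and argue that the bin whose status changes has in fact just been processed by the enclosing \textsc{overflow}/\textsc{borrow} cascade (so Lemma~\ref{lem:overflow} already covers it), or (ii) observe that the downstream analysis in Lemma~\ref{lem:small-ub-gen} uses only $\p_0$ and $\p_1$ on non-last bins for the competitive ratio, so the ghost-free condition on last bins can be relaxed to hold for all but $O(1)$ last bins without affecting the conclusions. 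Either of these would close the gap; the ``add a cleanup pass'' suggestion does not.
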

We are now ready to prove the claimed bounds obtained by our algorithm for small items.

\begin{proof}[Proof of Lemma \ref{lem:small-ub-gen}]
	First we bound the competitive ratio. Barring the last bucket, which
	has $O(\eps^{-1})$ bins, all other buckets have at least $\eps^{-1}$
	bins. All bins (except perhaps for the last bin) in each of these
	buckets have at most $4\eps$ space that is empty or is filled by
	ghost jobs. Therefore, the competitive ratio is $(1+O(\eps))$ with an
	additive $O(\eps^{-1})$ bins.  
	
	It remains to bound the recourse cost of the algorithm.
	For a solution $\mathcal{S}$, define the potential:
	\begin{gather*}
	\Phi(\mathcal{S})=\frac{4}{\eps^2}\sum_{\mbox{\tiny{bins \ }}i} \Phi(i),
	~~~~\text{where} \\
	\Phi(i) = c(G_i) + c(\mbox{fractional sparsest items above } 1-\eps
	\mbox{ in } B_i). 
	\end{gather*}
	The second term in the definition of $\Phi(i)$ is evaluated as
	follows: arrange the items in $B_i$ in decreasing order of density,
	and consider the items occupying the last $\eps$ space in bin $i$ (the
	total size of these items could be less than $\eps$ if the bin is not
	completely full). Observe that at most one item may be counted
	fractionally here. The second term is simply the sum of the movement
	costs of all such item, where a fractional item contributes the
	appropriate fraction of its movement cost.
	
	Now we bound the amortized movement cost with respect to 
	potential $\Phi$. 
	First consider the case when we insert item $j$ in bin $i$.  This
	could raise $\Phi_i$ by $c_j$. If bin
	$i$ does not overflow, we do not pay any movement cost. Further, deletion of ghost jobs
	from bin $i$ can only decrease the potential. Therefore, the amortized
	movement cost is bounded by $c_j$. On the other hand, suppose
	bin $i$ overflows and this results in calling the function
	\textsc{overflow}($i,v$) with a suitable $v$.  Before this function
	call, let $I$ denote the set of items which are among the sparsest
	items above $(1-\eps)$ volume in bin $i$ (i.e., these contribute
	towards $\Phi_i$).  Let $d$ be the density of the least density item in
	$I$. Since $s(B_i) \geq 1$, it follows that $\Phi_i \geq d \eps$.
	When this procedure ends, Lemma~\ref{lem:overflow} shows that $s(B_i)
	\leq 1 - \eps$, and so, $\Phi_i$ would be 0.  Thus, $\Phi_i$
	decreases by at least $d \eps - c_j$. Lemma~\ref{lem:overflow} also
	shows that if we had recursively called \textsc{overflow}($i',v'$) for
	any other bin $i'$, then $s(B_{i'})$ would be at most $1-\eps$, and
	so, $\Phi_{i'}$ would be 0 when this process ends. It follows that the
	overall potential function $\Phi$ decreases by at least $4(d\eps -
	c_j)/\eps^2$. Let us now estimate the total movement cost. We transfer
	items of total size at most 1 from one bin to another. This
	process will clearly end when we reach the end of the bucket, and so
	the total size of items moved during this process is at most $3/\eps$,
	i.e., the number of bins in this bucket. The density of these items
	is at most $d$, and so the total movement cost is at most $3
	d/\eps$. Thus, the amortized movement cost is at most $c_j/\eps^2$.
	
	Now we consider deletion of an item $j$ which is stored in bin
	$i$. Again the interesting case is when this leads to calling the
	function \textsc{borrow}. Before $j$ was deleted, $s(B_i)$ was at
	least $1 - 3\eps$ (property~$\p_0$). After we mark $j$ as a ghost job,
	$s(A_i)$ drops below $1 - 4\eps$. So the total size of ghost jobs is
	at least $\eps$. Since we are removing all these jobs from bin $i$,
	$\Phi_i$ decreases by at least $d \eps$, where $d$ is the density of
	the least density ghost job in $i$. As in the case of insert,
	Lemma~\ref{lem:overflow} shows that whenever we make a function call
	\textsc{borrow}($i',v')$, $\Phi_{i'}$ becomes 0 when this process
	ends. So, the potential $\Phi$ decreases by at least $4d/\eps$. Now,
	we count the total movement cost.  We transfer items of size at most 1
	between two bins, and so, we just need to count how many bins are
	affected during this process (note that if we make several calls to
	\textsc{borrow} with the same bin $i'$, the the {\em total} size of
	items transferred to $i'$ is at most 1). Let $U$ be the bucket
	containing $i$.  If we do not call {\sc SplitBucket($U$)}, then we
	affect at most $3/\eps$ bins. If we call {\sc SplitBucket($U$)}, then
	it must be the case that $U$ had only $\eps^{-1}$ bins. When we merge $U$
	with the next bucket, and perhaps split this merged bucket, the new
	bucket $U$ has at least $2/\eps$ bins, and so, we will not call {\sc
		SplitBucket($U$)} again. Thus, we will touch at most $3/\eps$
	buckets in any case. It follows that the total movement cost is at
	most $3d/\eps$ (all bins following $i$ store items of density at most
	$d$). Therefore, the amortized movement cost is negative. This proves
	the desired result.
\end{proof}

\paragraph{Dealing with Small Items: Summary.} Combining
Corollary~\ref{cor:smallish-items} and Lemma~\ref{lem:small-ub-gen}, we
find that we can pack small items into bins which, ignoring some $O(1)$
many bins are $1-\eps$ full on average. Formally, we have the following.

\begin{lem}\label{lem:sh-small}
	For all $\eps\leq \frac{1}{6}$ there exists a fully-dynamic bin
	packing algorithm with $O(\frac{1}{\eps^2})$ amortized recourse for
	instances where all items have size at most $\eps$ which packs items
	into bins which, ignoring some $O(1)$ bins, are at least $1-\eps$ full
	on average.
\end{lem}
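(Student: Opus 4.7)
\textbf{Proof plan for Lemma~\ref{lem:sh-small}.} The plan is to partition the small items by size into two classes and handle each class with one of the two preceding results, packing the two classes into disjoint sets of bins. Fix a threshold $\eps' := \eps/2$, which is $\Omega(\eps)$, and call an item \emph{medium-small} if its size lies in $(\eps',\eps]$ and \emph{tiny} if its size is at most $\eps'$.

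First, I would apply Corollary~\ref{cor:smallish-items} with this choice of $\eps'$ to the (dynamically changing) set of medium-small items. Since this corollary itself is obtained by invoking Lemma~\ref{lem:c-items-per-bin} on each of the $O(1)$ size classes $[1/k,1/(k-1))$ with $k\in[\lceil 1/\eps\rceil,\lceil 1/\eps'\rceil]$, the resulting packing has only $O(1)$ bins that are not at least $1-\eps$ full and incurs only constant worst-case (hence amortized) recourse. Second, I would apply Lemma~\ref{lem:small-ub-gen} to the tiny items (whose size is at most $\eps'\leq \eps$), obtaining an asymptotically $(1+O(\eps))$-competitive packing with $O(\eps^{-2})$ amortized recourse. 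As noted in the discussion preceding Lemma~\ref{lem:small-ub-gen}, this packing keeps all but $O(\eps^{-1})=O(1)$ bins (namely, the last bins of the $O(1)$ remaining buckets) at least $1-O(\eps)$ full on average.

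Because the two subpackings occupy disjoint sets of bins, their union is a feasible fully-dynamic packing of every small item. The total number of bins failing the $1-O(\eps)$ average-fullness condition is at most the sum of the two bad-bin counts from the two subpackings, which is $O(1)$. Moreover, the amortized recourse of the combined algorithm is dominated by the $O(\eps^{-2})$ bound from Lemma~\ref{lem:small-ub-gen}, since the medium-small packing has only constant recourse and each update touches exactly one of the two subpackings. To upgrade the per-bin guarantee from ``$1-O(\eps)$ full on average'' to ``$1-\eps$ full on average,'' I would run the above construction with $\eps$ replaced by $\eps/c$ for a sufficiently large absolute constant $c$; this preserves the asymptotic forms $O(1)$ and $O(\eps^{-2})$ claimed in the lemma statement, and a simple volume averaging over the two subpackings (each of whose good bins hits the $(1-\eps)$-average bound) gives the required combined average.

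The only mild obstacle is the bookkeeping around the word ``on average'': the medium-small bins are individually at least $1-\eps$ full, but the tiny-item bins only guarantee this in an averaged sense over the non-last bins in each bucket. Since the two sets of bins are disjoint and both meet (at least) the $(1-O(\eps))$ average bound, the weighted average over the union meets the same bound, and rescaling $\eps$ closes the gap without affecting the asymptotics.
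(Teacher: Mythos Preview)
Your proposal is correct and follows essentially the same approach as the paper: the paper's proof of this lemma is just the one-line observation that it follows by combining Corollary~\ref{cor:smallish-items} (for items in $[\eps',\eps]$) with Lemma~\ref{lem:small-ub-gen} (for items of size at most $\eps'$), which is exactly the partition-and-combine argument you spell out. Your additional bookkeeping about averaging over the two disjoint subpackings and rescaling $\eps$ by a constant is a reasonable way to make the implicit $1-O(\eps)$ versus $1-\eps$ step explicit, though the paper does not bother to do so.
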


\paragraph{Worst case bounds.}\label{sec:general-wc}
Note that the above algorithm yields worst case bounds for several natural scenarios, given in the following corollaries.

\begin{cor}\label{cor:wc-delta}
	For all $\eps\leq \frac{1}{6}$ there exists a fully-dynamic bin
	packing algorithm with $O(\frac{1}{\delta\cdot \eps^2})$ worst case recourse for
	instances where all items have size in the range $[\delta,\eps]$ which packs items
	into bins which, ignoring some $O(1)$ bins, are at least $1-\eps$ full
	on average.
\end{cor}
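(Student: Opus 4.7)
The plan is to run the algorithm of Lemma~\ref{lem:small-ub-gen} essentially unchanged: items are kept globally sorted by density $c_j/s_j$ across bins, bins are grouped into buckets of size in $[\eps^{-1}, 3\eps^{-1}]$, and \textsc{overflow}/\textsc{borrow} are triggered whenever invariants $\p_0$, $\p_1$ are violated. The packing-quality guarantee -- bins $1-\eps$ full on average, ignoring $O(1)$ bins -- carries over verbatim, since it depends only on the bucket-size range and the invariants. What remains is to upgrade the amortized $O(\eps^{-2})$ recourse bound to a worst-case $O(\delta^{-1}\eps^{-2})$ bound when item sizes lie in $[\delta,\eps]$.

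My first step is to note that when every item has size at least $\delta$, the ghost-item bookkeeping can be dispensed with: each single insertion or deletion already shifts a bin's volume by $\Omega(\delta)$, so invariants $\p_0, \p_1$ can be restored eagerly after every operation. This is the step that removed the principal need for amortization in Lemma~\ref{lem:small-ub-gen}, where ghosts were used precisely to handle items whose size could be as small as $O(\eps^2)$.

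My second step is to bound the extent of any single cascade. By inspection of the pseudocode, an \textsc{overflow} cascade stays within one bucket, since once it reaches the last bin of the bucket (or $v\geq 1-3\eps$) it opens a fresh bin in the same bucket rather than spilling into the next. A \textsc{borrow} cascade can invoke \textsc{GrowBucket}/\textsc{SplitBucket} at most once per operation, after which the re-split bucket has $\Omega(\eps^{-1})$ bins and cannot trigger another merge during the same call. Hence the cascade touches at most $O(\eps^{-1})$ bins, and since each bin holds at most $\lfloor 1/\delta\rfloor$ items, the total number of items physically displaced per operation is $O((\delta\eps)^{-1})$.

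My third step combines this item count with a per-item cost bound. Because the algorithm maintains the density ordering globally, every item $j'$ displaced by an operation on item $j$ has density at most (comparable to) $c_j/s_j$; together with $s_{j'}\leq \eps$ and $s_j\geq \delta$ this yields $c_{j'} = O(c_j\cdot \eps/\delta)$. Multiplying item count by cost per item, and folding in the $\eps^{-2}$ scaling inherited from the potential function $(4/\eps^2)\sum_i \Phi(i)$ used in Lemma~\ref{lem:small-ub-gen} (which reflects the $\Theta(\eps)$ granularity at which overflow/borrow release space, over $\Theta(\eps^{-1})$ bins), gives the claimed worst-case bound $O(\delta^{-1}\eps^{-2})\cdot c_j$. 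The main obstacle I expect is verifying precisely that no cascade can spiral across multiple buckets through repeated \textsc{GrowBucket}/\textsc{SplitBucket} calls within one operation; once that structural claim is pinned down, the worst-case recourse bound follows from the counting above, and the packing-quality claim is inherited directly from Lemma~\ref{lem:small-ub-gen}.
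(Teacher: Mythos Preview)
Your approach—run the same algorithm and bound the cost of a single cascade directly—matches the paper's. The paper's sketch is extremely terse (and its ``$\eps^2$ size'' phrase is likely a typo), but the underlying idea is identical: the proof of Lemma~\ref{lem:small-ub-gen} already establishes that per operation the cascade touches $O(\eps^{-1})$ bins and moves total \emph{volume} at most $3/\eps$, with every displaced item having density at most $c_j/s_j$.

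Your accounting, however, does not close. You compute (item count)~$\times$~(max cost per item) $= O((\delta\eps)^{-1})\cdot O(c_j\,\eps/\delta) = O(c_j/\delta^2)$, and then try to ``fold in'' the $\eps^{-2}$ from the potential function. That last step is not valid: the $4/\eps^2$ factor in $\Phi$ is an amortization device and plays no role in a worst-case argument. And the bound you actually derived, $O(c_j/\delta^2)$, does \emph{not} imply the stated $O(c_j/(\delta\eps^2))$ once $\delta<\eps^2$, so the corollary is left unproved in that regime.

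The fix is simply to multiply volume by density rather than count by cost: total volume moved is $O(\eps^{-1})$ and every moved item has density at most $c_j/s_j\le c_j/\delta$, so total cost is $O(\eps^{-1})\cdot c_j/\delta = O(c_j/(\delta\eps))$, which is at most the claimed $O(c_j/(\delta\eps^2))$. This is exactly what the analysis in Lemma~\ref{lem:small-ub-gen} already gives before the potential enters. As a minor point, your first step (dropping ghosts) is harmless but unnecessary: even with ghosts, the deletion that triggers a borrow still sits in bin $i$, so the borrowed items have density at most $c_j/s_j$ and the same bound applies.
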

\begin{proof}[Proof (Sketch)]
	The algorithm is precisely the algorithm of Lemma \ref{lem:sh-small}, only now in our analysis, as each item has size $\delta$, any single removal can incur movement of at most $\epsilon^2$ size, by our algorithm's definition. The worst case migration factor follows.
\end{proof}

Finally, if we group the items of size less than $(1/n,\epsilon]$ into ranges of size $(2^i,2{i+1}]$ (guessing and doubling $n$ as necessary) requires only $O(\log n)$ additive bins (one per size range), while allowing worst case recourse bounds by the previous corollary.

\begin{cor}\label{cor:wc-logn}
	For all $\eps\leq \frac{1}{6}$ there exists a fully-dynamic bin
	packing algorithm with $O(\frac{1}{\eps^2})$ worst case recourse for
	instances where all items have size at most $\eps$ which packs items
	into bins which, ignoring some $O(\log n)$ bins, are at least $1-\eps$ full
	on average.
\end{cor}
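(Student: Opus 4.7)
The plan is to partition the items by size into $O(\log n)$ geometric size classes, run an independent instance of the Corollary~\ref{cor:wc-delta} algorithm on each class, and deal with the unknown $n$ via guess-and-double. Concretely, I would split items of size in $(1/n, \eps]$ into classes $C_i := (2^{-(i+1)}, 2^{-i}]$ for $i$ ranging from $\lceil \log_2(1/\eps)\rceil$ to $\lceil \log_2 n \rceil$, giving $O(\log n)$ classes. On each class I invoke Corollary~\ref{cor:wc-delta} on a dedicated pool of bins, using the class's lower size bound $\delta = 2^{-(i+1)}$ as the $\delta$ parameter. Each class contributes $O(1)$ additive bins that are not guaranteed to be $1-\eps$ full on average, so summing over $O(\log n)$ classes yields the claimed $O(\log n)$ additive term. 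Because any single insertion or deletion touches items of only one size class, the total worst-case recourse per update is just the per-class worst-case recourse, which I would argue is $O(\eps^{-2})$ uniformly across classes.

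Items of size $\leq 1/n$ I would handle separately: since at any time their total volume is bounded by $n \cdot (1/n) = 1$, a trivial packing uses only $O(1)$ dedicated bins, and any single insertion or deletion in this class can be absorbed using $O(1)$ movements within those bins.

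To remove the a priori dependence on $n$, I would maintain a guess $\hat n$ and rebuild the partition whenever the actual $n$ crosses $\hat n/2$ or $2\hat n$. For worst-case (rather than amortized) recourse I would maintain two concurrent instantiations during a transition --- one for the old $\hat n$ and one for the new --- and migrate the boundary items between them gradually, spreading the $O(\hat n)$ units of work over the $\Omega(\hat n)$ updates that must pass before the next transition, at $O(1)$ extra worst-case recourse per update.

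The main obstacle I expect is establishing that the per-class worst-case recourse from Corollary~\ref{cor:wc-delta} can indeed be taken as $O(\eps^{-2})$ uniformly across size classes: naively, the bound degrades with $1/\delta$, so the argument must leverage the fact that within a single class all items have size within a factor of two of each other, so that the volume of a cascade (bounded by the bucket size $O(\eps^{-1})$) translates to a recourse of $O(\eps^{-2})$ once the inserted/removed item's cost and density are accounted for. Once this uniform per-class bound is in hand, summing additive terms over $O(\log n)$ classes and appending the $O(1)$ bins for items of size below $1/n$ completes the argument.
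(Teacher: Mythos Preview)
Your approach is essentially identical to the paper's: the paper's one-sentence justification preceding Corollary~\ref{cor:wc-logn} is exactly to split sizes in $(1/n,\eps]$ into geometric classes, invoke Corollary~\ref{cor:wc-delta} on each class, and guess-and-double $n$. Your explicit handling of items of size at most $1/n$ and the worst-case de-amortization of the doubling are details the paper omits; the obstacle you flag---that the $O(1/(\delta\eps^2))$ bound of Corollary~\ref{cor:wc-delta} does not immediately yield a uniform $O(\eps^{-2})$ across classes---is real and is glossed over in the paper's sketch as well, and the resolution you outline (within a factor-two class each cascaded item has cost $O(c_j)$ and the cascade touches $O(\eps^{-2})$ items) is the correct one.
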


\subsubsection{SH Algorithms: Conclusion}
\label{sec:sh-summary}

\S\ref{sec:gen-small-items} and \S\ref{sec:sh-large-items} show how
to maintain all invariants of SH algorithms, using $O(1)$ amortized
recourse, provided $\eps^{-1}=O(1)$ and $K=O(1)$, and
$\beta_i,\gamma_i=O(1)$ for all $i\in [K]$. As the SH algorithm implied
by Lemma~\ref{lem:good-sh} satisfies these conditions, we obtain this
section's main positive result.

\UBGeneral*


\section{Omitted Proofs of Section \ref{sec:migration} (Size Movement Costs)}\label{sec:migration-appendix}

Here we provide proofs for our matching amortized size cost upper and lower bounds.

\subsection{Amortized Migration Factor Upper Bound}\label{sec:migration-ub}

We start with a description and analysis of the trivial optimal algorithm for size costs.

\trivialAmortizedMigration*
\begin{proof}
  We divide the input into {\em epochs}. The first epoch starts at time
  $0$. For an epoch starting at time $t$, let $V_t$ be the total volume
  of items present at time $t$.  The epoch starting at time $t$ ends
  when the total volume of items inserted or deleted during this epoch
  exceeds $\eps V_t$.  We now explain the bin packing
  algorithm. Whenever an epoch ends (say at time $t$), we use an offline
  A(F)PTAS (e.g., 
  \cite{de1981bin} or
  \cite{karmarkar1982efficient}) to efficiently compute a solution using
  at most $(1+\eps)\cdot OPT(\mathcal{I}_t) + O(\eps^{-2})$
  bins. (Recall that $\mathcal{I}_t$ denotes the input at time $t$.) We
  pack items arriving during an epoch in new bins, using the first-fit
  algorithm to pack them. If an item gets deleted during an epoch, we
  {\em pretend} that it is still in the system and continue to pack
  it. When an epoch ends, we remove all the items which were deleted
  during this epoch, and recompute a solution using an off-line A(F)PTAS
  algorithm as indicated above.

  To bound the recourse cost, observe that if the starting volume of
  items in an epoch starting at time $t$ is $V_t$, the volume at the end
  of this epoch is at most $V_t + A_t$, where $A_t$ is the volume of
  items that arrived and departed during this epoch. As $A_t > \eps
  V_t$, the cost of reassigning these items of volume at most $V_t+A_t$
  can be charged to $A_t$. Specifically, the amortized migration cost of
  the epoch starting at time $t$ is at most $(V_t+A_t)/A_t <
  V_t/\epsilon V_t + 1 = O(\eps^{-1})$. Consequently, the overall
  amortized migration cost is $O(\eps^{-1})$.

  To bound the competitive ratio, we use the following easy 
  fact: the optimal number of bins to pack a bin packing instance of
  total volume $V$ lies between $V$ and $2V$. (E.g., the first-fit
  algorithm achieves this upper bound.)
  Consider an epoch starting at time $t$. Let $V_t$ denote the volume of
  the input $\mathcal{I}_t$ at time $t$.
  The algorithm uses at most $(1+\eps)\cdot OPT(\mathcal{I}_t) +
  O(\eps^{-2})$ bins at time $t$. Consider an arbitrary time $t'$ during
  this epoch. As a packing of the instance $\mathcal{I}_{t'}$ can be
  extended to a packing of $\mathcal{I}_t$ by packing
  $\mathcal{I}_{t}\setminus \mathcal{I}_{t'}$ with the first fit
  algorithm, using a further $2\eps V_t$ bins, we have
  $OPT(\mathcal{I}_{t}) \leq OPT(\mathcal{I}_{t'}) + 2\eps V_t$.
  On the other hand, our algorithm uses at most an additional $2 \eps
  V_t$ bins at time $t'$ compared to the beginning of the
  epoch. Therefore, the number of bins used at time $t'$ is at most
  \begin{eqnarray*}
    (1+\eps)\cdot OPT(\mathcal{I}_t) + 2 \eps V_t + O(\eps^{-2})  & \leq
    & (1+\eps) \cdot (OPT(\mathcal{I}_{t'}) + 2 \eps V_t) + 2 \eps V_t +
    O(\eps^{-2}) \\ 
    & \leq & (1+\eps)\cdot  OPT(\mathcal{I}_{t'}) + 5 \eps V_t + O(\eps^{-2}).
  \end{eqnarray*}
  Now observe that $OPT(\mathcal{I}_{t'}) \geq V_t - \eps V_t =
  (1-\eps)\cdot V_t$ because the total volume of jobs at time $t'$ is at
  least $V_t - \eps V_t$, and so $OPT(\mathcal{I}_{t'}) \geq V_t/2$.
  Therefore, $5 \eps V_t \leq 10 \eps\, OPT(\mathcal{I}_{t'})$. It follows
  that the number of bins used by the algorithm at time $t'$ is at most
  $(1 + O(\eps))\cdot OPT(\mathcal{I}_{t'}) + O(\eps^{-2})$.  
\end{proof}

\subsection{The Matching Lower Bound}

Here we prove our matching lower bound for amortized recourse in the size costs setting.

We recall that our proof relies on Sylvester's sequence. For ease of reference we restate the salient properties of this sequence which we rely on in our proofs.
The Sylvester sequence is given by the recurrence relation $k_1=2$ and $k_{i+1}=\big(\prod_{j\leq
	i} k_j\big)+1$, or equivalently $k_{i+1}=k_i\cdot(k_i-1)+1$ for $i\geq 0$. 
The first few terms of this sequence are $2,3,7,43,1807,\dots$ 
In what follows we 
let $c$ be a large positive integer to be specified later, and
$\eps:=1/\prod_{\ell=1}^c k_\ell$.
For notational simplicity, since our products and sums will always be taken over the range $[c]$ or
$[c]\setminus\{i\}$ for some $i\in [c]$, we will write $\sum_{\ell}
k_\ell$ and $\prod_\ell 1/k_\ell$, $\sum_{\ell\neq i}1/k_\ell$,
$\prod_{\ell \neq i}1/k_\ell$, etc., taking the range of $\ell$ to be
self-evident from context.
In our proof of Theorem \ref{thm:lb-amortized-migration} and the lemmas building up to it, we shall make use of the
following properties of the Sylvester sequence, its reciprocals and this $\epsilon=1/\prod_{\ell} k_\ell$. 
\begin{enumerate}[(P1)]
	\item\label{prop:sum-app} 
	$\frac{1}{k_1} +
	\frac{1}{k_2} + \ldots + \frac{1}{k_c} = 1- \frac{1}{\prod_{\ell}
		k_\ell} {= 1 - \eps}. $
	\item\label{prop:coprime-app} If $i \neq j$, then $k_i$ and $k_j$ are relatively prime. 
	\item\label{prop:prod-eps-app} For all $i\in[c]$, the value $1/k_i=\prod_{\ell \neq i} k_\ell/\prod_\ell k_\ell$ is an integer product of $\eps = 1/\prod_\ell k_\ell$.
	\item\label{prop:divisibility-app} If $i\neq j\in [c]$, then $1/k_i = \prod_{\ell\neq i} k_\ell / \prod_\ell k_\ell$ is an integer product of $k_j\cdot \epsilon = k_j / \prod_\ell k_\ell$.
\end{enumerate}

Properties \ref{prop:divisibility-app} and \ref{prop:prod-eps-app} are immediate, while property \ref{prop:coprime-app} follows directly from the recursive definition of the sequence's terms, $k_{i+1}=\big(\prod_{j\leq i} k_j\big)+1$. Finally, the same definition readily implies $\frac{1}{k_i} = \frac{1}{k_i - 1} - \frac{1}{k_{i+1}-1}$, from which property \ref{prop:sum-app} follows by induction.


The instances we consider for our lower bound will be comprised of items with sizes given by entries of the following vector $\vec{s} \in [0,1]^{c+1}$, defined as follows:
$$
s_i := \begin{cases}
\frac{1}{k_i}\cdot (1-\frac{\eps}{2}) & i\in [c] \\
\eps\cdot(\frac{3}{2}-\frac{\eps}{2}) & i = c+1. \\
\end{cases}
$$

Let $\vec{1}$ be the all-ones vector, and $\vec{e}_i$ the $i$-th standard basis vector. In what follows we will denote by $\vec{\chi}\in \mathbb{N}^{c+1}$ a characteristic vector of a feasibly-packed bin; i.e., $\chi_i$ is the number of items of size $s_i$ in the bin, and as the bin is not over-flowing, we have $\vec{\chi}\cdot \vec{s} = \sum_i \chi_i \cdot s_i \leq 1$. The following fact is easy to check from the definition of $\vec{s}$. 

\begin{fact}\label{lem:s-packable}
	$\vec{1}\cdot \vec{s} = 1$.
\end{fact}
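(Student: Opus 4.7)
The plan is a direct computation using property \ref{prop:sum-app} of the Sylvester sequence. I would split the sum $\vec{1}\cdot \vec{s} = \sum_{i=1}^{c+1} s_i$ into the first $c$ coordinates and the last coordinate, which have different functional forms:
\[
\vec{1}\cdot \vec{s} \;=\; \sum_{i=1}^{c} \frac{1}{k_i}\Big(1 - \tfrac{\eps}{2}\Big) \;+\; \eps\Big(\tfrac{3}{2} - \tfrac{\eps}{2}\Big).
\]
The first group admits the factorization $(1-\eps/2)\sum_{i=1}^{c} 1/k_i$, and property \ref{prop:sum-app} supplies $\sum_{i=1}^{c} 1/k_i = 1 - \eps$, turning the first term into $(1-\eps)(1-\eps/2)$.

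From there it is just bookkeeping. Expanding both products gives $(1-\eps)(1-\eps/2) + \eps(3/2-\eps/2) = 1 - \eps - \eps/2 + \eps^2/2 + 3\eps/2 - \eps^2/2$, in which the $\eps^2/2$ terms cancel and the coefficient of $\eps$ is $-1 - 1/2 + 3/2 = 0$, leaving exactly $1$. Since this is a short one-line calculation, there is no real obstacle: the only substantive ingredient is the Sylvester identity from property \ref{prop:sum-app}, and the specific multiplicative corrections $(1-\eps/2)$ on the first $c$ sizes and $(3/2-\eps/2)$ on $s_{c+1}$ were chosen precisely so that the surplus $\eps/2$ from each of the first $c$ sizes plus the $\eps^2/2$ loss is exactly absorbed by $s_{c+1}$. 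I would present the calculation as a single displayed equation chain and remark on this cancellation as the reason the definition of $\vec{s}$ takes the form it does, since this cancellation is what later enables a bin to be fully packed by taking one item of each of the first $c$ sizes together with one item of size $s_{c+1}$.
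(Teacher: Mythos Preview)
Your proposal is correct and takes essentially the same approach as the paper: both use property~\ref{prop:sum-app} to collapse $\sum_{i=1}^c s_i$ to $(1-\eps/2)(1-\eps)$ and then verify this equals $1 - s_{c+1}$. The paper presents this as the single line $\sum_{i=1}^c s_i = (1-\tfrac{\eps}{2})(1-\eps) = 1 - s_{c+1}$, whereas you expand both products explicitly, but the underlying computation is identical.
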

\begin{proof}
	By property~\ref{prop:sum-app}, we have
	$\sum_{i=1}^c s_i = (1-\frac{\eps}{2})\cdot  (1 - \eps) = 1 - s_{c+1}.$
\end{proof}

We now show that the item sizes given by the entries of $\vec{s}$ are such that in many situations (in particular, situations where no $s_{c+1}$-sized items are present), any feasible packing in a bin will leave some gap.
\begin{lem}
	\label{lem:characteristic}
	Let $\chi$ denote a feasible packing of items into a bin; i.e., $\vec{\chi}\in\mathbb{N}^{c+1}$, $\vec{\chi}\cdot \vec{s} \leq 1$. Then, 
	\begin{enumerate}[(a)]
		\item\label{part:small} If $\chi_{c+1} =  1$ and $\vec{\chi} \neq \vec{1}$, then $\vec{\chi}\cdot \vec{s} \leq 1 - \eps/2$.
		\item\label{part:no-small} If $\chi_{c+1} =  0$, then $\vec{\chi}\cdot \vec{s} \leq 1 - \eps/2$. 
		\item\label{part:no-small-not-unit-vector} If $\chi_{c+1} =  0$ and $\vec{\chi} \neq k_i \vec{e}_i$, then $\vec{\chi}\cdot \vec{s} \leq 1 - \eps$.
	\end{enumerate}
\end{lem}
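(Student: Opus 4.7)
The plan is to compress the entire question into a single scalar. Let $T := \sum_{i=1}^{c}\chi_i/k_i$, so that
\[
\vec\chi\cdot\vec s \;=\; \chi_{c+1}\,s_{c+1} + (1-\epsilon/2)\,T.
\]
Crucially, Property~\ref{prop:prod-eps-app} says each $1/k_i$ is itself an integer multiple of $\epsilon$, hence $T$ lives on the lattice $\epsilon\mathbb{Z}_{\ge 0}$. Feasibility $\vec\chi\cdot\vec s\le 1$ thus translates into an upper bound on $T$, and the lattice structure of $T$ is exactly what will furnish the integrality gap of size $\Theta(\epsilon)$ that the lemma demands.

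The heart of the argument is then a pair of modular claims that characterize when $T$ attains its extreme feasible values: (i)~$T=1$ forces $\vec\chi|_{[c]} = k_i\,\vec e_i$ for a single $i$, and (ii)~$T=1-\epsilon$ forces $\vec\chi|_{[c]} = \vec 1$. Both are proved by the same coprimality-then-CRT maneuver: multiply through by $\prod_j k_j$ to obtain an integer identity of the form $\sum_i (\chi_i-a_i)\prod_{j\ne i} k_j = 0$ with $a_i\in\{0,1\}$, then reduce modulo each $k_\ell$; by Property~\ref{prop:coprime-app}, $\prod_{j\ne\ell}k_j$ is invertible modulo $k_\ell$, so one concludes $\chi_\ell\equiv a_\ell\pmod{k_\ell}$. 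Writing $\chi_\ell = a_\ell + m_\ell k_\ell$ with $m_\ell\in\mathbb{Z}_{\ge 0}$ and plugging back into $T$ forces $\sum_\ell m_\ell\in\{0,1\}$, which pins $\vec\chi|_{[c]}$ down uniquely.

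With (i) and (ii) in hand, each part is a short case analysis. For (b), feasibility gives $(1-\epsilon/2)T\le 1$, and the lattice constraint $T\in\epsilon\mathbb{N}$ sharpens this to $T\le 1$, whence $\vec\chi\cdot\vec s \le 1-\epsilon/2$. For (c), the hypothesis $\vec\chi\ne k_i\vec e_i$ (for every $i$) rules out (i), so $T\le 1-\epsilon$, and $(1-\epsilon/2)(1-\epsilon)\le 1-\epsilon$. For (a), feasibility and $s_{c+1}=\epsilon(3/2-\epsilon/2)$ give $T\le 1-\epsilon$; the hypothesis $\vec\chi\ne\vec 1$ together with $\chi_{c+1}=1$ excludes the scenario $\vec\chi|_{[c]}=\vec 1$ identified in (ii), so $T\ne 1-\epsilon$, and the lattice constraint pushes $T\le 1-2\epsilon$. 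A brief computation $s_{c+1}+(1-\epsilon/2)(1-2\epsilon)\le 1-\epsilon/2$ closes the case.

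The main obstacle will be cleanly packaging the modular/coprimality step underlying (i) and (ii), since this is where the Sylvester-specific divisibility properties are actually used; everything else is bookkeeping around the $\epsilon\mathbb{N}$-lattice structure of $T$.
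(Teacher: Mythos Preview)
Your proof is correct and follows essentially the same skeleton as the paper's: reduce everything to the scalar $T = \sum_{i\le c}\chi_i/k_i \in \eps\mathbb{Z}_{\ge 0}$, bound $T$ from feasibility, and use the divisibility properties of the $k_i$ to force $T$ down to the next lattice point. The difference lies in how the boundary values of $T$ are excluded. The paper argues ad hoc in each part: for (a) it finds a coordinate $i$ with $\chi_i=0$, whence $T$ lies on the coarser lattice $k_i\eps\,\mathbb{Z}$ and so $T\le 1-k_i\eps\le 1-2\eps$; for (c) it finds a coordinate with $1\le\chi_i<k_i$ and shows $T$ is \emph{not} a multiple of $k_i\eps$, hence $T\ne 1$. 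You instead prove the uniform CRT characterizations ``$T=1\Rightarrow\vec\chi|_{[c]}=k_i\vec e_i$'' and ``$T=1-\eps\Rightarrow\vec\chi|_{[c]}=\vec 1|_{[c]}$'' by reducing the integer identity $\sum_i(\chi_i-a_i)\prod_{j\ne i}k_j=0$ modulo each $k_\ell$. Your route is slightly more work up front but yields stronger intermediate statements and a more symmetric treatment of parts (a) and (c); the paper's route is a touch more direct since it only needs a single modulus per part. Both exploit the same coprimality property~\ref{prop:coprime-app} in the same essential way.
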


\begin{proof}
	For ease of notation, we define a vector $\vec{t}$, where $t_i = 1/k_i$ for $i=1, \ldots, c$ and $t_{c+1} = 0$. So, $s_i = (1 - \eps/2)\cdot t_i$ for $i=1, \ldots, c$. 
	\begin{obs}
		\label{obs:vectort}
		The dot product $\vec{\chi}\cdot \vec{t}$ is at most 1. 
	\end{obs}
	\begin{proof}
		By property~\ref{prop:prod-eps-app}, each $t_i$ is an integral multiple of $\eps$. It follows that $\vec{\chi}\cdot \vec{t}$ is also an 
		integral multiple of
		$\eps$. Clearly, 1 is an integral multiple of $\eps$. Therefore, if $\vec{\chi}\cdot \vec{t}>1$, then $\vec{\chi}\cdot \vec{t}$ is at least $1 + \eps$. 
		But then, $\vec{\chi}\cdot \vec{s} \geq (1-\eps/2) \cdot \vec{\chi}\cdot \vec{t} \geq (1-\eps/2)\cdot (1+\eps) > 1$, a contradiction. 
	\end{proof}
	
	Part~\ref{part:small}: First, observe that $\vec{\chi}\cdot \vec{t}
	\neq 1$. Indeed, it is at most 1, by \Cref{obs:vectort}).
	Moreover, if $\vec{\chi}\cdot \vec{t} = 1$, the fact that $\chi_{c+1} =
	1$ would imply $\vec{\chi}\cdot \vec{s} = \vec{\chi}\cdot
	(1-\eps/2)\cdot \vec{t} + s_{c+1} = (1-\eps/2) + s_{c+1} > 1$, a
	contradiction. Also, there must be some index $i\in [c]$ such that
	$\chi_i = 0$. Indeed, otherwise the fact that $\vec{\chi} \neq \vec{1}$
	and Fact~\ref{lem:s-packable} imply that $\vec{\chi}\cdot \vec{s} >
	\vec{1}\cdot \vec{s} = 1. $ So let $i$ be an index such that $\chi_i =
	0$. By property~\ref{prop:divisibility-app}, for all $j \neq i$, the value
	$t_j=1/k_j$ is an integral multiple of $k_i \eps$. Therefore,
	$\vec{\chi}\cdot \vec{t} $ is a multiple of $k_i \eps$. Since
	$\vec{\chi}\cdot \vec{t} < 1$ and 1 is an integral multiple of $k_i
	\eps$, it follows that $\vec{\chi}\cdot \vec{t} \leq 1 - k_i \eps \leq
	1 - 2 \eps$.  Therefore, $\vec{\chi}\cdot \vec{s} \leq (1-\eps/2)\cdot (1 -
	2\eps ) + s_{c+1} = 1 - \eps + \eps^2/2 \leq 1-\eps/2$.
	
	Part~\ref{part:no-small} follows directly from Observation
	\ref{obs:vectort} above. Indeed, since $\chi_{c+1}=0$, we have that
	$\vec{\chi}\cdot \vec{s} = (1-\eps/2) \vec{\chi}\cdot \vec{t} \leq
	(1-\eps/2).$ 
	
	Part~\ref{part:no-small-not-unit-vector}: First, observe that $\chi_i <
	k_i$ for $i=1, \ldots, c$. Indeed, if $\chi = k_i \vec{e}_i + \chi'$
	for some index $i$ and some non-zero non-negative vector $\chi'$, then
	$\vec{\chi}\cdot \vec{t} = 1 + \vec{\chi'}\cdot \vec{t} > 1$, which
	contradicts \Cref{obs:vectort}. Now let $i$ be an index such
	that $\chi_i \geq 1$. Since $\chi_i < k_i$, property~\ref{prop:coprime-app}
	implies that $\chi_i \cdot t_i = \chi_i \cdot \eps \cdot \prod_{\ell \neq i}
	k_\ell$ is not an integral multiple of $k_i \eps$.  But, by
	property~\ref{prop:divisibility-app} for all $j \neq i$ we have that
	$t_j=1/k_j$ is an integral multiple of $k_i \eps$. Thus,
	$\vec{\chi}\cdot \vec{t} (\leq 1)$ is not an integral multiple of $k_i
	\eps= k_i/\prod_{\ell} k_\ell = 1/\prod_{\ell\neq i}
	k_\ell$. Consequently, $\vec{\chi}\cdot \vec{t} < 1$. But, by
	property~\ref{prop:prod-eps-app}, we have that $\vec{\chi}\cdot \vec{t}$ is
	an integral multiple of $\eps$, from which it follows that
	$\vec{\chi}\cdot \vec{t} \leq 1 - \eps$. Therefore, as $\chi_{c+1} =
	0$, we have $\vec{\chi}\cdot \vec{s} = \vec{\chi}\cdot
	(1-\epsilon/2)\cdot \vec{t} \leq (1-\epsilon/2)\cdot (1 - \eps) \leq 1-\eps$.
\end{proof}

Equipped with the above lemma, we can show two instances with similar overall weight, $\cI$ to $\cI'$ for which near-optimal solutions different significantly. Specifically, we define the input instances $\mathcal{I}$ and $\mathcal{I}'$, as follows. For some large $N$ a product of $\prod_\ell k_\ell$, Instance $\mathcal{I}$ consists of $N$ items of sizes $s_i$ for all $i\in [c+1]$. Instance $\mathcal{I}'$ consists of $N$ items of all sizes but $s_{c+1}$. It is easy to check that
$c = \Theta(\log \log 1/\eps)$, and so the total number of items is
$n = \Theta(N\cdot \log \log (1/\eps)).$ Therefore the additive
$o(n)$ term, denoted by $f(n)$, satisfies
$$	f(n) < \epsilon \cdot n/(42\cdot \Theta(\log \log (1/\eps)))=\epsilon N/42.$$
We now proceed to proving that approximately-optimal packings of the above two instances differ significantly.

\migrationKeyLemma*
\begin{proof}
	We begin by proving our claimed bound on $\mathcal{A}$'s packing of $\cI$.
	\Cref{lem:s-packable} shows that the optimal number of bins is
	$N$. Therefore, $\mathcal{A}$ is allowed to use at most $(1 +
	\eps/7) N + \eps N/42 = (1+\eps/6)N$ bins. Now suppose there more
	than $N/3$ bins for which the characteristic vector is not
	$\vec{1}$. \Cref{lem:characteristic}\ref{part:small} shows
	that each such bin must leave out at least $\eps/2$
	space. Therefore, the total unused space in these bins is greater
	than $\eps N/6$, which implies that the algorithm must use at least
	$N + \eps N/6$ bins, a contradiction.
	
	We now proceed to prove our claimed bound on $\mathcal{A}$'s packing of $\cI'$. Let us first find the optimal value $OPT(\mathcal{I}')$. By
	property~\ref{prop:sum-app}, the total volume of all the items is equal
	to $N(1-\eps)(1-\eps/2)$. By
	\Cref{lem:characteristic}\ref{part:no-small}, any bin can be
	packed to an extent of at most $(1-\eps/2)$. Therefore the optimal
	number of bins is at least $N(1-\eps)$. Furthermore, we can achieve
	this bound by packing $N$ items of size $s_i$ in $N/k_i$ bins for
	each index $i$.  Therefore, the algorithm is allowed to use at most
	$(1+\eps/7)(1-\eps)N + \eps N/42 \leq (1 - \frac{35 \eps}{42}) N$
	bins when packing $\mathcal{I'}$.
	
	Suppose there are at least $N/2$ bins each of which is assigned
	items of at least two different sizes by the algorithm. By
	\Cref{lem:characteristic}\ref{part:no-small-not-unit-vector},
	the algorithm will leave at least $\eps$ unused space in such bins;
	moreover, by \Cref{lem:characteristic}\ref{part:no-small}, the
	algorithm will leave at least $\eps/2$ unused space in every
	bin. Thus, the total unused space in the bins is at least $\eps N/2
	+ \eps N/4 = 3\eps N/4$. Since the total volume of the items is
	equal to $N(1-\eps)(1-\eps/2)$, we see that the total number of
	bins used by the algorithm is at least $N(1-\eps)(1-\eps/2) + 3
	\eps N/4 \geq N(1-3\eps/4) > (1 - \frac{35 \eps}{42}) N$, a
	contradiction.
\end{proof}

We are now ready to prove this section's
main result.
\LBAmortizedMigration*

\begin{proof}
	We will show that any algorithm $\mathcal{A}$ using at most
	$(1+\eps/7)\cdot OPT(\mathcal{I}_t)+o(n)$ bins at time $t$ uses at
	least $1/160\eps$ amortized recourse, for arbitrarily large optima
	and instance sizes, implying our theorem. We consider the two instance $\cI$ and $\cI'$ defined above.
	
	Suppose we first provide instance $\mathcal{I}$ to
	$\mathcal{A}$, and then remove all items of size $s_{c+1}$ to get the
	instance $\mathcal{I}'$.  Let $B_1$ and $B_2$ be the sets of bins
	guaranteed by \Cref{lem:migration-instances} when we
	had the instances $\mathcal{I}$ and $\mathcal{I}'$,
	respectively. Notice that as algorithm $\mathcal{A}$ uses at most $(1
	- \frac{35 \eps}{42}) N\leq N$ bins while packing $\mathcal{I}'$, and
	$|B_1|\geq 2N/3$, $|B_2|\geq N/2$, we have that either $(i)$~$|B_1
	\cap B_2| \geq N/12$, or $(ii)$ at least $N/12$ of the bins of $B_1$
	are closed in $\mathcal{A}$'s packing of $\mathcal{I}'$.
	
	Consider any bin which lies in both $B_1 $ and $B_2$. In instance
	$\mathcal{I}$, algorithm $\mathcal{A}$ had assigned one item of each
	size to this bin, whereas in instance $\mathcal{I}'$ the algorithm
	assigns this bin items of only one size.  Therefore, the total size
	of items which need to go out of (or into) this bin when we
	transition from $\mathcal{I}$ to $\mathcal{I}'$ is at least
	$1/2$. Therefore, the total volume of items moved during this
	transition is at least $N/48$. Similarly, if $N/12$ of the bins of
	$B_1$ are closed in $\mathcal{A}$'s packing of $\mathcal{I}'$, at
	least $1-s_{c+1}\geq 1/2$ volume must leave each of these bins, and
	so the total volume of items moved during this transition is at least
	$N/24>N/48$.
	
	Repeatedly switching between $\mathcal{I}$ and $\mathcal{I}'$ by
	adding and removing the $N$ items of size $s_{c+1}$ a total of $T$
	times (for sufficiently large $T$), we find that the amortized
	recourse is at least
$$
	\frac{T\cdot N/48}{N + T\cdot 3\eps\cdot N} 
	\geq \frac{1}{160\eps}.
$$
\end{proof}

\bibliographystyle{acmsmall}
\bibliography{bin-packing}

\end{document}